%
%
\documentclass[reqno,12pt]{amsart}




\usepackage{color} 
\usepackage{ifpdf}
\ifpdf
    \usepackage[pdftex]{graphicx}
    \usepackage[pdftex]{hyperref}
    \hypersetup{
        unicode=false,          
        pdftoolbar=true,        
        pdfmenubar=true,        
        pdffitwindow=false,     
        pdfstartview={FitH},    
        pdftitle={MCP Article},      
        pdfauthor={Michael Holst},   
        pdfsubject={Mathematics},    
        pdfcreator={Michael Holst},  
        pdfproducer={Michael Holst}, 
        pdfkeywords={PDE, analysis, mathematical physics}, 
        pdfnewwindow=true,      
        colorlinks=true,        
        linkcolor=red,          
        citecolor=blue,         
        filecolor=magenta,      
        urlcolor=cyan           
    }

    \typeout{====== Invoked by pdflatex ======================}
\else
    \usepackage{graphicx}
    \usepackage{pstricks}
    
    \newcommand{\href}[2]{#2}
    \typeout{====== Invoked by latex ======================}
\fi

\usepackage{times}
\usepackage{amsfonts}

\usepackage{amsmath}
\usepackage{amsthm}
\usepackage{amssymb}
\usepackage{amsbsy}
\usepackage{amscd}

\usepackage{enumerate}
\usepackage{verbatim}
\usepackage{subfigure}




\newtheorem{theorem}{Theorem}[section]
\newtheorem{corollary}[theorem]{Corollary}
\newtheorem{lemma}[theorem]{Lemma}

\newtheorem{definition}[theorem]{Definition}

\newtheorem{remark}[theorem]{Remark}

\numberwithin{equation}{section}  





  \newcounter{mnote}
  \setcounter{mnote}{0}
  
  \let\oldmarginpar\marginpar
    \renewcommand\marginpar[1]{\-\oldmarginpar[\raggedleft\footnotesize #1]%
    {\raggedright\footnotesize #1}}



\definecolor{myblue}{rgb}{0.2,0.2,0.7}
\definecolor{mygreen}{rgb}{0,0.6,0}
\definecolor{mycyan}{rgb}{0,0.6,0.6}
\definecolor{myred}{rgb}{0.9,0.2,0.2}
\definecolor{mymagenta}{rgb}{0.9,0.2,0.9}
\definecolor{mywhite}{rgb}{1.0,1.0,1.0}
\definecolor{myblack}{rgb}{0.0,0.0,0.0}

\def\mathbi#1{\textbf{\em #1}}
\newcommand{\beq}{\begin{equation}}
\newcommand{\eeq}{\end{equation}}
\newcommand{\beqa}{\begin{eqnarray}}
\newcommand{\eeqa}{\end{eqnarray}}
%
%


%

\newcommand{\IF}{I\!\!F}         

\newcommand{\IL}{I\!\!L}

%




\newcommand{\Ri}{{}^{\mbox{\tiny \rm 3}}\!R} 
\newcommand{\hRi}{{}^{\mbox{\tiny \rm 3}}\!\hat{R}} 

\newcommand{\Proof}{\noindent{\bf Proof.~}}

%


%
\newcommand{\leqs}{\leqslant}      
     
\newcommand{\geqs}{\geqslant}      
      
%

\newcommand{\Tr}{{\gamma}}


%

%

\newcommand{\tiD}{\mbox{{\tiny $D$}}}
\newcommand{\tiE}{\mbox{{\tiny $E$}}}

\newcommand{\tiI}{\mbox{{\tiny $I$}}}

\newcommand{\tiL}{\mbox{{\tiny $L$}}}

\newcommand{\tiN}{\mbox{{\tiny $N$}}}

\newcommand{\tiR}{\mbox{{\tiny $R$}}}

\newcommand{\tiIL}{\mbox{{\tiny $\IL$}}}

\newcommand{\tiwedge}{\mbox{{\tiny $\wedge$}}}
\newcommand{\tivee}{\mbox{{\tiny $\vee$}}}

%


%




\newcommand{\R}{{\mathbb R}}       


\newcommand{\cL}{{\mathcal L}}
\newcommand{\cM}{{\mathcal M}}

\newcommand{\cO}{{\mathcal O}}

\newcommand{\cY}{{\mathcal Y}}



%


\newcommand{\ttK}{{\tt K}}

\newcommand{\ttk}{{\tt k}}


\newcommand{\bV}{{\bf V}}
\newcommand{\bW}{{\bf W}}
\newcommand{\bX}{{\bf X}}
\newcommand{\ba}{{\bf a}}
\newcommand{\bb}{{\bf b}}

\newcommand{\bj}{{\bf j}}

\newcommand{\bw}{{\bf w}}

\def\ee{\epsilon}



\def\mathbi#1{\textbf{\em #1}}

\newcommand{\biC}{\mathbi{C}}

\newcommand{\biL}{\mathbi{L}}

\newcommand{\biW}{\mathbi{W\,}}

\newcommand{\bib}{\mathbi{b \!\!}}

\newcommand{\bif}{\mathbi{f\,}}

\newcommand{\bih}{\mathbi{h}}

\newcommand{\bij}{\mathbi{j\,}}

\newcommand{\biu}{\mathbi{u}}
\newcommand{\biv}{\mathbi{v}}
\newcommand{\biw}{\mathbi{w}}


\newcommand{\tbW}{\textbf{W\,}}

\newcommand{\tbb}{\textbf{b}}

\newcommand{\tbu}{\textbf{u}}

\newcommand{\tbw}{\textbf{w}}


\newcommand{\hh}{\hat h}

\newcommand{\hj}{\hat \jmath}
\newcommand{\hk}{\hat k}
\newcommand{\hl}{\hat l}

\newcommand{\hD}{\hat D}

\newcommand{\hrho}{\hat \rho}

\newcommand{\htau}{\hat \tau}









\newcommand{\e}{\epsilon}
\setlength{\topmargin}{-0.5in}
\setlength{\textheight}{9.5in}
\setlength{\textwidth}{5.85in}
\setlength{\oddsidemargin}{0.325in}
\setlength{\evensidemargin}{0.325in}
\setlength{\marginparwidth}{1.0in}

\setcounter{tocdepth}{2}


\begin{document}

\title[ Non-CMC Solutions on Compact Manifolds with Boundary ]{Non-CMC Solutions of the Einstein Constraint Equations on Compact Manifolds with Apparent Horizon Boundaries  }

\author[M. Holst]{Michael Holst}
\email{mholst@math.ucsd.edu}

\author[C. Meier]{Caleb Meier}
\email{c1meier@math.ucsd.edu}

\author[G. Tsogtgerel]{G. Tsogtgerel}
\email{gantumur@math.mcgill.ca}

\address{Department of Mathematics\\
         University of California San Diego\\ 
         La Jolla CA 92093}

\thanks{MH was supported in part by 
        NSF Awards~1065972, 1217175, and 1262982.}
\thanks{CM was supported by NSF Award~1065972.}
\thanks{GT was supported by an NSERC Discovery Grant 
        and by an FQRNT Nouveaux Chercheurs Grant.}

\date{\today}

\keywords{Einstein constraint equations, weak solutions, 
non-constant mean curvature, conformal method, manifolds with boundary}

\begin{abstract}
In this article we continue our effort to do a systematic development of 
the solution theory for conformal formulations of the Einstein constraint 
equations on compact manifolds with boundary.
By building in a natural way on our recent work in
Holst and Tsogtgerel (2013), and Holst, Nagy, and Tsogtgerel (2008, 2009),
and also on the work of Maxwell (2004, 2005, 2009) and Dain (2004),
under reasonable assumptions on the data we prove existence of both 
near- and far-from-constant mean curvature solutions for a class of 
Robin boundary conditions commonly used in the literature for modeling 
black holes, with a third existence result for constant mean curvature 
(CMC) appearing as a special case.
Dain and Maxwell addressed initial data engineering for 
space-times that evolve to contain black holes, determining solutions to 
the conformal formulation on an asymptotically Euclidean manifold in the
CMC setting, with interior boundary conditions 
representing excised interior black hole regions.
Holst and Tsogtgerel compiled the interior boundary results covered by
Dain and Maxwell, and then developed general interior conditions to model 
the apparent horizon boundary conditions of Dain and Maxwell for compact 
manifolds with boundary, and subsequently proved existence of solutions 
to the Lichnerowicz equation
on compact manifolds with such boundary conditions.
This paper picks up where Holst and Tsogtgerel left off, 
addressing the general non-CMC case for compact manifolds with boundary.
As in our previous articles,
our focus here is again on low regularity data and on the interaction 
between different types of boundary conditions.
While our work here serves primarily to extend the solution theory for the 
compact with boundary case, we also develop several technical tools that 
have potential for use with the asymptotically Euclidean case.
\end{abstract}

\maketitle


\vspace*{-1.2cm}
{\tiny
\tableofcontents
}

\section{Introduction}
This article represents the second installment in a systematic development 
of the solution theory for conformal formulations of the Einstein constraint 
equations on compact manifolds with boundary.
Our development began in~\cite{HoTs10a} by leveraging the technical tools 
we had developed in~\cite{HNT07b} for both the CMC (constant mean curvature) 
and non-CMC cases in the simpler setting of closed manifolds.
The case of compact manifolds with boundary, while more complicated than
the closed case, and often viewed as simply an approximation to the more
physically realistic asymptotically Euclidean case, is itself an important 
problem in general relativity; it is particularly important in numerical 
relativity, where it arises in models of Cauchy surfaces containing 
asymptotically flat ends and/or trapped surfaces.
Moreover, various technical obstacles that arise when extending the solution 
theory for closed manifolds developed in~\cite{HNT07b,dM09} to the case of 
asymptotically Euclidean manifolds have analogues 
in the compact with boundary case.

Our results here build on the non-CMC analysis framework from~\cite{HNT07b},
and leverage a number technical tools developed in~\cite{HoTs10a} for the
Lichnerowicz equation on compact manifolds with boundary.
The framework in~\cite{HNT07b} is particularly effective for producing 
existence results for the non-CMC case without using the so-called
near-CMC assumptions primarily because it isolates any assumptions
about the strength of the nonlinear coupling between the two equations
to the global barrier construction.
The overall Schauder-type fixed-point argument in~\cite{HNT07b,dM09}
is based entirely on topological properties of the 
fixed-point map generated by the constraints, and on the properties of
the spaces on which the map operates.
Nearly all of the required properties can be established without 
resorting to any type of near-CMC condition that restricts the strength 
of the nonlinear coupling between the two constraints.
This allows one to focus entirely on the problem of constructing global 
barriers free of the near-CMC condition
(cf.~\cite{HNT07a,dM09} for the first such constructions).
It is useful to
note that the ``near-CMC'' assumption allows for the mean curvature to be 
non-constant, but requires that the mean curvature not vanish and be bounded 
by some multiple of its gradient, while the ``far-from-CMC'' assumption simply 
means that the mean curvature is free of the near-CMC hypothesis.

We began a systematic study of the case of compact manifolds with boundary 
immediately after the work on the closed case in~\cite{HNT07b},
which developed into~\cite{HoTs10a}.
The complexity of treating the boundary conditions carefully in~\cite{HoTs10a}
led us to focus that work on the Lichnerowicz equation alone, restricting 
the analysis of the boundary difficulties to that equation in isolation from
the momentum constraint, as it is the main source of nonlinearity in the 
coupled constraint system.
In~\cite{SD04} and~\cite{DM05a}, Dain and Maxwell had addressed
initial data engineering for Einstein's equations that would evolve into 
space-times containing black holes.
They determined solutions to the conformal formulation
on asymptotically Euclidean manifolds with interior boundary 
conditions in the CMC case.  
The interior boundary results from~\cite{SD04,DM05a}
were compiled in~\cite{HoTs10a}, and then general interior conditions 
were developed in order to model the apparent horizon boundary conditions 
of Dain and Maxwell for compact manifolds with boundary.
In~\cite{HoTs10a} we then proved existence of solutions to the 
Lichnerowicz equation on compact manifolds with the aforementioned 
boundary conditions.

The first difficulty encountered in completing the program in~\cite{HoTs10a}
was that even basic results such as Yamabe classification of nonsmooth 
metrics on compact manifolds with boundary were unavailable; only the smooth 
case had been previously examined (by Escobar~\cite{Esco92,Esco96a}).
In order to develop a theory that mirrors that of the closed case,
Yamabe classification was first generalized in~\cite{HoTs10a} to nonsmooth 
metrics on compact manifolds with boundary.
In particular, it was shown that two conformally equivalent rough metrics 
could not have scalar curvatures with distinct signs.
(In the case of closed manifolds, Yamabe classification of rough metrics
was also unavailable, and had to be established in~\cite{HNT07b}.)
Other results were also extended to compact manifolds with 
boundary, such as conformal invariance of the Hamiltonian constraint.
The analysis framework from~\cite{HNT07b} was then used in~\cite{HoTs10a}
to establish several existence results for a large class of problems 
covering a broad parameter regime, which included most of the cases 
relevant in practice.
As in the work on the non-CMC case for closed manifolds in~\cite{HNT07b},
the focus in~\cite{HoTs10a} (and in this article) is on low regularity data 
and on the interaction between different types of boundary conditions, 
which had not been carefully analyzed before.

We note that the Lichnerowicz equation was considered in
isolation in~\cite{HoTs10a}, so that the CMC case with marginally
trapped surface boundary conditions (cf.~\S\ref{subsec:boundary})
was not examined.
However, the results for the Lichnerowicz equation allowed for variable 
coefficients in the critical nonlinear term; this allows the results for 
the Lichnerowicz equation from~\cite{HoTs10a} to be used to build the 
non-CMC results in this paper without modification.
We should point out that the primary reason that the Lichnerowicz equation
alone was considered in~\cite{HoTs10a} instead of the CMC case,
is that unlike the setting of closed manifolds, the constraint equations
\emph{do not decouple} in the CMC case;
this is due to the boundary conditions remaining coupled when using
models of asymptotically Euclidean manifolds with apparent horizon boundaries.
Therefore, the treatment of the CMC case for manifolds with boundary
requires essentially the same fully coupled topological fixed-point 
argument as the non-CMC case.
In fact, in this article our result for the CMC case is simply a special
case of our near-CMC result, and is not stated separately.
An unfortunate impact is that the techniques used for closed manifolds
in~\cite{HNT07b,Choq04,dM06} to lower the regularity of the metric in the 
CMC case a full degree below the best known rough metric result in the 
non-CMC case (appearing in~\cite{HNT07b}) cannot be exploited for the
CMC case on compact manifolds with boundary 
(see also Remark~\ref{rem3:8oct13}).
We note that Dilts~\cite{Dilt13a} recently has independently obtained 
solutions to a similar boundary value problem to ours in this article 
by also using the framework and supporting tools from~\cite{HoTs10a,HNT07b}
in the special case of smooth ($W^{2,p}$) metrics;
however he does not account for the coupling that occurs on the boundary, 
and therefore does not obtain solutions with apparent horizon boundaries.
He also exploits the Green's function technique developed by Maxwell 
in~\cite{dM09} for smooth metrics in $W^{2,p}$ that avoids constructing 
a subsolution.
We have avoided using the Green's function approach here in order to develop
existence results for the roughest possible class of metrics 
($h_{ab} \in W^{s,p}$, $p \in (1,\infty)$, $s > 1+\frac3p$),
which will necessitate the explicit construction of subsolutions.

In this article, we push our program further by considering the conformal
formulation in the non-CMC case on a compact manifold with boundary,
subject to the class of boundary conditions considered for the 
Lichnerowicz equation alone in~\cite{HoTs10a}.
Under reasonable assumptions on the data, we establish 
existence of both near- and far-from-CMC solutions to the conformal 
formulation on compact manifolds with interior (Robin-type) boundary 
conditions similar to those in \cite{HoTs10a,SD04,DM05a}, and exterior 
boundary conditions that are consistent with asymptotically Euclidean decay.
As mentioned above, a third result, for the CMC case, now comes simply
as a special case of the near-CMC result.
While near-CMC and far-from-CMC existence results have been obtained
for closed manifolds~\cite{jI95,jIvM96,ACI08,HNT07b,dM09},
and near-CMC existence results have been obtained for asymptotically 
Euclidean manifolds~\cite{yCBjIjY00}, 
results for compact manifolds with boundary have required the CMC assumption.
We begin the discussion right where \cite{HoTs10a} left off, 
combining the technical tools and results from our prior work 
in~\cite{HNT07b} and~\cite{HoTs10a}.
By focusing on compact models of an asymptotically Euclidean manifold with 
truncated ends and excised black hole regions, in addition to extending 
the compact with boundary existence theory, we believe that the techniques 
developed in this article will prove useful for obtaining far-from-CMC 
solutions to the conformal formulation in the asymptotically Euclidean case.

{\bf Outline of the paper.}
In \S\ref{sec:constraints}, we present some preliminary material on notation, the conformal method, and boundary conditions, briefly summarizing the more extensive presentations of these topics in~\cite{HNT07b,HoTs10a}.
In particular, in \S\ref{subsec:notation} we give a brief overview of the notation we employ for spaces, norms, and related objects; in \S\ref{subsec:conformal} we summarize the conformal method; and in \S\ref{subsec:boundary} we give an overview of the boundary conditions of primary interest, following closely the presentation from~\cite{HoTs10a}.
In \S\ref{sec:main} we give an overview of the main results, summarized as two separate theorems for the near-CMC and far-from-CMC cases, analogous to two of the three main results for the closed case developed in~\cite{HNT07b}.
In \S\ref{sec:momentum}, we develop the necessary supporting results and estimate for the momentum constraint, and in \S\ref{sec:hamiltonian} we similarly develop a number of supporting results needed for treating the Hamiltonian constraint in the overall fixed-point argument.
In \S\ref{sec:barriers}, we subsequently give several distinct global barrier constructions for the near-CMC and far-from-CMC cases.
Finally, in \S\ref{sec:proof} we give the proofs of the two main theorems.
Although many of the technical tools we need have been established
in~\cite{HNT07b,HoTs10a}, some additional required results are included
in Appendix~\ref{sec:app}.

\section{Preliminary material}
   \label{sec:constraints}

The results in this article leverage and then build on the analysis 
framework and the supporting technical tools developed in our two
previous articles~\cite{HNT07b,HoTs10a}, including the material
contained in the appendices of both works.
We have made an effort to use completely consistent notation with 
these two prior works, and have also endeavored to avoid as much a 
possible any replication of the technical tools.
However, in an effort to make the paper as self-contained as possible,
we will give a brief summary below of the (quite standard) notation we use
throughout all three articles for Sobolev classes, norms, and other objects.

\subsection{Notation and conventions}
   \label{subsec:notation}
As in~\cite{HNT07b,HoTs10a}, the function spaces employed throughout the 
article are fractional Sobolev classes; an overview of the construction 
of fractional order Sobolev spaces of sections of vector and tensor bundles 
can be found in the Appendix of~\cite{HNT07b}, based on Besov spaces and 
partitions of unity.
The case of the sections of the trivial bundle of scalars can 
also be found in~\cite{Hebey96}, and the case of tensors 
can also be found in~\cite{Palais65}.
Throughout the article we will use standard notation for such function spaces;
cf.~the introduction to~\cite{HNT07b} for an extensive a summary.
In particular, our notation for $L^p$ and Sobolev spaces and norms of 
sections of vector bundles over compact manifolds is quite standard,
which we briefly summarize below.

Let $\cM$ be an $n$-dimensional smooth, compact manifold with
non-empty boundary $\partial\cM$.
Let $\nabla_a$ be
the Levi-Civita connection associated with the metric $h_{ab} \in C^{\infty}(T_2^0\cM)$, that is, the unique torsion-free connection satisfying
$\nabla_ah_{bc}=0$. 
Here, $X(T_s^r\cM)$ denotes a particular smoothness class of sections of the $(r,s)$-tensor bundle associated with the tangent bundle $T\cM$ of $\cM$.
Let $R_{abc}{}^d$ be the Riemann tensor of the
connection $\nabla_a$, where the sign convention used in this
article is $(\nabla_a\nabla_b -\nabla_b\nabla_a)v_c = R_{abc}{}^d
v_d$. Denote by $R_{ab} := R_{acb}{}^c$ the Ricci tensor and by $R
:=R_{ab}h^{ab}$ the Ricci scalar curvature of this connection.
Integration on $\cM$ can be defined with the volume form
associated with the metric $h_{ab}$, allowing for the construction
of $L^p$-type norms and spaces.
Given an arbitrary tensor $u^{a_1\cdots
a_r}{}_{b_1\cdots b_s}$ of type $m=r+s$, we define a
real-valued function measuring its magnitude at any point $x \in \cM$ as
\begin{equation}
\label{tensor-magnitude}
|u| := (u^{a_1\cdots b_s}u_{a_1\cdots b_s})^{1/2}.
\end{equation}
The $L^p$-norm of an arbitrary tensor field 
$u^{a_1\cdots a_r}{}_{b_1\cdots b_s}$ on $\cM$ can then be
defined for any ${1\leqs p < \infty}$ and for ${p=\infty}$ respectively
using~(\ref{tensor-magnitude}) as follows,
\begin{equation}
\label{N-Lp-norm}
\|u\|_p := \left(\int_{\cM} |u|^p\,dx\right)^{1/p},\qquad
\|u\|_{\infty}:= \mbox{ess}\,\sup_{x\in\cM}|u|.
\end{equation}
The Lebesgue spaces $L^p(T^{r}_{s}\cM)$ of
sections of the $(r,s)$-tensor bundle, for ${1\leqs p\leqs\infty}$
can be construction through completion
of $C^{\infty}(T^{r}_{s}\cM)$ with respect to the norm,
with the case $p=2$ giving Hilbert space structure.
Denoting covariant derivatives of tensor fields as
$\nabla^{k}u^{a_1\cdots a_m} 
:=\nabla_{b_1}\cdots\nabla_{b_k}u^{a_1\cdots a_m}$,
where $k$ denotes the total number of derivatives
represented by the tensor indices $(b_1,\ldots,b_k)$,
for any non-negative integer $k$ and for any ${1\leqs p \leqs \infty}$,
the Sobolev norm on $C^{\infty}(T^{r}_{s}\cM)$ is given as follows,
\begin{equation}
\label{N-Wkp-norm}
\|u\|_{k,p}^p := \sum_{l=0}^k \,\|\nabla^{l}u\|_p^p.
\end{equation}
The Sobolev spaces $W^{k,p}(T^r_{s}\cM)$ of sections
of the $(r,s)$-tensor bundle can be constructed through
completion of $C^{\infty}(T^{r}_{s}\cM)$ with respect to this norm. 
For the remainder of this paper, we let 
$W^{k,p} = W^{k,p}(\cM)$ and $\bW^{k,p} = W^{k,p}(T\cM)$.
The Sobolev spaces $W^{k,p}(T^r_s\cM)$ are Banach
spaces, and the case $p=2$ is a
Hilbert space.
We have $L^p(T^r_s\cM)=W^{0,p}(T^s_r\cM)$ and $\|s\|_p =\|s\|_{0,p}$.
See the Appendix of~\cite{HNT07b} for a more careful construction
that includes real order Sobolev spaces of sections of vector bundles.

We will also need to consider Sobolev spaces of functions and tensor bundles 
on the boundary components of $\cM$.  If $\partial\cM =\Sigma_1\cup \Sigma_2$,
where each $\Sigma_1$ and $\Sigma_2$ are disjoint boundary components of $\cM$, we explicitly
denote the boundary component when referring to Sobolev spaces on that component.
For example, the space of scalar valued Sobolev functions on $\Sigma_i$ will be denoted
by $W^{k,p}(\Sigma_i)$ and the space of $(r,s)$-tensors by $W^{k,p}(T^r_s\Sigma_i)$.
We use the following notation for the norm that defines these spaces:
\begin{align}\label{eq1:8aug13}
\|u\|_{k,p;\Sigma_i}^p := \sum_{l=0}^k \,\|\nabla^{l}u\|_{p;\Sigma_i}^p.
\end{align}

For the boundary value problem that we consider in this paper, we will need
to form new Banach spaces from old Banach spaces using the direct product.
Given two Banach spaces $X$ and $Y$ with norms $\|\cdot\|_{X}$ and $\|\cdot\|_{Y}$,
\begin{align}\label{eq1:26july13}
X\times Y~~ \text{is a Banach space with norm}~~ \left( \|\cdot\|_{X}^q+\|\cdot\|_{Y}^q \right)^{1/q},
\quad q \geqs 1.
\end{align}
In particular, if $\Sigma_i$ represents a boundary component of $\cM$ for $i \in \{1,2\}$, we will have need
to consider spaces of the form $$W^{s,p} \times W^{t,q}(\Sigma_1)\times W^{t,q}(\Sigma_2),$$
with a norm given by the sum of appropriate powers of the norms of the 
respective spaces.

Let $C^\infty_{+}$ be the set of nonnegative smooth (scalar) functions on $\cM$.
Then we can define order cone
\begin{equation}
\label{E:wkp-cone}
W^{s,p}_{+} := \bigl\{ \phi \in W^{s,p} :
\langle\phi,\varphi\rangle\geqs 0 \quad \forall \, \varphi\in C^{\infty}_{+}
\, \bigr \},
\end{equation}
with respect to which the Sobolev spaces $W^{s,p}=W^{s,p}(\cM)$ are ordered Banach spaces.
Here $\langle\cdot,\cdot\rangle$ represents the (unique) extension of the $L^2$-inner product to a bilinear form ${W^{s,p}\otimes W^{-s,p'}\to\R}$, 
with $\frac1{p'}+\frac1p=1$.
The order relation is then $\phi\geqs\psi$ iff $\phi-\psi\in
W^{s,p}_{+}$.
We note that this order cone is normal only for $s=0$.

Given two ordered Banach spaces $X$ and $Y$ with order cones $X_+$, $Y_+$, we will have a need to define an order cone
on the Banach space $X\times Y$.  We define the order cone 
\begin{align}\label{sum-cone}
(X\times Y)_+ = \{ (x,y) \in X\times Y~:~x \in X_+, ~~ y \in Y_+\}.
\end{align}
In particular, if $\cM$ is a manifold with boundary $\partial\cM = \Sigma_1\cup \Sigma_2$, then we may define
an order cone on $W^{s,p}(\cM)\times W^{t,q}(\Sigma_1)\times W^{t,q}(\Sigma_2)$ using definition
\eqref{E:wkp-cone}.  See Appendix of~\cite{HNT07b}, where the key ideas of
ordered Banach spaces are reviewed.

\subsection{The Einstein Constraint Equations and the Conformal Formulation}
\label{subsec:conformal}

We give a quick overview of the Einstein constraint equations in
general relativity, and then define weak formulations that
are fundamental to both solution theory and the development
of approximation theory, following closely~\cite{HNT07b,HoTs10a}.

Let $(M,g_{\mu\nu})$ be a 4-dimensional spacetime, that is, $M$ is a
4-dimensional, smooth manifold, and $g_{\mu\nu}$ is a smooth,
Lorentzian metric on $M$ with signature $(-,+,+,+)$. Let
$\nabla_{\mu}$ be the Levi-Civita connection associated with the
metric $g_{\mu\nu}$.  The Einstein equation is
\[
G_{\mu\nu} = \kappa T_{\mu\nu},
\]
where $G_{\mu\nu} = R_{\mu\nu} - \frac{1}{2}R\,g_{\mu\nu}$ is the
Einstein tensor, $T_{\mu\nu}$ is the stress-energy tensor, and
$\kappa = 8\pi G/c^4$, with $G$ the gravitation constant and $c$ the
speed of light. The Ricci tensor is $R_{\mu\nu} =
R_{\mu\sigma\nu}{}^{\sigma}$ and $R= R_{\mu\nu}g^{\mu\nu}$ is the
Ricci scalar, where $g^{\mu\nu}$ is the inverse of $g_{\mu\nu}$,
that is $g_{\mu\sigma} g^{\sigma\nu} =\delta_{\mu}{}^{\nu}$. The
Riemann tensor is defined by
$R_{\mu\nu\sigma}{}^{\rho} w_{\rho} =\big(\nabla_{\mu}\nabla_{\nu}
-\nabla_{\nu}\nabla_{\mu}\bigr) w_{\sigma}$, where $w_{\mu}$ is any
1-form on $M$. The stress energy tensor $T_{\mu\nu}$ is assumed to
be symmetric and to satisfy the condition
$\nabla_{\mu}T^{\mu\nu} = 0$ and the {\bf dominant energy
condition}, that is, the vector $-T^{\mu\nu}v_{\nu}$ is timelike and
future-directed, where $v^{\mu}$ is any timelike and future-directed
vector field. In this section Greek indices $\mu$, $\nu$, $\sigma$,
$\rho$ denote abstract spacetime indices, that is, tensorial
character on the 4-dimensional manifold $M$. They are raised and
lowered with $g^{\mu\nu}$ and $g_{\mu\nu}$, respectively. 
Latin indices $a$, $b$, $c$, $d$ will denote tensorial
character on a 3-dimensional manifold.

The map $t:M\to \R$ is a {\bf time function} iff the function $t$ is
differentiable and the vector field $-\nabla^{\mu}t$ is a timelike,
future-directed vector field on $M$. Introduce the hypersurface $\cM
:=\{ x\in M : t(x)=0\}$, and denote by $n_{\mu}$ the unit 1-form
orthogonal to $\cM$. By definition of $\cM$ the form $n_{\mu}$ can
be expressed as $n_{\mu} = -\alpha\,\nabla_{\mu}t$, where $\alpha$,
called the lapse function, is the positive function such that
$n_{\mu} n_{\nu}\, g^{\mu\nu} = -1$. Let $\hh_{\mu\nu}$ and
$\hk_{\mu\nu}$ be the first and second fundamental forms of $\cM$,
that is,
\[
\hh_{\mu\nu} := g_{\mu\nu}- n_{\mu} n_{\nu},\qquad
\hk_{\mu\nu} := -\hh_{\mu}{}^{\sigma} \nabla_{\sigma} n_{\nu}.
\]
The Einstein constraint equations on $\cM$ are given by
\[
\bigl( G_{\mu\nu} -\kappa T_{\mu\nu}\bigr) \, n^{\nu} =0.
\]
A well known calculation allows us to express these equations
involving tensors on $M$ as equations involving {\em intrinsic}
tensors on $\cM$. The result is the following equations,
\begin{align}
\label{CE-def-H}
\hRi + \hk^2 - \hk_{ab}\hk^{ab} - 2\kappa \hrho &=0,\\
\label{CE-def-M}
\hD^a\hk - \hD_b\hk^{ab} + \kappa \hj^a &= 0,
\end{align}
where tensors $\hh_{ab}$, $\hk_{ab}$, $\hj_a$ and $\hrho$ on a
3-dimensional manifold are the pull-backs on $\cM$ of the tensors
$\hh_{\mu\nu}$, $\hk_{\mu\nu}$, $\hj_{\mu}$ and $\hrho$ on the
4-dimensional manifold $M$. We have introduced the energy density
$\hrho := n_{\mu} n_{\mu} T^{\mu\nu}$ and the momentum current
density $\hj_{\mu} := -\hh_{\mu\nu} n_{\sigma} T^{\nu\sigma}$. We
have denoted by $\hD_{a}$ the Levi-Civita connection associated to
$\hh_{ab}$, so $(\cM,\hh_{ab})$ is a 3-dimensional Riemannian
manifold, with $\hh_{ab}$ having signature $(+,+,+)$, and we use the
notation $\hh^{ab}$ for the inverse of the metric $\hh_{ab}$.
Indices have been raised and lowered with $\hh^{ab}$ and $\hh_{ab}$,
respectively. We have also denoted by $\hRi$ the Ricci scalar curvature of the metric $\hh_{ab}$. Finally, recall that the
constraint equations~\eqref{CE-def-H}-\eqref{CE-def-M} are indeed
equations on $\hh_{ab}$ and $\hk_{ab}$ due to the matter fields
satisfying the energy condition $-\hrho^2 +\hj_a\hj^a \leqs 0$
(with strict inequality holding at points on $\cM$ where $\hrho \neq 0$;
see~\cite{Wald84}), which is
implied by the dominant energy condition on the stress-energy tensor
$T^{\mu\nu}$ in spacetime.

{\bf The Conformal Formulation.}
Let $\phi$ denote a positive scalar field on $\cM$, and decompose the
extrinsic curvature tensor $\hk_{ab} = \hl_{ab} + \frac{1}{3}\hh_{ab} \htau$,
where $\htau := \hk_{ab}\hh^{ab}$ is the trace and then $\hl_{ab}$ is
the traceless part of the extrinsic curvature tensor. Then, introduce
the following conformal re-scaling:
\begin{equation}
\label{CE-def-mf}
\begin{aligned}
\hh_{ab} &=: \phi^4 \, h_{ab},&
\hl^{ab} &=: \phi^{-10} \,l^{ab},&
\htau &=: \tau,\\
\hj^a &=: \phi^{-10}\; j^a,&
\hrho &=: \phi^{-8}\, \rho.
\end{aligned}
\end{equation}
We have introduced the Riemannian metric $h_{ab}$ on the 3-dimensional
manifold $\cM$, which determines the Levi-Civita connection $D_a$, and
so we have that $D_a h_{bc}=0$. We have also introduced the symmetric,
traceless tensor $l_{ab}$, and the non-physical matter sources $j^a$
and $\rho$. The different powers of the conformal re-scaling above are
carefully chosen so that the 
constraint equations \eqref{CE-def-H}-\eqref{CE-def-M} transform into 
the following
equations
\begin{gather}
\label{CE-cr1H}\textstyle
-8 \Delta \phi + \Ri \phi + \frac{2}{3}\tau^2 \phi^5
- l_{ab}l^{ab} \phi^{-7} -2\kappa \rho\phi^{-3} =0,\\
\label{CE-cr1M}\textstyle
-D_bl^{ab} + \frac{2}{3} \phi^6 D^a \tau +\kappa j^a =0,
\end{gather}
where in equation above, and from now on, indices of unhatted fields
are raised and lowered with $h^{ab}$ and $h_{ab}$ respectively. We
have also introduced the {\bf Laplace-Beltrami operator}
with respect to the metric $h_{ab}$, acting on smooth scalar fields;
it is defined as follows
\begin{equation}
\label{E:laplace-beltrami}
\Delta \phi:= h^{ab}D_aD_b\phi.
\end{equation}
Equations~\eqref{CE-cr1H}-\eqref{CE-cr1M} can be obtained by a
straightforward albeit long computation. In order to perform this
calculation it is useful to recall that both $\hD_a$ and $D_a$ are
connections on the manifold $\cM$, and so they differ on a tensor
field $C_{ab}{}^c$, which can be computed explicitly in terms of
$\phi$, and has the form
\[
C_{ab}{}^c = 4 \delta_{(a}{}^cD_{b)} \ln(\phi)
- 2 h_{ab}h^{cd}D_d \ln(\phi).
\]
We remark that the power four on the re-scaling of the metric
$\hh_{ab}$ and $\cM$ being 3-dimensional imply that $\hRi =\phi^{-5}
(\Ri\phi - 8\Delta\phi)$, or in other words, that $\phi$ satisfies
the {\bf Yamabe-type problem}:
\begin{equation}
  \label{E:yamabe}
-8\Delta \phi+\Ri \phi - \hRi \phi^5 = 0, \quad \phi > 0,
\end{equation}
where $\hRi$ denotes the scalar curvature corresponding to the physical metric ${\hh_{ab} = \phi^4 h_{ab}}$.
Note that for any other power in the re-scaling, terms proportional 
to ${h^{ab}(D_a\phi)(D_b\phi)/\phi^2}$ appear in the transformation.
The set of all metrics on a compact manifold can be classified into the
three disjoint Yamabe classes $\cY^{+}(\cM)$, $\cY^{0}(\cM)$, and
$\cY^{-}(\cM)$, corresponding to whether one can conformally transform the metric into a metric with strictly positive,
zero, or strictly negative scalar curvature, respectively, cf.~\cite{jLtP87} (See also the Appendix of~\cite{HoTs10a}).
We note that the {\bf Yamabe problem} is to determine,
for a given metric $h_{ab}$, whether there exists a conformal
transformation $\phi$ solving~(\ref{E:yamabe}) such that $\hRi = \mathrm{const}$.
Arguments similar to those above for $\phi$ force the power negative ten 
on the re-scaling of the tensor $\hl^{ab}$ and $\hj^a$, so terms proportional
to $(D_a\phi)/\phi$ cancel out in \eqref{CE-cr1M}. Finally, the
ratio between the conformal re-scaling powers of $\hrho$ and $\hj^a$
is chosen such that the inequality 
$-\rho^2 + h_{ab} j^aj^b \leqs 0$
implies the inequality 
$-\hrho^2 + \hh_{ab}\hj^a\hj^b \leqs 0$.
For a complete discussion of all possible choices of re-scaling
powers, see the Appendix of~\cite{HNT07b}.

There is one more step to convert the original constraint equation
\eqref{CE-def-H}-\eqref{CE-def-M} into a determined elliptic
system of equations. This step is the following: Decompose the
symmetric, traceless tensor $l_{ab}$ into a divergence-free part
$\sigma_{ab}$, and the symmetrized and traceless gradient of a vector,
that is, $l^{ab} =: \sigma^{ab} + (\cL w)^{ab}$, where
$D_a\sigma^{ab}=0$ and we have introduced the {\bf conformal Killing
operator} $\cL$ acting on smooth vector fields and defined as follows
\begin{equation}
\label{CF-def-CK}\textstyle
(\cL w)^{ab} := D^a w^b + D^b w^a - \frac{2}{3}(D_c w^c) h^{ab}.
\end{equation}
Therefore, the constraint equations~\eqref{CE-def-H}-\eqref{CE-def-M} are
transformed by the conformal re-scaling into the following equations
\begin{gather}
\label{CE-cr2H}\textstyle
\hspace*{-0.15cm}
- 8 \Delta \phi + \Ri \phi 
+ \frac{2}{3}\tau^2 \phi^5
- [\sigma_{ab}+(\cL w)_{ab}] [\sigma^{ab}+(\cL w)^{ab}]\phi^{-7} 
- 2\kappa \rho \phi^{-3} =0,\\
\label{CE-cr2M}\textstyle
-D_b(\cL w)^{ab} + \frac{2}{3} \phi^6 D^a \tau +\kappa j^a =0.
\end{gather}
In the next section we interpret these equations above as partial
differential equations for the scalar field $\phi$ and the vector
field $w^a$, while the rest of the fields are considered as given
fields. Given a solution $\phi$ and $w^a$ 
of equations~\eqref{CE-cr2H}-\eqref{CE-cr2M}, the physical metric $\hh_{ab}$
and extrinsic curvature $\hk^{ab}$ of the hypersurface $\cM$ are given
by
\[\textstyle
\hh_{ab} = \phi^4 h_{ab},\qquad
\hk^{ab} = \phi^{-10}
[\sigma^{ab} + (\cL w)^{ab}] + \frac{1}{3}\, 
\phi^{-4} \tau h^{ab},
\]
while the matter fields are given by Eq~(\ref{CE-def-mf}).

From this point forward, for simplicity we will denote the Levi-Civita connection
of the metric $h_{ab}$ on the 3-dimensional manifold $\cM$
as $\nabla_a$ rather than $D_a$, and the Ricci scalar of
$h_{ab}$ will be denoted by $R$ instead of $\Ri$.
Let $(\cM, h)$ be a 3-dimensional Riemannian manifold, where $\cM$ is
a smooth, compact manifold with non-empty boundary $\partial\cM$, and
$h\in C^{\infty}(T^0_2\cM)$ is a positive definite metric. 
With the shorthand notation ${C^{\infty}=C^\infty(\cM\times\R)}$ 
and ${\biC^{\infty}=C^\infty(T\cM)}$,
let
$L: C^{\infty}\to C^{\infty}$ and $\IL :\biC^{\infty}\to\biC^{\infty}$
be the operators with
actions on $\phi\in C^{\infty}$ and $\biw\in\biC^{\infty}$ given by
\begin{align}
\label{CF-def-L}
L\phi &:= -\Delta \phi,\\
\label{CF-def-IL}
(\IL \biw)^a &:= -\nabla_b (\cL \biw)^{ab},
\end{align}
where $\Delta$ denotes the Laplace-Beltrami operator 
defined in~\eqref{E:laplace-beltrami}, 
and where $\cL$ denotes the conformal Killing operator 
defined in~\eqref{CF-def-CK}.
We will also use the index-free notation $\IL\biw$ and $\cL\biw$.

The freely specifiable functions of the problem are a scalar function
$\tau$, interpreted as the trace of the physical extrinsic curvature;
a symmetric, traceless, and divergence-free, contravariant, two index 
tensor $\sigma$; the non-physical energy density $\rho$ and the non-physical
momentum current density vector $\bij$ subject to the requirement
$-\rho^2 +\bij\cdot\bij \leqs 0$.
The term non-physical refers here to
a conformal rescaled field, while physical refers to a conformally
non-rescaled term. The requirement on $\rho$ and $\bij$ mentioned
above and the particular conformal rescaling used in the
semi-decoupled decomposition imply that the same inequality is
satisfied by the physical energy and momentum current densities. This
is a necessary condition (although not sufficient) in order that the
matter sources in spacetime satisfy the dominant energy
condition. The definition of various energy conditions can be found
in~\cite[page 219]{Wald84}. Introduce the non-linear operators 
${f: C^{\infty}\times\biC^{\infty}\to C^{\infty}}$ and 
${\IF :C^{\infty}\to\biC^{\infty}}$ given by
\begin{equation*}
f(\phi,\biw) = a_{\tau} \phi^5 + a_{\tiR} \phi - a_{\rho} \phi^{-3}
- a_{w}\phi^{-7},
\quad\textrm{and}\quad
\IF (\phi) = \bib_{\tau} \, \phi^6 + \bib_{j},
\end{equation*}
where the coefficient functions are defined as follows
\begin{equation}
\label{CF-def-coeff2}
\begin{aligned}
a_{\tau} &:= \textstyle\frac{1}{12}\tau^2,&
a_{\tiR} &:= \textstyle\frac{1}{8}R,&
a_{\rho} &:= \textstyle\frac{\kappa}{4} \rho,\\
a_{\biw} &:= \textstyle\textstyle\frac{1}{8}(\sigma +\cL\biw)_{ab}(\sigma + \cL \biw)^{ab},&
\bib_{\tau}^a &:= \textstyle\frac{2}{3}\nabla^a \tau,&
\bib_{j}^a &:= \kappa j^a.
\end{aligned}
\end{equation}
Notice that the scalar coefficients $a_{\tau}$, $a_{w}$, and
$a_{\rho}$ are non-negative, while there is no sign restriction on
$a_{\tiR}$.

With these notations, the {\bf classical formulation} (or the strong formulation) of the
coupled Einstein constraint equations reads as: Given
the freely specifiable smooth functions $\tau$, $\sigma$, $\rho$, and
$\bij$ in $\cM$, find a scalar field $\phi$ and a vector field
$\biw$ in $\cM$ solution of the system
\begin{equation}
\label{CF-LY}
L \phi + f(\phi,\biw) = 0
\qquad\textrm{and}\qquad
\IL \biw  + \IF(\phi) =0
\qquad\textrm{in }\cM.
\end{equation}

\subsection{Boundary Conditions}
\label{subsec:boundary} 

Following~\cite{HoTs10a},
the two main types of boundary conditions that we consider in this paper are exterior boundary conditions and interior boundary conditions.
Exterior boundary conditions occur when the asymptotic ends of a manifold are removed and one needs to impose the correct decay conditions.
The interior boundary conditions arise when singularities are excised from the manifold and then conditions are imposed on the boundary
so that the region is either a trapped or marginally trapped surface (cf. \cite{HoTs10a,SD04,DM05a} ).  
We let $\partial \cM = \Sigma_I\cup \Sigma_E$, where $\Sigma_I$ and $\Sigma_E$
denote the interior and exterior boundary, respectively.  Moreover, we assume that the interior and exterior boundaries
are the union of finitely many disjoint components:
$$\Sigma_I = \bigcup_{i=1}^{M} \Sigma_{i} \quad \text{and} \quad \Sigma_E = \bigcup_{i=M+1}^N \Sigma_{i}.$$

On a $3$-dimensional manifold, the exterior boundary condition for the conformal factor $\phi$ is that it must satisfy
\begin{align}\label{eq1:25jun13}
\partial_{r}\phi+\frac{1}{r}(\phi-1) = \cO(r^{-3}), 
\end{align}
where $r$ is the flat-space radial coordinate.  This condition is chosen to ensure that the conformal
data accurately models initial data for the asymptotically Euclidean case.  More specifically, this condition is chosen
to ensure the correct decay estimates for $\phi$ and to give accurate values
for the total energy \cite{HoTs10a,tPjY82}.  

The solution $\biw$ to the momentum constraint
must also satisfy certain Robin type conditions to accurately
model asymptotically Euclidean data.  In \cite{tPjY82},
the vector Robin condition
\begin{align}\label{eq1:11july13}
(\cL \biw)^{bc}\nu_c\left(\delta^a_b-\frac12\nu^a\nu_b\right)+\frac{6}{7r}\biw^b\left(\delta^{a}_b-\frac18\nu^a\nu_b\right) = \mathcal{O}(r^{-3})
\end{align}  
is given for a $3$-dimensional asymptotically Euclidean manifold.  
Here $\nu$ is the outward pointing normal vector field to $\Sigma_E$ with respect to the
non physical metric $g$ and $r$ is the radius
of a large spherical domain. Taking the right hand side in the above expression to be zero, and 
using the fact that $(\delta^a_b+\nu^a\nu_b)$ is the inverse of $(\delta^a_b-\frac12 \nu^a\nu_b)$,
we can rewrite \eqref{eq1:11july13} as
\begin{align}\label{eq3:11july13}
(\cL \biw)^{ab}\nu_b +\frac{6}{7r}\biw^b\left(\delta^a_b+\frac34 \nu^a\nu_b\right) = 0.
\end{align}
Therefore, we impose the general vector Robin condition on the momentum constraint
for the exterior boundary:
\begin{align}\label{eq2:11july13}
(\cL \biw)^{ab}\nu_b + C^a_b\biw^b = 0,
\end{align}
which is general enough to include \eqref{eq3:11july13}.

There are many different interior boundary conditions that have been imposed in the
literature.  For the sake of completeness,
we will give a brief review of the boundary conditions mentioned in \cite{HoTs10a},
where Holst and Tsogtgerel compile a complete list of interior conditions
modeling marginally trapped surfaces.  While the following interior boundary
conditions are presented for $n$-dimensional manifolds, we will 
focus on the boundary condition given in~\eqref{eq8:26jun13} in the
$3$-dimensional case. 

Let $\Sigma_i$ denote an interior boundary component and let $\hat{\nu}$ be the outward pointing normal vector with respect to
the physical metric $\hat{g}$.  The expansion scalars corresponding to the outgoing and ingoing future directed geodesics 
to $\Sigma_i$ are then
given by
\begin{align}\label{eq2:26jun13}
\hat{\theta}_{\pm}= \mp (n-1)\hat{H}+\text{tr}_{\hat{g}}\hat{K}-\hat{K}(\hat{\nu},\hat{\nu}),
\end{align}
where $(n-1)\hat{H} = \text{div}_{\hat{g}}\hat{\nu}$ is the mean extrinsic curvature of $\Sigma_i$.  The surface
$\Sigma_i$ is called a trapped surface if $\hat{\theta}_{\pm} <0$ and a marginally trapped surface if 
$\hat{\theta}_{\pm} \leqs 0$.  See \cite{SD04,DM05a,Wald84} for details.

Writing the expansion scalars in terms of the conformal quantities and setting $\overline{q} = \frac{n}{n-2}$ as in \cite{HoTs10a}, we have
that
\begin{align}\label{eq3:26jun13}
\hat{\theta}_{\pm} = \mp (n-1)\phi^{-\overline{q}}\left( \frac{2}{n-2}\partial_{\nu}\phi + H\phi \right) + (n-1)\tau-\phi^{-2\overline{q}}S(\nu,\nu),
\end{align}
where $\nu = \phi^{\overline{q}-1}\hat{\nu}$ is the unit normal with respect to $g$, and $\partial_{\nu}\phi$ is the derivative of $\phi$
along $\nu$.   In~\eqref{eq3:26jun13}, we have also used that fact that the mean curvature $\hat{H}$ satisfies
\begin{align}
\hat{H} = \phi^{-\overline{q}}\left(\frac{2}{n-2}\partial_{\nu}\phi + H\phi\right),
\end{align}
where $H$ is the mean curvature with respect to $g$.

As in \cite{HoTs10a}, we let $\theta_{+}=\phi^{\overline{q}-e}\hat{\theta_{+}}$ be the specified, scaled expansion factor for some $e \in \mathbb{R}$, and obtain
\begin{align}\label{eq4:26jun13}
\frac{2(n-1)}{n-2}\partial_{\nu}\phi + (n-1)H\phi-(n-1)\tau \phi^{\overline{q}}+S(\nu,\nu)\phi^{-\overline{q}}+\theta_{+}\phi^e = 0.
\end{align} 
Similarly, by specifying $\theta_{-}=\phi^{\overline{q}-e}\hat{\theta_{-}}$ for some $e \in \mathbb{R}$,
we obtain
\begin{align}\label{eq5:26jun13}
\frac{2(n-1)}{n-2}\partial_{\nu}\phi + (n-1)H\phi+(n-1)\tau \phi^{\overline{q}}-S(\nu,\nu)\phi^{-\overline{q}}-\theta_{-}\phi^e = 0.
\end{align}
In~\eqref{eq4:26jun13}, $\theta_{-}$ remains unspecified, and in~\eqref{eq5:26jun13}, $\theta_{+}$ is unspecified.  So in either 
case, to ensure that $\theta_{\pm} \leqs  0$, conditions have to be imposed on either $\tau$ or $S$ in order to ensure that the unspecified
expansion factor satisfies the marginally trapped surface condition.  In \cite{HoTs10a}, Holst and Tsogtgerel developed general conditions
on the initial data to ensure that the trapped surface conditions are satisfied.  In the case of \eqref{eq4:26jun13} with specified $\theta_{+}$,
one assumes that $\phi_-$ satisfies $\phi_- \leqs \phi$, $\tau \leqs  0$ on $\Sigma_I$, $e = - \overline{q}$, and requires either that
\begin{align}\label{eq6:26jun13}
&S(\nu,\nu) \leqs  0,\\
&2|S(\nu,\nu)|+|\theta_{+}| \leqs  2(n-1)|\tau|\phi^{2\overline{q}}_{-}, \nonumber
\end{align}
or that
\begin{align}\label{eq7:26jun13}
&S(\nu,\nu) \geqs 0,\\
&|\theta_+| \leqs  2S(\nu,\nu)+2(n-1)|\tau|\phi_-^{2\overline{q}}. \nonumber
\end{align}

In the case of \eqref{eq5:26jun13}, where $\theta_-$ is specified, one assumes that
$e = \overline{q}$, $S(\nu,\nu) \geqs 0$, and
\begin{align}\label{eq8:26jun13}
2(n-1)\tau + |\theta_-| \leqs  2S(\nu,\nu)\phi_{+}^{-2\overline{q}}~~~\quad \text{on $\Sigma_I$},
\end{align}
where $\phi_+$ satisfies $\phi_+ \geqs \phi$.  In \cite{HoTs10a}, the authors
assume that $\tau \geqs 0$ on $\Sigma_I$.  However, in Theorem~\ref{T:main2} we assume that $\tau \leqs  0$ on $\Sigma_I$
and in Theorem~\ref{T:main1} we only assume that $\tau$ satisfies \eqref{eq8:26jun13}.

Conditions \eqref{eq4:26jun13}-\eqref{eq5:26jun13} are nonlinear, Robin conditions on the inner boundary components for the conformal
factor $\phi$.  Equations \eqref{eq6:26jun13}-\eqref{eq7:26jun13} constitute Robin boundary conditions on the inner boundary
components for the momentum constraint, which we will discuss in more detail in the next section when we formulate the 
our boundary value problem.  See \cite{HoTs10a,SD04,DM05a} for a complete discussion of the boundary conditions stated
above.

\section{Overview of the Main Results}
\label{sec:main}

The main results for this paper concern the existence of far-from-CMC and near-CMC solutions to the conformal formulation 
of the Einstein constraint equations on a compact, $3$-dimensional manifold $\cM$
with boundary $\Sigma$.  We assume that 
\begin{equation}\label{eq6:11july13}
\partial\cM =  \Sigma_I \cup \Sigma_E,
\end{equation}
where the boundary segments $\Sigma_I$ and $\Sigma_E$ are decomposed 
further into finite segments as
\begin{equation*}
\Sigma_I = \bigcup_{i=1}^{M} \Sigma_{i},
\quad \text{and} \quad
\Sigma_E = \bigcup_{i=M+1}^N \Sigma_{i},~~~(M < N),
\quad \text{with} \quad
\Sigma_i \cap \Sigma_j = \emptyset \quad \text{if ~~$i\ne j$}. \nonumber
\end{equation*}
We show that under certain conditions, the following system
\begin{align}
-\Delta \phi + a_R\phi+a_{\tau}\phi^5-a_{\bw}\phi^{7}-a_{\rho}\phi^{-3} &= 0,\label{eq4:11july13}\\
\IL \biw+b_{\tau}\phi^6+\bb_j &= 0 , \label{eq2:8aug13}
\end{align}
subject to the boundary conditions
\begin{align}
\partial_{\nu}\phi+\frac12H\phi+\left(\frac12\tau -\frac14\theta_- \right)\phi^3-\frac14S(\nu,\nu)\phi^{-3} &= 0, ~~~~ \text{on $\Sigma_I$}, \label{eq5:11july13}\\
(\cL \biw)^{ab}\nu_b &= V^a,~~~~\text{on $\Sigma_I$}, \label{eq4:8aug13}\\
\partial_{\nu}\phi+c\phi &= g, ~~~~ \text{on $\Sigma_E$},\label{eq3:8aug13} \\
(\cL\biw)^{ab}\nu_b +C^a_b\biw^b &= 0, ~~~~ \text{on $\Sigma_E$}, \label{eq5:8aug13}
\end{align}
has a solution.  In~\eqref{eq5:11july13}-\eqref{eq5:8aug13} we assume that 

\begin{align}\label{eq1:12july13}
&c > 0, \quad g>0 \quad \mbox{ and } \quad g = \delta(c+ \cO(R^{-3})), \quad \delta > 0, \\ 
&\exists \alpha> 0 \quad \mbox{ such that } \quad \int_{\partial \cM} C_{ab}V^aV^b \geqs \alpha|V|_{L^2(\partial\cM)}, \quad \forall V \in {\bf L}^2. \nonumber
\end{align}
The coefficients $a_R, a_{\tau}, a_{\biw}$ and $a_{\rho}$ are defined in \eqref{CF-def-coeff2},
and $H$ and $\theta_-$ are the mean extrinsic curvature for the boundary and expansion factor for the incoming null geodesics.  The operators $\cL$ and $\IL$
are the conformal Killing operator and its divergence, defined in \eqref{CF-def-L}.  

\begin{remark}\label{rem5:27sep13}
In~\eqref{eq4:8aug13}, the vector $\bV$ will be chosen so that 
\begin{align}\label{eq2:27sep13}
V^a\nu_a = (2\tau+|\theta_-|/2)B^6 -\sigma(\nu,\nu),
\end{align}
where $B$ is a positive function.
The 
condition implies that $S(\nu,\nu) = (2\tau+|\theta_-|/2)B^6$, 
which is similar to the marginally trapped surface condition 
\eqref{eq8:26jun13}.
The general approach is to solve \eqref{eq6:11july13} with boundary 
conditions \eqref{eq5:11july13}-\eqref{eq5:8aug13}, and then argue that we can
choose $B> \|\phi\|_{\infty}$ sufficiently large so that the marginally trapped surface condition is satisfied.
\end{remark}

Now that we have clarified the statement of our problem,
we can formally state our main results as the following two theorems.
Our first main result (Theorem~\ref{T:main2} below)
covers both the Near-CMC and CMC cases.

\begin{theorem}\label{T:main2}
{\bf (Near-CMC and CMC $W^{s,p}$ solutions,
$p \in (1,\infty)$, $s \in (1+\frac3p,\infty)$)}
\\
Let $(\cM,h_{ab})$ be a $3$-dimensional, compact Riemannian manifold 
with boundary satisfying the conditions \eqref{eq6:11july13}.
Let $h_{ab} \in W^{s,p}(T^0_2\cM)$, where 
$p \in (1,\infty)$ and $s \in (1+\frac3p,\infty)$ are given.
With $d= s-\frac3p$, select $q$ and $e$ to satisfy: 

\begin{itemize}
\item[$\bullet$] $\frac1q \in (0,1)\cap[\frac{3-p}{3p},\frac{3+p}{3p}]\cap[\frac{1-d}{3},\frac{3+sp}{6p})$,
\item[$\bullet$] $e \in [1,\infty)\cap[s-1,s]\cap[\frac3q+d-1, \frac{3}{q}+d]$.
\end{itemize}
Let Eq~\eqref{eq1:12july13} hold and assume the data satisfies:
\begin{itemize}
\item[$\bullet$] $ \theta_-  \in W^{s-1-\frac1p,p}(\Sigma_I),$
\item[$\bullet$] $c,g \in W^{s-1-\frac1p,p}(\Sigma_E),$
\item[$\bullet$] $ C^a_b \in W^{e-1-\frac1q,q}(T^1_1\Sigma_E),$ 
\item[$\bullet$] $\bV\in \bW^{e-1,q},$~~~$ V^a\nu_a = (2\tau + |\theta_-|/2)B^6-\sigma(\nu,\nu), $ 
\item[$\bullet$] $\tau \in W^{s-1,p}~~\text{if $e \geqs 2$, and $\tau \in W^{1,z}\cap L^{\infty}$ otherwise, with $z= \frac{3p}{3+\max\{0,2-s\}p}$,}$
\item[$\bullet$] $~(4\tau^{\tivee}+|\theta|^{\tivee}) >0$ on $\Sigma_I$, 
\item[$\bullet$] $ \sigma \in W^{e-1,q},$
\item[$\bullet$] $ \rho \in W^{s-2,p}_+,$
\item[$\bullet$] $ \bj \in \bW^{e-2,q}.$
\end{itemize}
In addition, assume that $a_{\tau}^{\tivee} > \ttk_1$, 
where $a_{\tau}$ is defined in \eqref{CF-def-coeff2}, 
and where
\begin{align}
\ttk_1 = 2 C^2( \|\bib_{\tau}\|_{z})^2,
\end{align}
with $C$ is a positive constant.
If at least one of the following additional conditions hold:
\begin{itemize}
\item[(a)] $\rho^{\tivee} > 0$,

\item[(b)] $a_{\sigma}^{\tivee} $ is sufficiently large, 
\end{itemize}
where $a_{\sigma}$ is defined in \eqref{CF-def-coeff2}, 
then there exists a solution $\phi \in W^{s,p}$ with $\phi>0$ 
and $\bw \in \bW^{e,q}$ to equations \eqref{eq4:11july13}-\eqref{eq2:8aug13}
with boundary conditions \eqref{eq5:11july13}-\eqref{eq1:12july13}.  
Moreover, with an additional smallness assumption on $\tau$ on $\Sigma_I$, 
the marginally trapped surface boundary condition in \eqref{eq8:26jun13} 
is satisfied.
\end{theorem}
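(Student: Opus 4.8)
The plan is to run the coupled Schauder fixed-point scheme of \cite{HNT07b} inside the boundary-value framework of \cite{HoTs10a}. The key structural observation is that, although \eqref{eq4:11july13}--\eqref{eq5:8aug13} couple $\phi$ and $\biw$ both in the interior and along $\Sigma_I$, the choice \eqref{eq2:27sep13} of $\bV$ from Remark~\ref{rem5:27sep13} forces $S(\nu,\nu)=(2\tau+|\theta_-|/2)B^6$ to be \emph{independent of} $\biw$: contracting \eqref{eq4:8aug13} with $\nu$ gives $(\cL\biw)(\nu,\nu)=V^a\nu_a$, and adding $\sigma(\nu,\nu)$ to both sides cancels the $\biw$-dependence. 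Hence the nonlinear interior Robin condition \eqref{eq5:11july13} depends on $\phi$ alone, and the genuine coupling of the system is confined, exactly as in the closed case, to the term $a_{\biw}\phi^{-7}$ in \eqref{eq4:11july13}, with $a_{\biw}$ a pointwise quadratic expression in $\sigma+\cL\biw$.

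\textbf{Step 1: the momentum map.} Using coercivity of $\IL$ on $\bW^{e,q}$ together with the positivity hypothesis on $C_{ab}$ in \eqref{eq1:12july13}, for each $\phi$ in a fixed bounded, positive set the linear problem \eqref{eq2:8aug13}, \eqref{eq4:8aug13}, \eqref{eq5:8aug13} is uniquely solvable, producing $\biw=\cW(\phi)\in\bW^{e,q}$ with
\begin{equation*}
\|\cW(\phi)\|_{e,q}\;\leqs\;C\,\bigl(\,\|\phi\|_{\infty}^{6}+1\,\bigr),\qquad C=C(\bib_{\tau},\bib_{j},\bV).
\end{equation*}
The mapping and continuity properties of $\cW$, and the multiplication estimates that keep $a_{\biw}=\frac18(\sigma+\cL\biw)_{ab}(\sigma+\cL\biw)^{ab}$ in a space admissible as a coefficient of the Lichnerowicz equation, are exactly what the index constraints on $(q,e)$ encode; these are collected in \S\ref{sec:momentum}.

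\textbf{Step 2: global barriers.} Next I would construct a global subsolution $\phi_->0$, bounded below by a positive constant, and a global supersolution $\phi_+\geqs\phi_-$, for \eqref{eq4:11july13} with boundary conditions \eqref{eq5:11july13} and \eqref{eq3:8aug13}, valid \emph{uniformly} over all $\biw=\cW(\phi)$ with $\phi\in[\phi_-,\phi_+]$. The exterior normalization $g=\delta(c+\cO(R^{-3}))$ from \eqref{eq1:12july13} makes a suitable (perturbed) constant a supersolution along $\Sigma_E$, while the sign of $\frac12\tau-\frac14\theta_-$ on $\Sigma_I$ handles the nonlinear interior term; the subsolution is supplied by alternative (a) ($\rho^{\tivee}>0$) or alternative (b) ($a_{\sigma}^{\tivee}$ large). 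The near-CMC mechanism is used precisely here: the hypothesis $a_{\tau}^{\tivee}>\ttk_1=2C^2(\|\bib_{\tau}\|_{z})^2$ lets the $a_{\tau}\phi^5$ term dominate the $a_{\biw}\phi^{-7}$ contribution generated in Step 1, so that the supersolution inequality closes with no smallness on the coupling beyond this explicit bound. These constructions occupy \S\ref{sec:barriers}.

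\textbf{Step 3: Hamiltonian map and fixed point.} For each admissible $\biw$, the sub/supersolution existence theory for the Lichnerowicz equation with the present nonlinear-Robin/linear-Robin boundary data from \cite{HoTs10a} yields a solution $\phi\in[\phi_-,\phi_+]$ of class $W^{s,p}$; fixing the minimal such solution defines a map $\cS\colon\biw\mapsto\phi$. Set $T=\cS\circ\cW$ on the order interval $U=[\phi_-,\phi_+]$, which is closed, convex and bounded in $W^{s',p}$ for any $s'\in(1+\frac3p,s)$. By Step 2, $T(U)\subseteq U$; the two-derivative regularity gain in both the momentum and Hamiltonian solves makes $T(U)$ precompact in $W^{s',p}$, and continuity of $\cW$, of the Lichnerowicz solution map, and of the relevant superposition operators (again guaranteed by the $(q,e)$ index conditions) makes $T$ continuous there. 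Schauder's fixed-point theorem gives $\phi=T\phi$; since $\phi$ lies in the image of $T$ it is of class $W^{s,p}$, and $\phi\geqs\phi_->0$, so together with $\biw=\cW(\phi)\in\bW^{e,q}$ it solves \eqref{eq4:11july13}--\eqref{eq2:8aug13} with \eqref{eq5:11july13}--\eqref{eq5:8aug13}. The details are carried out in \S\ref{sec:proof}.

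\textbf{Step 4 and the main difficulty.} Finally, once $\phi$ is in hand, Remark~\ref{rem5:27sep13} applies: since $S(\nu,\nu)=(2\tau+|\theta_-|/2)B^6$ and $B$ may be fixed \emph{a posteriori} with $B>\|\phi\|_{\infty}\geqs\|\phi_+\|_{\infty}$ and $B^6$ dominating any prescribed power of $\|\phi\|_{\infty}$, a sufficiently small pointwise bound on $\tau$ on $\Sigma_I$ makes the marginally trapped surface inequality \eqref{eq8:26jun13} hold with the actual solution $\phi$ in place of $\phi_+$. I expect the principal obstacle to be Step 2: producing global barriers that are simultaneously compatible with the nonlinear interior Robin condition \eqref{eq5:11july13}, uniform in the momentum field $\biw=\cW(\phi)$, and valid for the roughest admissible metrics $h_{ab}\in W^{s,p}$, $s>1+\frac3p$, a regularity too low for the Green's-function shortcut of \cite{dM09}, so that the subsolution must be built explicitly, which is exactly where the hypotheses $a_{\tau}^{\tivee}>\ttk_1$ and (a)/(b) are consumed.
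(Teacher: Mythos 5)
Your overall architecture matches the paper's: momentum solve with the coercivity/G\aa rding estimates of \S\ref{sec:momentum}, global barriers in \S\ref{sec:barriers} (with the near-CMC hypothesis $a_{\tau}^{\tivee}>\ttk_1$ consumed exactly where you say, in closing the supersolution inequality against $a_{\biw}\leqs \ttK_{1\varepsilon}\|\phi_+\|_\infty^{12}+\ttK_{2\varepsilon}$), and a topological fixed point. Your structural observation that \eqref{eq4:8aug13} with \eqref{eq2:27sep13} forces $S(\nu,\nu)=(2\tau+|\theta_-|/2)B^6$ to be independent of $\biw$ is correct and is precisely what the paper exploits. However, your Step~3 contains a genuine gap. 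You take $T=\cS\circ\cW$ with $\cS$ the \emph{solution map} of the Lichnerowicz problem, "fixing the minimal such solution," and then assert continuity of $\cS$. For the Lichnerowicz equation with the nonlinear Robin condition \eqref{eq5:11july13}, uniqueness between barriers is not available in general, and continuity of a minimal-solution selection in the data $a_{\biw}$ is not automatic and is not proved anywhere in your argument. This is exactly why the paper does \emph{not} use the solution map: it uses the shifted Picard map $T^s(\phi,a_{\tbw})=-(A_L^s)^{-1}F^s_{\tbw}(\phi)$, whose continuity is immediate (Lemma~\ref{l:shift}), and pays for this with the monotonicity/shift machinery (Lemmas~\ref{l:shift1}, \ref{l:shiftsubsup}) and, since the shift requires continuous scalar and mean curvature, a conformal rescaling $\tilde h=\psi^4h$ with $\Tr_{\tiE}\partial_\nu\psi=0$ and the covariance Lemma~\ref{l:conf-inv}. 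If you insist on the solution-map route you must either prove uniqueness of the Lichnerowicz solution in $[\phi_-,\phi_+]$ under these boundary conditions or establish continuity of your selection; neither is done.

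A second error: you claim $U=[\phi_-,\phi_+]$ is bounded in $W^{s',p}$ for $s'>1+\frac3p$. It is not: the order cone \eqref{E:wkp-cone} is normal only for $s=0$, so order intervals are norm-bounded only in $L^q$ (Lemma~\ref{L:wsp-interval}). The paper must work with $U=[\phi_-,\phi_+]_{\tilde s,p}\cap\overline{B}_M$ and prove invariance of $T$ on the ball separately via the interpolation argument of Lemma~\ref{T:HC-ball-gen}; your argument could survive without boundedness of $U$ provided $T(U)$ is precompact and you verify $T(U)\subseteq U$, but as written the claim is false and the invariance-in-norm step is missing. Two smaller omissions: (i) you give no bootstrap for $s>2$ (the paper proves the case $s\leqs2$ first and then iterates the elliptic regularity of Corollary~B.4 of \cite{HoTs10a}); (ii) fixing $B$ "a posteriori" is circular, since $B$ enters $\bV$ and hence $\biw$, $a_{\biw}$, and the barriers — the paper's Remark~\ref{rem1:8oct13} resolves this by shrinking $\|\tau\|$ and $\|\theta_-\|$ on $\Sigma_I$ so that $\ttK_{2\varepsilon}$, and hence $\phi_+$, is unchanged as $B$ grows, which is the content of the final "smallness assumption on $\tau$" clause you should make explicit.
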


Our second main result (Theorem~\ref{T:main1} below)
covers three distinct non-CMC cases for
which both the Near-CMC and CMC assumptions are violated.
Case (a) in Theorem~\ref{T:main1}
puts no restrictions on the size of $\tau$ or $D\tau$, but
requires a smallness condition on the exterior boundary data that leads 
to a departure of the model from faithfully approximating asymptotically 
Euclidean boundaries, while preserving the trapped surface conditions.
Cases (b) and (c) remove the smallness condition on the exterior boundary
to faithfully preserve the asymptotically Euclidean model by introducing
smallness conditions on $\tau$ and/or $D\tau$ to satisfy the trapped 
surface conditions, yet still allow for violation of the near-CMC condition.

\begin{theorem}\label{T:main1}
{\bf (Non-CMC $W^{s,p}$ solutions,
$p \in (1,\infty)$, $s \in (1+\frac3p,\infty)$)}
\\
Let $(\cM,h_{ab})$ be a $3$-dimensional, compact Riemannian manifold 
with boundary satisfying the conditions \eqref{eq6:11july13}.
Let $h_{ab} \in W^{s,p}(T^0_2\cM)$ and be in $\cY^+$, 
where $p \in (1,\infty)$ and $s \in (1+\frac3p,\infty)$ are given.
With $d= s-\frac3p$, select $q$ and $e$ to satisfy: 

\begin{itemize}
\item[$\bullet$] $\frac1q \in (0,1)\cap[\frac{3-p}{3p},\frac{3+p}{3p}]\cap[\frac{1-d}{3},\frac{3+sp}{6p})$,
\item[$\bullet$] $e \in [1,\infty)\cap[s-1,s]\cap[\frac3q+d-1, \frac{3}{q}+d]$.
\end{itemize}
Let Eq~\eqref{eq1:12july13} hold and assume the data satisfies:
\begin{itemize}
\item[$\bullet$] $ \theta_-  \in W^{s-1-\frac1p,p}(\Sigma_I)\cap L^{\infty}(\Sigma_I),$
\item[$\bullet$] $c,g \in W^{s-1-\frac1p,p}(\Sigma_E),$
\item[$\bullet$] $ C^a_b \in W^{e-1-\frac1q,q}(T^1_1\Sigma_E),$ 
\item[$\bullet$] $\bV\in \bW^{e-1,q}$,~~~$V^a\nu_a = (2\tau+|\theta_-|/2)B^6-\sigma(\nu,\nu),$
\item[$\bullet$] $\tau \in W^{s-1,p}~~\text{if $s \geqs 2$, and $\tau \in W^{1,z}\cap L^{\infty}$ otherwise, with $z= \frac{3p}{3+\max\{0,2-s\}p}$,}$
\item[$\bullet$] $(4\tau^{\tivee}+|\theta|^{\tivee}) >0$ on $\Sigma_I$,
\item[$\bullet$] $ \sigma \in W^{e-1,q}~~\text{with $\|\sigma\|_{\infty}$ sufficiently small,}\hspace{3.75 in}$
\item[$\bullet$] $ \rho \in W_+^{s-2,p}\cap L^{\infty}\backslash \{0\},~~\text{with $\|\phi\|_{\infty}$ sufficiently small,}$
\item[$\bullet$] $ \bj \in \bW^{e-2,q}~~\text{with $\|\bj\|_{e-2,q}$ sufficiently small.}$
\end{itemize}
Additionally assume that at least one of the following hold:
\begin{itemize}

\item[(a)] $\delta>0$ is sufficiently small in~\eqref{eq1:12july13};

\item[(b)] $a_R^{\tivee} >0$ is sufficiently large;

\item[(c)] $\|\theta_-\|_{\infty}$ is sufficiently small, and $D\tau$ is sufficiently small. 

\end{itemize}
Then:

Case (a):
The function $B$ in~\eqref{eq2:27sep13} can be chosen so that the marginally trapped surface boundary condition in \eqref{eq8:26jun13} is satisfied, and subsequently there exists a solution ${\phi \in W^{s,p}}$ with ${\phi>0}$ and $\biw \in \bW^{e,q}$ to equations \eqref{eq4:11july13}-\eqref{eq2:8aug13} with boundary conditions \eqref{eq5:11july13}-\eqref{eq1:12july13}.

Cases (b) and (c): 
There exists a solution $\phi \in W^{s,p}$ with $\phi>0$ and $\biw \in \bW^{e,q}$ to equations \eqref{eq4:11july13}-\eqref{eq2:8aug13} with boundary conditions \eqref{eq5:11july13}-\eqref{eq1:12july13}.
With an additional smallness assumption on $\tau$ on $\Sigma_I$, the marginally trapped surface boundary condition in \eqref{eq8:26jun13} may be satisfied.
\end{theorem}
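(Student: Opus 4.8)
The plan is to recast \eqref{eq4:11july13}--\eqref{eq5:8aug13} as a fixed-point problem for the conformal factor and to run the Schauder-type argument of~\cite{HNT07b,HoTs10a}, pushing all case-dependent difficulty into the barrier construction. First I would fix, using the hypotheses of case (a), (b), or (c) together with the standing smallness assumptions on $\sigma$, $\rho$, $\bj$ and the positive Yamabe assumption $h\in\cY^+$, a pair of global sub- and super-solutions $\phi_-\leqs\phi_+$ for the Lichnerowicz equation \eqref{eq4:11july13} with the nonlinear interior Robin condition \eqref{eq5:11july13} and the exterior condition \eqref{eq3:8aug13}; these are produced in Section~\ref{sec:barriers}, and the point is that $\phi_\pm$ must be barriers \emph{simultaneously for every admissible $\biw$}. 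On the order interval $U=[\phi_-,\phi_+]$, a closed bounded convex subset of $W^{s,p}$, I would define $T=T_2\circ T_1$: given $\phi\in U$, let $T_1(\phi)=\biw\in\bW^{e,q}$ solve the momentum constraint \eqref{eq2:8aug13} with boundary conditions \eqref{eq4:8aug13}, \eqref{eq5:8aug13} (with $\bV$ built from a fixed scale $B$ via \eqref{eq2:27sep13}), and let $T_2(\biw)=\hat\phi\in W^{s,p}$, $\hat\phi>0$, solve \eqref{eq4:11july13} with $a_{\biw}$ evaluated at that $\biw$ and boundary conditions \eqref{eq5:11july13}, \eqref{eq3:8aug13}.

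For $T_1$ I would invoke the momentum-constraint theory of Section~\ref{sec:momentum}: the bilinear form of $\IL$ together with the exterior Robin coefficient $C^a_b$ is coercive by the second line of \eqref{eq1:12july13} (this removes the conformal Killing kernel), so Lax--Milgram and boundary elliptic regularity give a unique $\biw$ depending affinely and continuously on the data, with $\|\biw\|_{e,q}\leqs C(\|\phi\|_\infty^6\,\|b_\tau\|+\|\bb_j\|_{e-2,q}+\|\bV\|_{e-1,q})$. On $U$ one has $\phi\leqs\phi_+$, so this bound is uniform over $U$, hence so is the resulting bound on $a_{\biw}=\frac18|\sigma+\cL\biw|^2$; in cases (a) and (c) the standing smallness of $\sigma$, $\bj$ and of the boundary datum (through small $\delta$, resp.\ small $D\tau$ and $\|\theta_-\|_\infty$) keeps $a_{\biw}$ small, which is what legitimizes the global super-solution. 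For $T_2$ I would use the existence and comparison results for the Lichnerowicz equation with nonlinear interior Robin conditions from~\cite{HoTs10a}, recorded with the variable coefficients we need in Section~\ref{sec:hamiltonian}: because $\phi_\pm$ are barriers for every $\biw$ in the image of $U$ under $T_1$, the method of sub- and super-solutions yields $T_2(\biw)\in U$, and $T_2$ is continuous into $W^{s,p}$ by uniqueness and the accompanying a priori estimates. Hence $T(U)\subseteq U$.

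To extract a solution, compose continuities: $T_1:W^{s,p}\to\bW^{e,q}$ and $T_2:\bW^{e,q}\to W^{s,p}$ are continuous, while the embedding $W^{s,p}\hookrightarrow W^{s',p}$ is compact for suitable $s'\in(1+\frac3p,s)$. Viewing $T$ on $U$ with the $W^{s',p}$-topology — so that $U$ is precompact and $T(U)$ is bounded in $W^{s,p}$ — the Schauder-type fixed-point theorem used in~\cite{HNT07b} (with the order interval as the invariant set) produces $\phi\in U$ with $T(\phi)=\phi$; putting $\biw=T_1(\phi)$ gives a solution $(\phi,\biw)\in W^{s,p}\times\bW^{e,q}$ of \eqref{eq4:11july13}--\eqref{eq5:8aug13} with $\phi>0$. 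For case (a), the barrier construction makes $\|\phi_+\|_\infty$, hence $\|\phi\|_\infty$, as small as desired by shrinking $\delta$, so one fixes $B>\|\phi\|_\infty$ at the outset; since $\bV$ was chosen so that $S(\nu,\nu)=(2\tau+|\theta_-|/2)B^6$ on $\Sigma_I$ (Remark~\ref{rem5:27sep13}), inequality \eqref{eq8:26jun13} holds with $\phi_+=B$, giving the marginally trapped surface condition. In cases (b) and (c), $\|\phi\|_\infty$ is only bounded by the (possibly large) super-solution, so an additional smallness assumption on $\tau$ on $\Sigma_I$ is required to close \eqref{eq8:26jun13}, exactly as stated.

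The main obstacle is the global super-solution $\phi_+$ itself: without a near-CMC inequality one must dominate the coupling term $a_{\biw}\phi^{-7}$ — whose magnitude is controlled, via the momentum constraint, by $\phi_+$, by $\tau$, $\sigma$, $\bj$, and by the interior datum $\bV$ carrying the undetermined scale $B$ — using only $a_R\phi$, $a_\tau\phi^5$ and the nonlinear interior Robin term, and it is this balance that forces the three alternative hypotheses (small $\delta$; large $a_R^{\tivee}$; small $\|\theta_-\|_\infty$ and $D\tau$) and the two-stage handling of $B$. A pervasive secondary difficulty is the low regularity $h\in W^{s,p}$ with $s>1+\frac3p$: every product, every trace onto the boundary components, and every elliptic estimate (including those carrying the limited-regularity coefficients $C^a_b$, $c$, $g$, $\theta_-$) must stay inside the admissible multiplication and boundary-regularity ranges, which is precisely what the constraints on $q$ and $e$ in the statement encode.
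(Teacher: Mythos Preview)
Your outline follows the paper's strategy closely: set up the coupled fixed-point problem, push all case-dependent work into the barrier constructions of Section~\ref{sec:barriers}, verify continuity and invariance, and invoke the Schauder-type Theorem~\ref{T:FIXPT2}. Two points, however, are handled differently in the paper and one of your assertions is incorrect as stated.

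First, you take $T_2$ to be the \emph{solution map} for the Lichnerowicz problem and appeal to ``uniqueness and the accompanying a~priori estimates'' for continuity. The paper instead uses the shifted \emph{Picard map} $T^s(\phi,a_{\biw})=-(A_L^s)^{-1}F_{\biw}^s(\phi)$ of Lemma~\ref{l:shift}: its continuity is immediate, and monotonicity in $\phi$ (Lemma~\ref{l:shift1}) combined with Lemma~\ref{l:shiftsubsup} yields invariance on $[\phi_-,\phi_+]$ without ever invoking uniqueness for the nonlinear Robin problem. Because that shift argument needs $a_R$, $H$, $c$ continuous, the paper first conformally transforms $h$ to a metric with continuous positive scalar curvature via Theorem~2.2(c) of~\cite{HoTs10a} and the covariance Lemma~\ref{l:conf-inv}; you omit this step, and without it the monotone-shift construction is unavailable at the stated regularity $h\in W^{s,p}$ with $s>1+\tfrac3p$.

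Second, your claim that $U=[\phi_-,\phi_+]$ is a \emph{bounded} subset of $W^{s,p}$ is false: the order cone \eqref{E:wkp-cone} is normal only for $s=0$, so order intervals in $W^{s,p}$ need not be norm-bounded. The paper works in $Z=W^{\tilde s,p}$ with $\tilde s\in(1+\tfrac3p-\tfrac4\alpha,1+\tfrac3p)$ and takes $U=[\phi_-,\phi_+]_{\tilde s,p}\cap\overline{B}_M$, proving separately (Lemma~\ref{T:HC-ball-gen}) that $T^s$ maps this intersection into itself for suitably large $M$; the argument is first run for $s\leqs2$ and then bootstrapped to the full range (Step~5 of the proof of Theorem~\ref{T:main2}). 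Your compactness step via $W^{s,p}\hookrightarrow W^{s',p}$ is in the right spirit but does not, as written, produce a bounded invariant set. Finally, in case~(a) the paper resolves the circularity with $B$ by taking $B=\phi_+=\beta u$ directly (Lemma~\ref{lem:16oct13} and Remark~\ref{rem1:16oct13}), not by fixing $B>\|\phi\|_\infty$ after the fact.
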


\begin{remark}\label{rem3:8oct13}
We pointed out earlier that while Holst and Tsogtgerel in~\cite{HoTs10a} 
proved existence (and when possible, uniqueness) of solutions to the 
Lichnerowicz equation for a rather extensive collection of boundary 
conditions similar to those discussed in Section~\ref{subsec:boundary}, 
they did not attempt to prove existence of CMC solutions to the coupled system
\eqref{eq4:11july13}-\eqref{eq2:8aug13} satisfying the marginally trapped 
surface conditions \eqref{eq6:26jun13}-\eqref{eq8:26jun13}.
The dependence of the coefficient $S(\nu,\nu)$ on the size of the conformal 
factor $\phi$ as required by the marginally trapped surface conditions
leaves the equations coupled even in the CMC case; hence any results for the 
CMC case would require non-CMC techniques, and this was left for this 
second paper.
In the case of closed manifolds, the CMC condition decouples the equations
so that obtaining existence results for the Lichnerowicz equation for rough
metrics is essentially sufficient (modulo some well-known estimates for
the conformal Killing operator) to obtain analogous rough metric results
for the (decoupled) system, allowing for lower regularity solutions than in 
the non-CMC case.
This is due to the fact that the CMC decoupling in the case of closed manifolds
frees one from estimating $\biw$ in terms of $\phi$; such estimates require 
additional regularity assumptions on $h$.
Holst and Tsogtgerel in~\cite{HoTs10a} proved existence of Lichnerowicz 
solutions for metrics $h \in W^{s,p}$, with $s > \frac3p$; these results
are a direct analogue of the CMC results in~\cite{HNT07b}, made possible
by ignoring this boundary coupling that we now account for here.
In the case of compact manifolds with boundary, where the marginally trapped 
surface conditions produce a boundary coupling between the momentum and 
Hamiltonian constraints that remains even in the CMC setting, there appears 
to be little hope of obtaining a lower regularity CMC result using these
same techniques, along the lines of what was possible in~\cite{HNT07b,HoTs10a}.
Therefore, Theorem~\ref{T:main2} above will be sufficient for obtaining the 
roughest possible CMC solutions using our approach here, and we will not
state explicitly a separate CMC existence result.
\end{remark}

For the above problem, one views each $\Sigma_I = \bigcup_{i=1}^M \Sigma_i$ as the interior black hole regions contained within a compact
subset of an asymptotically Euclidean manifold.  The exterior boundary conditions \eqref{eq3:8aug13}-\eqref{eq5:8aug13} come
from the decay conditions \eqref{eq1:25jun13}-\eqref{eq1:11july13}.  We note that the condition $g = \delta(c+\cO(R^{-3}))$
in \eqref{eq1:12july13} is consistent with asymptotically Euclidean decay when the conformal factor $\phi$ tends to $\delta > 0$ at infinity.  
The interior boundary conditions \eqref{eq5:11july13}-\eqref{eq4:8aug13} on $\phi$ and $\biw$ are derived
from the marginally trapped surface condition \eqref{eq8:26jun13} discussed in Section~\ref{subsec:boundary} in the event that the expansion factor $\theta_-$ is specified.  
The components $\Sigma_E = \bigcup_{i=M+1}^N\Sigma_i$ represent the asymptotic
ends of this manifold and the exterior boundary conditions \eqref{eq3:8aug13}-\eqref{eq5:8aug13} are imposed so that the solutions on the compact region $\cM$ 
exhibit the correct asymptotic behavior.  
 
In order to solve the above problem, we will require the coupled fixed point 
solution framework used in \cite{HNT07b}, which is based on 
Theorem~\ref{T:FIXPT2} below.    
 The are two main difficulties we encounter in attempting to apply the solution framework of \cite{HNT07b} to this particular problem.
 The first difficulty lies in reformulating 
 the conformal equations \eqref{eq4:11july13}, with the boundary conditions \eqref{eq5:11july13}-\eqref{eq5:8aug13}, in a manner
 that allows us to utilize this framework. This requires adapting many of the supporting results in \cite{HNT07b} to
 incorporate our boundary problem and reformulating the boundary problem itself.  Following the approach taken in~\cite{HoTs10a}, we must 
 formulate the conformal equations with boundary conditions \eqref{eq5:11july13}-\eqref{eq5:8aug13} 
 as a nonlinear, fixed point problem on a certain closed, convex and bounded subset of a Banach space. 
 We must then show that the operator defining our fixed point problem is continuous, compact, and invariant on this subspace.
 In order to show that this operator is invariant, one requires what are known as global
 sub- and super-solutions for the above system.
 While this was done for the case on the Lichnerowicz equation on compact manifolds with boundary in~\cite{HoTs10a}, determining global sub- and super-solutions in the non-CMC setting for our boundary value problem is the other primary difficulty in applying the fixed point framework from~\cite{HNT07b}.
 
 In the following section, we restate the fixed point theorems used in \cite{HNT07b} for convenience.  Then
 the rest of the paper is dedicated to reformulating our problem in this framework, adapting results from \cite{HNT07b} and \cite{HoTs10a},
 and then determining global sub-and super-solutions.

\subsection{Coupled Fixed Point Theorems and Outline of Proofs}\label{sec1:26july13}

In Theorem~\ref{T:FIXPT2} below 
we give some abstract fixed-point results which form the basic
framework for our analysis of the coupled constraints.
These topological fixed-point theorems will be the main tool by which
we shall establish Theorems~\ref{T:main1}-\ref{T:main2} above.
They have the important feature that the required properties of the
abstract fixed-point operators $S$ and $T$ appearing in 
Theorem~\ref{T:FIXPT2} 
below can be established in the case of the Einstein constraints without
using the near-CMC condition; this is not the case for fixed-point
arguments for the constraints based on $k$-contractions
(cf.~\cite{jIvM96,ACI08}) and the Implicit Function Theorem (cf.~\cite{yCBjIjY00}) 
which require near-CMC conditions.
The bulk of the paper then involves establishing the
required properties of $S$ and $T$ without using the 
near-CMC condition, and finding suitable global barriers 
$\phi_-$ and $\phi_+$ for defining the required set $U$ 
that are similarly free of the near-CMC condition (when possible).

We now set up the basic abstract framework we will use.
Let $X$, $Y$, $\mathbb{X}$, and $\mathbb{Y}$ be Banach spaces,
and
let~ $F:X \times Y \to \mathbb{X}$ and~ $G:X \to \mathbb{Y}$
be (generally nonlinear) operators.
Let $A_{\tiIL}:Y \to \mathbb{Y}$ be a linear invertible operator,
and
let $A_{\tiL}:X \to \mathbb{X}$ be a linear invertible operator satisfying
the maximum principle, meaning that 
$A_{\tiL}u \leqs A_{\tiL}v \Rightarrow u \leqs v$.
The order structures on $X$ and $\mathbb{X}$ 
(and hence on their duals, which we denote respectively
as $X^*$ and $\mathbb{X}^*$) for interpreting the maximum 
principle will be inherited from ordered Banach spaces $Z$ and $\mathbb{Z}$
(see the Appendix of~\cite{HNT07b})
through the compact embeddings $X \hookrightarrow Z$
and $\mathbb{X} \hookrightarrow \mathbb{Z}$,
which will also make available compactness arguments.
To formulate our problem in this abstract setting, let $\gamma_{I}$ be the trace operator onto $\Sigma_I$ and $\gamma_{E}$ be the trace operator onto $\Sigma_E$.  As in \cite{HoTs10a} we define the following linear and
nonlinear operators:
\begin{align}
A_L(\phi) = &\left( \begin{array}{c} -\Delta \phi +a_R \phi \\ \gamma_I (\partial_{\nu}\phi) + \frac12H(\gamma_I\phi) \\ \gamma_E (\partial_{\nu} \phi) + c(\gamma_E\phi) \end{array}\right)\label{eq2:12july13}\\
A_{\IL}(\biw) = &\left( \begin{array}{c} \IL\biw\\ \gamma_I((\cL \biw)^{ab}\nu_b) \\ \gamma_E((\cL \biw)^{ab}\nu_b)+ C^a_b(\gamma_E\biw^b)\end{array}\right) \label{eq7:8aug13}\\
F(\phi,\biw) = &\left( \begin{array}{c}a_{\tau}\phi^5-a_{\biw}\phi^{-7}-a_{\rho}\phi^{-3} \\ \left(\frac12\gamma_I(\tau)- \frac14\theta_-\right)(\gamma_I(\phi))^3-\frac14S(\nu,\nu)(\gamma_I(\phi))^{-3} \\ -g \end{array}\right)\label{eq8:8aug13}\\
G(\phi) = &\left( \begin{array}{c} b_{\tau}\phi^6 +\bb_j \\ \bV \\ {\bf 0} \end{array}\right).\label{eq9:8aug13}
\end{align}
For $\phi \in W^{s,p}$ and $\biw \in W^{e,q}$ satisfying the exponent conditions of Theorems~\ref{T:main1}-\ref{T:main2}, we have that 
\begin{align}
A_L&: X = W^{s,p} \to W^{s-2,p}\times W^{s-1-\frac{1}{p},p}(\Sigma_I) \times  W^{s-1-\frac1p,p}(\Sigma_E) = \mathbb{X},\\
A_{\IL}&: Y = W^{e,q} \to \bW^{e-2,q} \times W^{e-1-\frac1q,q}(T\Sigma_I)\times W^{e-1-\frac1q,q}(T\Sigma_E) = \mathbb{Y}, \nonumber\\
F&: X \times Y = W^{s,p} \times \bW^{e,q} \to W^{s-2,p} \times W^{s-1-\frac{1}{p},p}(\Sigma_I) \times  W^{s-1-\frac1p,p}(\Sigma_E) = \mathbb{X},\nonumber\\
G&: X = W^{s,p} \to  \bW^{e-2,q} \times  W^{e-1-\frac1q,q}(T\Sigma_I)\times W^{e-1-\frac1q,q}(T\Sigma_E) = \mathbb{Y}.\nonumber
\end{align}
Then following the discussion in~\cite{HNT07b},
the coupled Hamiltonian and momentum constraints with boundary conditions
\eqref{eq5:11july13} can be viewed 
abstractly as coupled operator equations of the form:
\begin{eqnarray}
A_L(\phi) + F(\phi,w) & = & 0, 
\label{E:ham-abstract} \\
A_{\IL}( \bw) + G(\phi) & = & 0,
\label{E:mom-abstract}
\end{eqnarray}
or equivalently as the coupled fixed-point equations
\begin{eqnarray}
\phi & = & T(\phi,w),
\label{E:ham-abstract-fixpt} \\
w & = & S(\phi),
\label{E:mom-abstract-fixpt}
\end{eqnarray}
for appropriately defined fixed-point
maps $T : X \times Y \to X$ and $S : X \to Y$.
The obvious choice for $S$ is the
{\em Picard map} for~(\ref{E:mom-abstract})
\begin{equation}
  \label{E:mom-abstract-picard}
S(\phi) = -A_{\IL}^{-1} G(\phi),
\end{equation}
which also happens to be the solution map for~(\ref{E:mom-abstract}).
On the other hand, there are a number of distinct possibilities
for $T$, ranging from the solution map for~(\ref{E:ham-abstract}),
to the {\em Picard map} for~(\ref{E:ham-abstract}),
which inverts only the linear part of the operator in~(\ref{E:ham-abstract}):
\begin{equation}
  \label{E:ham-abstract-picard}
T(\phi,w) = -A_L^{-1}F(\phi,w).
\end{equation}

Assume now that $T$ is as in~(\ref{E:ham-abstract-picard}),
and (for fixed $w \in Y$) that
$\phi_-$ and $\phi_+$ are sub- and super-solutions
of the semi-linear operator equation~(\ref{E:ham-abstract})
in the sense that
$$
A_L(\phi_-) + F(\phi_-,w) \leqs 0,
\quad \quad
A_L(\phi_+ )+ F(\phi_+,w) \geqs 0.
$$
The linear operator $A_{\tiL}$ is invertible and satisfies the maximum principle,
which we will show in Section~\ref{sec:hamiltonian}.  These conditions
imply (see~\cite{HNT07b})
that for fixed $w \in Y$, 
$\phi_-$ and $\phi_+$ are also sub- and super-solutions of 
the equivalent fixed-point equation:
$$
\phi_- \leqs T(\phi_-,w),
\quad \quad
\phi_+ \geqs T(\phi_+,w).
$$
For developing results on fixed-point iterations in ordered
Banach spaces, it is convenient to work with maps which are
monotone increasing in $\phi$, for fixed $w \in Y$:
$$
\phi_1 \leqs \phi_2 \quad \Longrightarrow \quad T(\phi_1,w) \leqs T(\phi_2,w).
$$
The map $T$ that arises as the Picard map for a semi-linear 
problem will generally not be monotone increasing; 
however, if there exists a continuous, linear, monotone increasing
map $J : X \rightarrow \mathbb{X}$,
then one can always introduce a positive shift $s$ into the 
operator equation
$$
A_{\tiL}^s(\phi) + F^s(\phi,w) = 0,
$$
with $A_L^s = A_L + sJ$ and $F^s(\phi,w) = F(\phi,w) - sJ \phi$.
Since $s > 0$ the shifted operator $A_{\tiL}^s$ 
retains the maximum principle property of $A_{\tiL}$, and if $s$ is 
chosen sufficiently large, then $F^s$ is monotone decreasing in 
$\phi$. 
Under the additional condition on $J$ and $s$ that $A_{\tiL}^s$ is 
invertible, the shifted Picard map 
$$
T^s(\phi,w) = -(A_{\tiL}^s)^{-1} F^s(\phi,w)
$$
is now monotone increasing in $\phi$.  See Section~\ref{sec:hamiltonian} for
verification of these properties of $T^s$.
\medskip

We now give the main abstract existence result from \cite{HNT07b}
for systems of the 
form~(\ref{E:ham-abstract-fixpt})--(\ref{E:mom-abstract-fixpt}).
\begin{theorem}
{\bf (Coupled Fixed-Point Principle \cite{HNT07b})}
\label{T:FIXPT2}
Let $X$ and $Y$ be Banach spaces, 
and let $Z$ be a real ordered Banach space
having the compact embedding $X \hookrightarrow Z$.
Let $[\phi_-,\phi_+] \subset Z$ be a nonempty interval which
is closed in the topology of $Z$,
and set $U = [\phi_-,\phi_+] \cap \overline{B}_M \subset Z$
where $\overline{B}_M$ is the closed ball
of finite radius $M>0$ in $Z$ about the origin.
Assume $U$ is nonempty, and let the maps
\[
S:U \to \mathcal{R}(S) \subset Y,
\quad
\quad
T:U \times \mathcal{R}(S) \to U \cap X,
\]
be continuous maps.
Then there exist 
$\phi\in U \cap X$
and 
$w\in\mathcal{R}(S)$
such that
\[
\phi=T(\phi,w)\quad\textrm{and}\quad w=S(\phi).
\]
\end{theorem}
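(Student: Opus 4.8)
The plan is to collapse the coupled fixed-point system $\phi=T(\phi,w)$, $w=S(\phi)$ (equations \eqref{E:ham-abstract-fixpt}--\eqref{E:mom-abstract-fixpt}) into a single fixed-point equation and then apply a Schauder-type theorem in the ordered Banach space $Z$. First I would introduce the composite map $\Psi:U\to U\cap X$, $\Psi(\phi)=T(\phi,S(\phi))$, which is well defined because $S$ maps $U$ into $\mathcal{R}(S)$ and $T$ maps $U\times\mathcal{R}(S)$ into $U\cap X$. Any fixed point $\phi=\Psi(\phi)$ then lies automatically in $U\cap X$, and setting $w=S(\phi)\in\mathcal{R}(S)$ gives both $\phi=T(\phi,w)$ and $w=S(\phi)$, which is exactly the assertion. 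So the whole problem reduces to producing a fixed point of $\Psi$ inside $U$.

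Next I would verify the hypotheses of Schauder's fixed-point theorem for $\Psi$ regarded as a self-map of $U\subset Z$. The set $U=[\phi_-,\phi_+]\cap\overline{B}_M$ is nonempty by assumption; it is closed in $Z$ because the order interval is closed in $Z$ by hypothesis and $\overline{B}_M$ is closed; it is convex because an order interval is the intersection of the convex sets $\phi_-+Z_+$ and $\phi_+-Z_+$ (with $Z_+$ the order cone of $Z$) and $\overline{B}_M$ is convex; and it is bounded, being contained in $\overline{B}_M$. Continuity of $\Psi$ in the topology of $Z$ follows by composition: $\phi\mapsto(\phi,S(\phi))$ is continuous into $U\times\mathcal{R}(S)\subset Z\times Y$, $T$ is continuous into $X$, and the embedding $X\hookrightarrow Z$ is continuous. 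Finally, $\Psi(U)$ is contained in $U\cap X$, and since $X\hookrightarrow Z$ is a compact embedding this set is relatively compact in the topology of $Z$; hence $\Psi$ is a continuous, compact self-map of $U$. Schauder's theorem (applied, if one prefers, to the nonempty compact convex set $\overline{\mathrm{conv}}\,\Psi(U)\subseteq U$ supplied by Mazur's theorem) then produces $\phi\in U$ with $\Psi(\phi)=\phi$, which as noted lies in $U\cap X$ and yields the desired pair $(\phi,w)$ with $w=S(\phi)$.

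The verifications that $U$ is closed, convex and bounded and that $\Psi$ is continuous are routine bookkeeping. The one step that requires genuine care is the compactness: one must keep straight that continuity of $\Psi$ is asserted in the coarse topology of $Z$, whereas relative compactness of $\Psi(U)$ is extracted from the fact that the range of $T$ sits in the finer space $X$ together with the compactness of $X\hookrightarrow Z$. It is precisely this three-space structure — the coarse ordered space $Z$ carrying the order interval, the convexity, and the compactness, and the finer regularity space $X$ in which $T$ delivers its output — that lets the whole scheme run with no near-CMC type smallness hypothesis anywhere in the abstract argument; such hypotheses enter only in the later, separate tasks of constructing the barriers $\phi_-,\phi_+$ that define $U$ and of checking, for the concrete Hamiltonian and momentum operators, that the associated maps $S$ and $T$ have the continuity and range properties assumed here, which is the real work carried out in Sections~\ref{sec:hamiltonian}--\ref{sec:barriers}.
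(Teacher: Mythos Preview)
Your reduction to the composite map $\Psi(\phi)=T(\phi,S(\phi))$ followed by Schauder is exactly the argument in \cite{HNT07b}, to which the paper simply defers. The checks that $U$ is nonempty, closed, convex, and bounded in $Z$, and that $\Psi$ is continuous as a self-map of $U$ in the $Z$-topology, are all in order.

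The compactness step, however, is not complete as written. You infer from $\Psi(U)\subset U\cap X$ and the compact embedding $X\hookrightarrow Z$ that $\Psi(U)$ is relatively compact in $Z$. But a compact embedding sends \emph{$X$-bounded} sets to relatively compact sets in $Z$; merely knowing $\Psi(U)\subset X$ together with $\Psi(U)\subset\overline{B}_M\subset Z$ does not suffice. For instance, with $X=H^1(0,1)$ and $Z=L^2(0,1)$ the sequence $\sin(n\pi\,\cdot)$ lies in $X$, is $Z$-bounded, yet has no $Z$-convergent subsequence. What is missing is that $T(U\times\mathcal{R}(S))$ be bounded in the $X$-norm. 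In the concrete application this is precisely what estimate \eqref{e:T-cpt} of Lemma~\ref{l:shift} delivers: it bounds $\|T(\phi,a_{\biw})\|_{s,p}$ by a constant depending only on $\|\phi\|_{\tilde s,p}\leqs M$ and on $\|a_{\biw}\|_{s-2,p}$, the latter controlled uniformly over $\mathcal{R}(S)$ by the momentum estimates of \S\ref{sec:momentum}. Your write-up should make this $X$-boundedness explicit before invoking the compact embedding; once it is in place, your Mazur--Schauder finish goes through verbatim.
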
 
\begin{proof}
See \cite{HNT07b}.
\end{proof}

\begin{remark}
We make some brief remarks about Theorem~\ref{T:FIXPT2}
(see also the discussion following this results in~\cite{HNT07b}).
Theorem~\ref{T:FIXPT2} was specifically engineered for the analysis of 
the fully coupled Einstein constraint equations; it allows one to 
establish simple sufficient conditions on the map $T$ to yield the core 
invariance property by using barriers in an ordered Banach space 
(for a review of ordered Banach spaces, see the Appendix of~\cite{HNT07b}).
If the ordered Banach space $Z$ in Theorem~\ref{T:FIXPT2}
had a {\em normal} order cone, then the closed interval
$[\phi_-,\phi_+]$ would automatically be bounded in the norm
of $Z$ (see the Appendix of~\cite{HNT07b} for this result).
The interval by itself is also non-empty and closed by assumption,
and trivially convex (see the Appendix of~\cite{HNT07b}), so that 
Theorem~\ref{T:FIXPT2} would follow immediately from a variation
of the Schauder Theorem by simply taking $U = [\phi_-,\phi_+]$.
Note that the closed ball $\overline{B}_M$ in Theorem~\ref{T:FIXPT2}
can be replaced with any non-empty, convex, closed, and bounded subset 
of $Z$ having non-trivial intersection with the interval $[\phi_-,\phi_+]$.
\end{remark}

Following our approach in~\cite{HNT07b}, the overall argument used here to 
prove the non-CMC results in Theorems~\ref{T:main2} and \ref{T:main1}
using Theorem~\ref{T:FIXPT2} involves the following steps: 
\begin{itemize}
\item[{\em Step 1:}]{\em The choice of function spaces}.
      We will choose the spaces for use of Theorem~\ref{T:FIXPT2} as follows:
\begin{itemize}
\item $X=W^{s,p}$, with $p\in(3,\frac{\alpha+1}3),~\alpha > 8$, and $s(p)\in(1+\frac{3}{p},2)$.
      
\item $Y=\biW^{e,q}$, 
      with
      $e$ and $q$ as given in the theorem statements.
\item $Z=W^{\tilde{s},p}$, $\tilde{s} \in (1+\frac{3}{p}-\frac4{\alpha},1+\frac3p)$,
      so that $X=W^{s,p}\hookrightarrow W^{\tilde{s},p} = Z$ is compact.
\item $U=[\phi_-,\phi_+]_{\tilde{s},p} \cap \overline{B}_M
      \subset W^{\tilde{s},p} = Z$,
      with $\phi_-$ and $\phi_+$ global barriers
      (sub- and super-solutions, respectively) for
      the Hamiltonian constraint equation which satisfy the compatibility
      condition: $0 < \phi_- \leqs \phi_+ < \infty$.
\end{itemize}
\item[{\em Step 2:}]{\em Construction of the mapping $S$}.
      Assuming the existence of ``global'' weak sub- and super-solutions
      $\phi_-$ and $\phi_+$, and assuming the fixed function
      $\phi \in U = [\phi_-,\phi_+]_{\tilde{s},p} \cap \overline{B}_M
      \subset W^{\tilde{s},p} = Z$ is taken as data in the
      momentum constraint, we establish continuity and related properties
      of the momentum constraint solution map
      $S : U \to \mathcal{R}(S) \subset \biW^{e,q} = Y$.
      (\S\ref{sec:momentum})
\item[{\em Step 3:}]{\em Construction of the mapping $T$}.
      Again existence of ``global'' weak sub- and super-solutions
      $\phi_-$ and $\phi_+$, 
      with fixed $w \in \mathcal{R}(S) \subset \biW^{e,q} = Y$
      taken as data in the Hamiltonian constraint, we establish
      continuity and related properties of the Picard map
      $T: U \times \mathcal{R}(S) \to U \cap W^{s,p}$.
      Invariance of $T$ on $U=[\phi_-,\phi_+]_{\tilde{s},p} 
          \cap \overline{B}_M \subset W^{\tilde{s},p}$
      is established using a combination of {\em a priori} order cone bounds
      and norm bounds.
      (\S\ref{sec:hamiltonian})
\item[{\em Step 4:}]{\em Barrier construction}.
      Global weak sub- and super-solutions $\phi_-$ and $\phi_+$
      for the Hamiltonian constraint are explicitly constructed to
      build a nonempty, convex, closed, and bounded
      subset $U=[\phi_-,\phi_+]_{\tilde{s},p} \cap \overline{B}_M \subset W^{\tilde{s},p}$, 
      which is a strictly positive interval.
      These include variations of known barrier constructions which 
      require the near-CMC condition, and also some new barrier
      constructions which are free of the near-CMC condition.
      (\S\ref{sec:barriers})
      {\bf\em Note: This is the only place in the argument where
      near-CMC conditions may potentially arise.}
\item[{\em Step 5:}]{\em Application of fixed-point theorem}.
      The global barriers and continuity properties
      are used together with the abstract topological fixed-point
      result (Theorem~\ref{T:FIXPT2}) to 
      establish existence
      of solutions $\phi \in U \cap W^{s,p}$ and $w \in \biW^{e,q}$
      to the coupled system: $w=S(\phi), \phi=T(\phi,w).$ (\S\ref{sec:proof})
\item[{\em Step 6:}]{\em Bootstrap}.
      The above application of a fixed-point theorem is actually performed for some low regularity spaces,
      i.e., for $s\leqs2$ and $e\leqs2$
,      and a bootstrap argument is then given to extend the results to
      the range of $s$ and $p$ given in the statement of the Theorem.
      (\S\ref{sec:proof})
\end{itemize}

As was the case in~\cite{HNT07b,HoTs10a},
the ordered Banach space $Z$ plays a central role in Theorem~\ref{T:FIXPT2}
and its application here.
We will use $Z=W^{t,q}$, ~$t \geqs 0$, ~$1 \leqs q \leqs \infty$, 
with order cone defined as in~(\ref{E:wkp-cone}).
Given such an order cone, one can define the closed interval
$$
[\phi_-,\phi_+]_{t,q}
  = \{ \phi \in W^{t,q} : \phi_- \leqs \phi \leqs \phi_+ \} \subset W^{t,q},
$$
which as noted earlier is denoted more simply as
$[\phi_-,\phi_+]_q$ when $t=0$,
and as simply
$[\phi_-,\phi_+]$ when $t=0$, $q=\infty$.
If we consider the interval $U=[\phi_-,\phi_+]_{t,q} \subset W^{t,q} = Z$
defined using this order structure, for use with Theorem~\ref{T:FIXPT2}
it is important that $U$ be 
convex (with respect to the vector space structure of $Z$),
closed (in the topology of $Z$),
and (when possible) bounded (in the metric given by the norm on $Z$).
It will also be important that $U$ be nonempty as a subset of $Z$;
this will involve choosing compatible $\phi_-$ and $\phi_+$.
Regarding convexity, closure, and boundedness, we have the
following lemma from~\cite{HNT07b}.
\begin{lemma}
{\bf (Order cone intervals in $W^{t,q}$ \cite{HNT07b})}
\label{L:wsp-interval}
For $t \geqs 0$, $1 \leqs q \leqs \infty$, the set
$$
U = [\phi_-,\phi_+]_{t,q}
  = \{ \phi \in W^{t,q} : \phi_- \leqs \phi \leqs \phi_+ \} \subset W^{t,q}
$$
is convex with respect to the vector space structure of \ $W^{t,q}$
and closed in the topology of \ $W^{t,q}$.
For $t=0$, $1 \leqs q \leqs \infty$, the set $U$ is also bounded with 
respect to the metric space structure of $L^q=W^{0,q}$.
\end{lemma}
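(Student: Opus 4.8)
The plan is to deduce all three assertions from the single structural observation that the order cone $W^{t,q}_{+}$ defined in~\eqref{E:wkp-cone} is a closed convex cone in $W^{t,q}$, supplemented in the case $t=0$ by the fact that $L^q_{+}$ is moreover \emph{normal}. No deep input is needed beyond the definition of the pairing $\langle\cdot,\cdot\rangle$ and basic properties of $L^q$ norms.

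First I would treat convexity and closedness together. For each fixed $\varphi\in C^{\infty}_{+}$ the map $\phi\mapsto\langle\phi,\varphi\rangle$ is a bounded linear functional on $W^{t,q}$, since $\langle\cdot,\cdot\rangle$ is by construction the continuous extension of the $L^2$ pairing to $W^{t,q}\otimes W^{-t,q'}\to\R$ (with $\tfrac1q+\tfrac1{q'}=1$) and $\varphi\in C^{\infty}\subset W^{-t,q'}$. Hence $\{\phi:\langle\phi,\varphi\rangle\geqs 0\}$ is a closed half-space, and $W^{t,q}_{+}$, being the intersection over all $\varphi\in C^{\infty}_{+}$ of such half-spaces, is a closed convex cone. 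I would then write $U=(\phi_-+W^{t,q}_{+})\cap(\phi_+-W^{t,q}_{+})$, which exhibits $U$ as the intersection of a translate of a closed convex cone with a translate of its reflection; convexity with respect to the vector space structure and closedness in the norm topology of $W^{t,q}$ follow at once. (If $U=\emptyset$ the claim is vacuous, though in the application $\phi_-$ and $\phi_+$ are always chosen compatibly, i.e.\ so that $\phi_-\leqs\phi_+$.)

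For boundedness in the case $t=0$, the extra ingredient is that on $L^q=W^{0,q}$ the abstract order induced by $L^q_{+}$ coincides with the pointwise almost-everywhere order: $\langle\psi,\varphi\rangle=\int_{\cM}\psi\varphi\,dx\geqs 0$ for every $\varphi\in C^{\infty}_{+}$ holds exactly when $\psi\geqs 0$ a.e., by a routine density/localization argument. Thus $\phi\in U$ forces $\phi_-\leqs\phi\leqs\phi_+$ a.e., so $|\phi|\leqs|\phi_-|+|\phi_+|$ a.e., and monotonicity of the $L^q$-norm under pointwise domination together with the triangle inequality yields $\|\phi\|_q\leqs\|\phi_-\|_q+\|\phi_+\|_q<\infty$ (with the essential supremum replacing the integral when $q=\infty$). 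Hence $U$ lies in the closed ball of radius $\|\phi_-\|_q+\|\phi_+\|_q$ about the origin of $L^q$, which is boundedness; equivalently, this is the normality of the cone $L^q_{+}$ recorded in the remark following~\eqref{E:wkp-cone} (see also the Appendix of~\cite{HNT07b}).

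I do not expect a genuine obstacle. The only points deserving attention are (i) confirming that the pairing is continuous on $W^{t,q}$ so the cone is closed, which is immediate from the way $\langle\cdot,\cdot\rangle$ was introduced; and (ii) the identification of the dual-cone order with the a.e.\ order, which is clean for $t=0$ but \emph{fails} for $t>0$, where $W^{t,q}_{+}$ is not normal. That failure is precisely why boundedness is asserted only for $t=0$, and why Theorem~\ref{T:FIXPT2} cuts the order interval down by an auxiliary norm ball $\overline{B}_M$ in order to recover a bounded set $U$ in the general case.
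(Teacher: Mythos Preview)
Your proof is correct; the paper itself does not give an argument here but simply cites~\cite{HNT07b}, and what you have written is the standard proof one would find there (closedness and convexity from the representation of $W^{t,q}_{+}$ as an intersection of closed half-spaces, boundedness for $t=0$ from normality of the $L^q$ order cone via the pointwise a.e.\ identification). Your closing remark about why the boundedness claim cannot extend to $t>0$ is also accurate and matches the paper's own commentary following the lemma.
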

\begin{proof}
See~\cite{HNT07b}.
\end{proof}

\section{Momentum Constraint}
\label{sec:momentum}
In this section we fix a particular scalar function $\phi\in
W^{{s},p}$ with ${s}p>3$, and consider separately the momentum constraint
equation~(\ref{eq2:8aug13}) with boundary conditions \eqref{eq4:8aug13}-\eqref{eq5:8aug13} 
to be solved for the vector valued function $\biw$. The result is a linear elliptic system of
equations for this variable $\biw=\biw_{\phi}$.  
Our goal is not only to develop some existence results for the momentum
constraint, but also to derive the estimates for the momentum constraint
solution map $S$ that we will need later in our analysis of the coupled
system.

Let $(\cM, h)$ be a 3-dimensional Riemannian manifold, where $\cM$
is a smooth, compact manifold with boundary satisfying \eqref{eq6:11july13} with $p\in(1,\infty)$, $s\in(1+\frac3p,\infty)$, and $h\in W^{s,p}$ is a
positive definite metric. With 
\begin{align}\label{eq1:13aug13}
\textstyle
q\in(1,\infty),
\qquad\textrm{and}\qquad
e\in(2-s,s]\cap(-s+\frac3p-1+\frac3q,s-\frac3p+\frac3q], 
\end{align}
fix the source terms 
\begin{align}\label{eq2:13aug13}
&\bib_{\tau}, \bib_{j} \in \biW^{e-2,q}, \bV \in \bW^{e-1,q}, ~~~
\text{and} ~~~ C^{a}_b \in W^{e-1-\frac1q,q}(T^1_1\Sigma_E)\cap L^{\infty}(T^1_1\Sigma_E),
\end{align}
where $C^{a}_b$ satisfies \eqref{eq1:12july13}.
Fix a function $\phi\in W^{{s},p}$, and let
\begin{align*}
&A_{\IL}: W^{e,q} \to \bW^{e-2,q} \times W^{e-1-\frac1q,q}(T\Sigma_I)\times W^{e-1-\frac1q,q}(T\Sigma_E), \nonumber\\
&G:W^{s,p} \to  \bW^{e-2,q} \times  W^{e-1-\frac1q,q}(T\Sigma_I)\times W^{e-1-\frac1q,q}(T\Sigma_E).\nonumber
\end{align*}
be as in \eqref{eq7:8aug13} and \eqref{eq9:8aug13}.

The momentum constraint equation with Robin boundary conditions is the
following: find an element $\biw\in\biW^{e,q}$ that is a solution of
\begin{equation}
\label{MC-LYm1}
A_{\IL}\biw + G(\phi) = 0.
\end{equation}

\subsection{Weak Formulation}

In order to show that~\eqref{MC-LYm1} has a solution, we employ the Lax-Milgram Theorem in the case
when $p\in (1,\infty)$, $s>1+ \frac3p$, $e=1$ and $q=2$ to show that the weak formulation has a solution.  We will then utilize {\em a priori} estimates
to show that solutions exist for the exponent ranges specified above.

Using the volume form given by $h$ and integration by parts, the weak formulation of \eqref{MC-LYm1} is then to
find $\biw \in {\bf W}^{1,2}$ such that for all $\biv \in {\bf W}^{1,2}$,
\begin{align}\label{eq1:16july13}
\int_{\cM}(\cL \biw)_{ab}(\cL \biv)^{ab}~dx +  \int_{\Sigma_E} C^a_b\gamma_E\biw^b\gamma_E\biv_a~ds &= -\int_{\cM} (b^a_{\tau}\phi^6+b^a_j)\biv_a~dx \nonumber \\
& \quad + \int_{\Sigma_I} \gamma_I\bV^a\gamma_I\biv_a~ds, 
\end{align}
where $dx$ is the measure induced by $h$ and $ds$ is the measure induced by the metric on $\partial\cM$ that is inherited from $h$.  

\begin{remark}\label{rem1:16july13}
We observe that the bilinear form \eqref{eq1:16july13} is well defined for $\biv \in \bW^{1,2}$, given that $\gamma_i\biv \in W^{\frac12,2}(T\Sigma_i)$, $i\in \{I,E\}$,
and $\bV, (\bb_{\tau}\phi^6+\bb_j) \in {\bf L}^2$
and $C^a_b \in L^{\infty}(T^1_1\Sigma_I)$.  
\end{remark}

Letting 
\begin{align}\label{eq2:15july13}
a_{\cL}(\biw, \biv) = \int_{\cM}(\cL \biw)_{ab}(\cL \biv)^{ab}~dx +  \int_{\Sigma_E} C^a_b\gamma_E\biw^b\gamma_E\biv_a~ds, 
\end{align}
and
\begin{align}\label{eq8:15july13}
\bif(\biv) = -\int_{\cM} (b^a_{\tau}\phi^6+b^a_j)\biv_a~dx + \int_{\Sigma_I} \gamma_I\bV^a\gamma_I\biv_a~ds,
\end{align}
we say that $A_{\IL}(\biw) +G(\phi) = 0~$ weakly if $~a_{\cL}(\biw,\biv) = f(\biv)~$ for all $\biv \in \bW^{1,2}$.  

Our approach to proving that~\eqref{MC-LYm1} is weakly solvable will be to verify that the 
shifted, bounded linear operator 
\begin{align}\label{eq1:15july13}
a^s_{\cL}(\biw,\biv) = a_{\cL}(\biu,\biv)+s (\biw,\biv),
\end{align}
is coercive for some $s>0$.  We will then apply the Lax-Milgram Theorem and Riesz-Schauder Theory 
to conclude that \eqref{MC-LYm1} has a unique, weak solution in $\bW^{1,2}$. 

\subsubsection{G\aa rding's Inequality}

The primary inequality that we will need to establish in order to show that \eqref{eq1:15july13} is coercive is 
the G\aa rding inequality.  We just mention here that the G\aa rding type inequality for the
particular case of the space $W^{1,2}_0$ can be proven for a
general class of bilinear forms called strongly
elliptic. See~\cite{Zeidler-IIA}, exercise 22.7b, page 396. A bilinear
form $a: W_0^{1,2}\times W_0^{1,2}\to\R$ with action
\begin{align*}
a(u,v)&= \int_{\cM} a_{ac_1\cdots c_nbd_1\cdots d_n}
\nabla^a u^{c_1\cdots c_n} \nabla^b v^{d_1\cdots d_n} \,dx\\
&\quad +\int_{\cM} b_{c_1\cdots c_nd_1\cdots d_n}
u^{c_1\cdots c_n} v^{d_1\cdots d_n} \,dx
\end{align*}
is {\bf strongly elliptic} iff there exists a positive constant
$\alpha_0$ such that
\[
a_{ac_1\cdots c_nbd_1\cdots d_n} \zeta^a\zeta^b 
u^{c_1\cdots c_n} u^{d_1\cdots d_n} \geqs\alpha_0\,
\zeta_a\zeta^a\, u_{c_1\cdots c_n} u^{c_1\cdots c_n}
\]
for all vectors $\zeta \in \R^3$ and all tensors $u_{c_1\cdots
c_n}\in\R^{3n}$.  Notice that the bilinear form
$a_{\cL} :\biW^{1,2}_{0}\times\biW^{1,2}_0\to\R$ given by
$a_{\cL}(\biu,\biv) = (\cL\biu,\cL\biv)$ is strongly elliptic, as the
following calculation shows:
\begin{gather*}
\bigl[\zeta^a u^c + \zeta^c u^a 
-\frac{2}{3} h^{ac} (\zeta_d u^d)\bigr]
\bigl[\zeta_a u_c + \zeta_c u_a 
-\frac{2}{3} h_{ac} (\zeta_e u^e)\bigr] 
\\
=2 (\zeta_a\zeta^a)(u_bu^b) 
+ \frac{2}{3}\, (\zeta_au^a)^2 
\geqs 2 (\zeta_a\zeta^a)(u_bu^b). 
\end{gather*}
Hence, a G\aa rding type inequality is satisfied by the bilinear form
$a_{\cL}$ on the Hilbert space $\biW^{1,2}_0$. However, this space is
too small in our case where we need the same inequality on the space
$\biW^{1,2}$. 

We extend the G\aa rding inequality to the space $\biW^{1,2}$ in the following
two results.
\begin{lemma}
\label{L:GKI}{\bf (G\aa rding's inequality for $\cL$)}
Let $(\cM,h_{ab})$ be a 3-dimensional, compact, Riemannian manifold,
with Lipschitz boundary, and with a metric $h\in W^{s,p}, p \in (1,\infty), s \in (1+\frac3p,
\infty)$. 
Then, there exists a positive constant $k_0$
such that the following inequality holds
\begin{equation}
\label{GKI}
k_0 \,\|\tbu\|^2_{1,2} \leqs \|\tbu\|_2^2 + \|\cL\tbu\|_2^2
\qquad \forall \tbu \in \tbW^{1,2}.
\end{equation}
\end{lemma}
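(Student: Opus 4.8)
The plan is to localise the inequality with a finite partition of unity, to neutralise the low regularity of the metric by freezing its coefficients at a point on each small chart, and to reduce the resulting constant--coefficient local estimates to the pointwise ellipticity computation for $\cL$ already displayed above; the one genuinely delicate point will be the model estimate near the boundary.

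First I would fix a finite atlas $\{(U_i,\varphi_i)\}$ of $\cM$ in which each $\varphi_i(U_i)$ is either a Euclidean ball (interior charts) or a half--ball in $\R^3_+$ (boundary charts, available since $\partial\cM$ is Lipschitz), together with a subordinate partition of unity $\{\psi_i\}$. Since $\nabla\psi_i$ is bounded, the commutator $[\cL,\psi_i]$ is a bounded zeroth--order operator, so $\|\cL(\psi_i u)\|_2\leqs\|\psi_i\,\cL u\|_2+C\|u\|_2$, while finite overlap of the charts gives both $\|u\|_{1,2}^2\leqs\sum_i\|\psi_i u\|_{1,2}^2+C\|u\|_2^2$ and $\sum_i\|\psi_i\,\cL u\|_2^2\leqs C\|\cL u\|_2^2$. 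Hence it is enough to prove, for each $i$, a local estimate $\|\psi_i u\|_{1,2}^2\leqs C_i\bigl(\|\psi_i u\|_2^2+\|\cL(\psi_i u)\|_2^2\bigr)$ with the constants $C_i$ under control; summing these and absorbing the resulting $\|u\|_2^2$ terms then produces the asserted $k_0>0$.

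For the local estimate, transport $v:=(\psi_i u)\circ\varphi_i^{-1}$ to $\R^3$ (interior) or $\R^3_+$ (boundary), where the metric becomes $\tilde h\in W^{s,p}$. Since $sp>3$ the metric $\tilde h$ is continuous, and since $s-1>\tfrac3p$ its Christoffel symbols lie in $W^{s-1,p}\hookrightarrow L^\infty$, so the zeroth--order part of the transported operator $\tilde\cL$ is a bounded multiplication operator and is therefore genuinely lower order. Writing $\tilde\cL v=\cL_{\tilde h(x_0)}v+\bigl(\text{a term linear in $\nabla v$ with coefficient $\tilde h-\tilde h(x_0)$}\bigr)+\bigl(\text{a term linear in $v$ with bounded coefficient}\bigr)$ and shrinking $U_i$ so that the oscillation of $\tilde h$ on it is as small as we please, the middle term acquires an arbitrarily small coefficient. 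A linear change of coordinates diagonalising $\tilde h(x_0)$ turns $\cL_{\tilde h(x_0)}$ into the flat conformal Killing operator $\cL_0$, for which Plancherel applied to the pointwise bound already displayed (with $\zeta=i\xi$ and $u=\widehat v$ it reads $|\widehat{\cL_0 v}(\xi)|^2\geqs 2|\xi|^2|\widehat v(\xi)|^2$) gives $2\|\nabla v\|_2^2\leqs\|\cL_0 v\|_2^2$, hence $\|v\|_{1,2}^2\leqs\|v\|_2^2+\tfrac12\|\cL_0 v\|_2^2$ on all of $\R^3$. On an interior chart $v$ is compactly supported, so this applies verbatim; feeding it back through the perturbation expansion and absorbing the small $\nabla v$ error closes the interior case with a constant that depends only on the ellipticity bounds of $h$.

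The hard part is the boundary chart, where $v\in\bW^{1,2}(\R^3_+)$ is compactly supported in $\overline{\R^3_+}$ but does not vanish on the flat face $\{x_3=0\}$, so that neither the whole--space Fourier argument nor the $\bW^{1,2}_0$ inequality recalled above applies directly. Here I would prove the half--space ``conformal Korn'' estimate $\|\nabla v\|_2\leqs C_0\bigl(\|v\|_2+\|\cL_0 v\|_2\bigr)$ on a half--ball $B^+$ by a compactness argument: were it false, there would be a sequence bounded in $\bW^{1,2}(B^+)$ with $\|v_k\|_2+\|\cL_0 v_k\|_2\to 0$; by the compact embedding $\bW^{1,2}(B^+)\hookrightarrow {\bf L}^2(B^+)$ a subsequence converges in $L^2$, its limit is simultaneously a conformal Killing field on $B^+$ and identically zero, and the finite dimensionality of the space of conformal Killing fields on $B^+$ together with the ellipticity of $\cL_0$ (injectivity of its symbol, which is exactly the displayed pointwise bound) forces $\cL_0$ to have closed range and hence $v_k\to 0$ in $\bW^{1,2}(B^+)$, a contradiction; alternatively one can run a tangential Fourier transform and analyse the resulting normal ODE system using the same symbol inequality. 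Because $C_0$ depends only on the (uniform) ellipticity bounds of $h$, the boundary--chart local estimate then follows exactly as in the interior case, and assembling the finitely many local estimates as above yields a single $k_0>0$ with $k_0\|u\|_{1,2}^2\leqs\|u\|_2^2+\|\cL u\|_2^2$.
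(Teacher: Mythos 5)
Your overall architecture (partition of unity, freezing coefficients, interior estimate by Plancherel from the displayed symbol bound, then a boundary model estimate) is the standard one, and the interior half of it is fine. Note that the paper itself offers no proof at all here --- it simply cites Dain's generalized Korn inequality [sD06] --- so the entire mathematical content of the lemma is concentrated in exactly the step you flag as ``the hard part,'' and that is where your argument has a genuine gap.

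The compactness argument for the half-ball estimate is circular as written. From $v_k\rightharpoonup 0$ weakly in $\bW^{1,2}(B^+)$, $v_k\to 0$ in ${\bf L}^2$, and $\cL_0 v_k\to 0$ in ${\bf L}^2$, you cannot conclude $\nabla v_k\to 0$ in ${\bf L}^2$ unless you already possess an a priori bound of the form $\|\nabla v\|_2\leqs C(\|\cL_0 v\|_2+\|v\|_2)$ --- which is the inequality being proven. Saying that ``ellipticity forces $\cL_0$ to have closed range'' does not repair this: closed range of $\cL_0:\bW^{1,2}(B^+)\to {\bf L}^2(B^+)$ is \emph{equivalent} to the desired estimate modulo the kernel, and injectivity of the real symbol (the strong ellipticity computation displayed in the paper) is genuinely insufficient for coercivity on functions that do not vanish on the boundary. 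The two-dimensional conformal Killing operator is the standard counterexample: its real symbol is injective, yet its kernel on a disk (holomorphic vector fields) is infinite-dimensional and the Korn-type inequality fails. What saves the day in dimension $3$ is a strictly stronger algebraic property ($C$-ellipticity, equivalently finite-dimensionality of the kernel on domains), and the implication ``finite-dimensional kernel $\Rightarrow$ boundary Korn inequality'' is itself a nontrivial theorem, not a consequence of Peetre's lemma run backwards.

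The missing ingredient is the one Dain actually uses: the second-order identity expressing every component of $\partial_i\partial_j u^k$ as a linear combination of first derivatives of $(\cL_0 u)_{ab}$ (for the trace-free symmetric gradient in $n\geqs 3$ one first recovers $\partial(\operatorname{div}u)$ from $\partial(\cL_0 u)$, then runs the classical Korn identity), combined with the Ne\v{c}as--Lions lemma ($f\in H^{-1}(\Omega)$ and $\nabla f\in H^{-1}(\Omega)$ imply $f\in L^2(\Omega)$ on bounded Lipschitz domains). That route also sidesteps a secondary problem with your reduction: flattening a merely Lipschitz boundary piece onto a half-ball is only a bi-Lipschitz change of variables, so the pulled-back metric is only $L^\infty$ and its oscillation cannot be made small by shrinking the chart, whereas the Ne\v{c}as-lemma argument works directly on the Lipschitz domain. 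If you either import Dain's inequality as the paper does, or replace the compactness step by the $\partial\partial u$-identity-plus-Ne\v{c}as argument, the rest of your localization scheme goes through.
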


\Proof {\it (Lemma~\ref{L:GKI}.)~}
See~\cite{sD06} for the proof.
\qed

\medskip

Using Lemma~\ref{L:GKI}, we can immediately 
establish the same type of inequality for 
the bilinear form $a_{\cL}(\biu,\biu)$ in~\eqref{eq2:15july13} provided that
$C^a_b$ is positive definite in the sense of~\eqref{eq1:12july13}.
\begin{corollary}
\label{C:GI}{\bf(G\aa rding's inequality for $a_{\cL}(\biw,\biv)$)}
Let $(\cM,h_{ab})$ be a 3-dimensional, compact, Riemannian manifold,
with Lipschitz boundary and with a metric $h\in W^{s,p},\\
 p \in (1,\infty), s \in (1+\frac3p,\infty)$. 
Let $a_{\cL}(\biu,\biu)$ be the bilinear form defined 
in~\eqref{eq2:15july13} for a positive definite tensor $C^a_b\in
{\bf L}^{\infty}(T^1_1(\Sigma_E))$ in the sense of~\eqref{eq1:12july13}.
Then, there exists a positive
constant $k_1$ such that the following inequality holds
\begin{equation}
\label{GI}
k_1 \,\|\biu\|^2_{1,2} \leqs \|\biu\|_2^2 + a_{\cL}(\biu,\biu)
\qquad \forall \biw \in \bW^{1,2}.
\end{equation}
\end{corollary}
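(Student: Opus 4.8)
The plan is to deduce Corollary~\ref{C:GI} directly from Lemma~\ref{L:GKI}, the point being that the only new term in $a_{\cL}$ relative to $\|\cL\tbu\|_2^2$ is a nonnegative boundary integral and can therefore simply be discarded in the lower bound. First I would check that the boundary term makes sense: for $\biu \in \bW^{1,2}$ the trace onto the Lipschitz boundary component $\Sigma_E$ satisfies $\gamma_E\biu \in W^{1/2,2}(T\Sigma_E) \hookrightarrow L^2(T\Sigma_E)$, so that, together with $C^a_b \in {\bf L}^{\infty}(T^1_1\Sigma_E)$, the integral $\int_{\Sigma_E} C^a_b \gamma_E\biu^b \gamma_E\biu_a \, ds$ is well defined (this is exactly the observation recorded in Remark~\ref{rem1:16july13}).

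Next I would invoke the positive-definiteness hypothesis \eqref{eq1:12july13} on $C^a_b$, which gives
\[
\int_{\Sigma_E} C^a_b \gamma_E\biu^b \gamma_E\biu_a \, ds \geqs \alpha \|\gamma_E\biu\|_{L^2(\Sigma_E)}^2 \geqs 0,
\]
hence $a_{\cL}(\biu,\biu) = \|\cL\biu\|_2^2 + \int_{\Sigma_E} C^a_b \gamma_E\biu^b \gamma_E\biu_a \, ds \geqs \|\cL\biu\|_2^2$ for every $\biu \in \bW^{1,2}$. Adding $\|\biu\|_2^2$ to both sides and applying Lemma~\ref{L:GKI} to the resulting right-hand side yields $\|\biu\|_2^2 + a_{\cL}(\biu,\biu) \geqs \|\biu\|_2^2 + \|\cL\biu\|_2^2 \geqs k_0\|\biu\|_{1,2}^2$, so the stated inequality \eqref{GI} holds with the explicit choice $k_1 = k_0$.

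Since the argument is this short, there is essentially no obstacle beyond bookkeeping: the only step requiring any care is verifying that the boundary integral is well defined and has the right sign, which is immediate from the trace theorem on Lipschitz domains plus the sign hypothesis on $C$, and the rest is a one-line comparison with Lemma~\ref{L:GKI}. I would remark in passing that if a sharper Gårding-type inequality controlling $\|\gamma_E\biu\|_{L^2(\Sigma_E)}$ were desired, one would retain the lower bound $\alpha\|\gamma_E\biu\|_{L^2(\Sigma_E)}^2$ instead of discarding it; but the form stated here, with $k_1 = k_0$, is all that is needed for the coercivity of $a^s_{\cL}$ in \eqref{eq1:15july13}.
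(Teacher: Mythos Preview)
Your proof is correct and follows essentially the same approach as the paper: write $a_{\cL}(\biu,\biu) = \|\cL\biu\|_2^2 + \langle C\gamma_E\biu,\gamma_E\biu\rangle_{\Sigma_E}$, drop the boundary term using the positive-definiteness hypothesis on $C^a_b$, and then invoke Lemma~\ref{L:GKI}. You supply slightly more detail (well-definedness of the boundary integral via the trace theorem, and the explicit identification $k_1 = k_0$), but the argument is the same.
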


\Proof {\it (Corollary~\ref{C:GI}.)~}
The definition of the bilinear form in~\eqref{eq2:15july13} implies that
\[
a_{\cL}(\biu,\biu) = \|\cL\biu\|_2^2
+ \langle C \gamma_E \biu,\gamma_E\biu\rangle_{\Sigma_I},
\quad \forall \biu\in \biW^{1,2}.
\]
Lemma~\ref{L:GKI} and the fact that
$C^a_b$ is positive definite imply the result.
\qed

\medskip

The above results combined with Riesz-Schauder theory allow us to conclude 
that~\eqref{eq2:15july13} is weakly solvable in 
Theorem~\ref{thm1:15july13} below.
We note that while the positivity assumption~\eqref{eq1:12july13} used
in Corollary~\ref{C:GI} can be removed by using a more complex proof involving
a trace inequality, the positivity assumption~\eqref{eq1:12july13}
is essential to showing injectivity in Theorem~\ref{thm1:15july13} below.
However, the positivity property is available in the practical
situations of interest such as~\eqref{eq2:11july13}.

\begin{theorem}{\bf (Momentum constraint)}
\label{thm1:15july13}
Suppose $(\cM,h)$ is a connected, $3$-dimensional manifold with boundary satisfying \eqref{eq6:11july13} and with $h\in W^{s,p}$, $p \in (1,\infty)$, $s>1+\frac3p$.  
Assume that the data $\bb_{\tau}, \bb_j \in {\bf W}^{-1,2}$, $\bV \in {\bf L}^{2}$, $\sigma\in L^2(T^0_2\cM)$, and let the tensor $C^a_b \in L^{\infty}(T^1_1(\Sigma_i))$ be positive definite in the sense of~\eqref{eq1:12july13}.
Then there exists a unique solution to the weak formulation of the momentum constraint \eqref{eq2:15july13},
and there exists a constant $C>0$ independent $\tau, \bj, \bV$ and $ \sigma$ such that the following estimate holds:
\begin{align}\label{eq3:15july13}
\|\bw\|_{1,2} \leqs & C \left( \|\bb_{\tau}\phi^6\|_{-1,2} +\|\bb_j\|_{-1,2} + \|\Tr_{\tiI}\bV \|_{-\frac12,2;\Sigma_I}\right). 
\end{align}
\end{theorem}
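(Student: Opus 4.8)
The plan is to solve the weak momentum constraint $a_{\cL}(\biw,\biv) = \bif(\biv)$ for all $\biv\in\bW^{1,2}$ by first solving a shifted problem via Lax--Milgram and then removing the shift by the Fredholm alternative. Fix $s$ equal to the constant $k_1$ of Corollary~\ref{C:GI}, so that the shifted form $a^s_{\cL}$ of~\eqref{eq1:15july13} satisfies $a^s_{\cL}(\biu,\biu)\geqs k_1\|\biu\|_{1,2}^2$; it is also bounded on $\bW^{1,2}\times\bW^{1,2}$, by Cauchy--Schwarz for the bulk and shift terms and by $C^a_b\in{\bf L}^\infty$ together with the trace embedding $\bW^{1,2}\hookrightarrow{\bf L}^2(\partial\cM)$ for the boundary term. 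One must also check $\bif\in(\bW^{1,2})^*$: since $sp>3$ we have $\phi\in W^{1,\infty}$, hence $\phi^6$ is a pointwise multiplier on $\bW^{-1,2}$ and $\bb_\tau\phi^6+\bb_j\in\bW^{-1,2}$, while the boundary term pairs $\gamma_I\bV\in\bW^{-1/2,2}(\Sigma_I)$ against $\gamma_I\biv\in\bW^{1/2,2}(\Sigma_I)$; both contributions are bounded by $\|\biv\|_{1,2}$. Lax--Milgram then produces a bounded solution operator $\cS_s\colon(\bW^{1,2})^*\to\bW^{1,2}$ for $a^s_{\cL}$.

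Rewriting $a_{\cL}(\biw,\biv)=a^s_{\cL}(\biw,\biv)-s(\biw,\biv)_{L^2}$, the weak problem becomes $\biw - s\,\cS_s(\iota\biw)=\cS_s\bif$ in $\bW^{1,2}$, where $\iota\colon\bW^{1,2}\hookrightarrow{\bf L}^2\cong({\bf L}^2)^*\hookrightarrow(\bW^{1,2})^*$ is the natural embedding, which is compact by Rellich--Kondrachov. Hence $K:=s\,\cS_s\iota$ is a compact operator on $\bW^{1,2}$ and $I-K$ is Fredholm of index zero, so the momentum constraint is uniquely solvable for every $\bif$ as soon as the homogeneous equation $a_{\cL}(\biw,\biv)=0$ for all $\biv\in\bW^{1,2}$ admits only the trivial solution.

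Establishing this injectivity is the heart of the matter and the step I expect to be the main obstacle. Taking $\biv=\biw$ and using the identity from the proof of Corollary~\ref{C:GI}, $0=a_{\cL}(\biw,\biw)=\|\cL\biw\|_2^2+\langle C\gamma_E\biw,\gamma_E\biw\rangle_{\Sigma_E}$; both terms are nonnegative --- the second precisely by the positivity hypothesis~\eqref{eq1:12july13} on $C^a_b$ --- so $\cL\biw=0$ on $\cM$ and $\gamma_E\biw=0$ on $\Sigma_E$. Thus $\biw$ is a conformal Killing field of $(\cM,h)$ vanishing on the boundary portion $\Sigma_E$, which is nonempty by~\eqref{eq6:11july13} (the natural condition $(\cL\biw)^{ab}\nu_b=0$ on $\Sigma_I$ holding automatically). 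Elliptic regularity for $\cL\biw=0$ afforded by $h\in W^{s,p}$, $sp>3$, makes $\biw$ classical enough that $\gamma_E\biw=0$ together with $\cL\biw=0$ force the full one-jet of $\biw$ to vanish along $\Sigma_E$; a unique-continuation argument for the finite-type (overdetermined elliptic) system $\cL\biw=0$ then propagates this through the connected manifold $\cM$ to give $\biw\equiv0$. This is where connectedness of $\cM$ and the positivity hypothesis~\eqref{eq1:12july13} are both indispensable; I would cite the corresponding conformal-Killing-field discussion in~\cite{HNT07b} rather than reprove it.

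Finally, with $I-K$ invertible the solution map $\bif\mapsto\biw$ is a bounded linear bijection $(\bW^{1,2})^*\to\bW^{1,2}$, so the open mapping theorem yields a constant $C>0$ depending only on $h$ and $C^a_b$ --- hence independent of $\tau$, $\bj$, $\bV$, $\sigma$ --- with $\|\biw\|_{1,2}\leqs C\|\bif\|_{(\bW^{1,2})^*}$; bounding $\|\bif\|_{(\bW^{1,2})^*}$ by $\|\bb_\tau\phi^6\|_{-1,2}+\|\bb_j\|_{-1,2}+\|\Tr_{\tiI}\bV\|_{-1/2,2;\Sigma_I}$, up to the norm of the trace map, gives~\eqref{eq3:15july13}. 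Alternatively, the same estimate follows directly by combining Corollary~\ref{C:GI} (tested with $\biv=\biw$) with a standard compactness--contradiction argument to absorb the lower-order $\|\biw\|_2^2$ term, again using the injectivity just established.
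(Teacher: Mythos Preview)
Your proof follows the same overall scheme as the paper's: coercivity of the shifted form via Corollary~\ref{C:GI}, Lax--Milgram, the Fredholm alternative through compactness of the resolvent composed with the embedding $\bW^{1,2}\hookrightarrow{\bf L}^2$, triviality of the kernel of the unshifted operator, and the open mapping theorem for the estimate~\eqref{eq3:15july13}. The only substantive difference is the injectivity step. After testing with $\biv=\biw$, the paper asserts the chain $0\leqs C\|\nabla\biw\|_2^2\leqs\|\cL\biw\|_2^2=-\langle C\gamma_E\biw,\gamma_E\biw\rangle_{\Sigma_E}\leqs0$ and reads off $\nabla\biw=0$ and $\gamma_E\biw=0$ simultaneously, so a nonzero $\biw$ would be a nonzero constant with vanishing trace on $\Sigma_E$, a contradiction; this is short and entirely elementary, using only connectedness and the positive-definiteness of $C^a_b$. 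Your route instead deduces $\cL\biw=0$ and $\gamma_E\biw=0$ and then invokes unique continuation for conformal Killing fields vanishing on a boundary portion. This is the argument one usually sees in the asymptotically Euclidean literature and is arguably more robust --- it does not rely on the Korn-type bound $\|\nabla\biw\|_2\leqs C\|\cL\biw\|_2$ without lower-order terms, which is not obvious for general $\biw\in\bW^{1,2}$ --- at the cost of importing a nontrivial unique-continuation result that you would need to cite carefully at the stated metric regularity.
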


\begin{proof}
Setting $s >0$, Corollary~\ref{C:GI} implies that the bilinear form
\eqref{eq1:15july13} is coercive.
By the Lax-Milgram Theorem, for any $\bih \in {\bf W}^{-1,2}$, we have that 
there exists a unique element $\biw \in {\bf W}^{1,2}$ which satisfies
$$
a^s_{\cL}(\biw,\biv) = a_{\cL}(\biw,\biv)+(\biw,\biv) = \bih(\biv).
$$
This defines a bounded, invertible operator $L^s:\bW^{1,2} \to \bW^{-1,2}$,
where
\begin{align}\label{eq4:15july13}
L^s\biw = \bih \Longleftrightarrow a^s_{\cL}(\biw,\biv) = h(\biv),\quad \text{for all}~~ \biv \in \bW^{1,2}.
\end{align} 
If we let $L$ be a similar operator defined by
\begin{align}\label{eq5:15july13}
L\biw = \bih  \Longleftrightarrow a_{\cL}(\biw,\biv) = \bih(\biv) ,\quad \text{for all}~~ \biv \in \bW^{1,2},
\end{align}
we have that 
\begin{align}\label{eq6:15july13}
&L+ sI = L^s, ~~\text{and}\\
L\biw &= \bih \Longleftrightarrow L^s\biw = \bih + s \biw.\nonumber
\end{align}
Therefore, rewriting~\eqref{eq6:15july13}, we have that
\begin{align}
L\biw &= \bih \Longleftrightarrow \biw-s (L^s)^{-1}\biw = (L^{s})^{-1}\bih.
\end{align}
Standard elliptic PDE theory tells us that the operator
$$
K\biw = s (L^{s})^{-1}\biw,
$$
is compact, and we can therefore apply the Fredholm alternative to conclude that
the operator $L$ is Fredholm with index zero.  Thus, $dim(ker(L)) = codim(R(L))$ and
to conclude that the operator $L$ is invertible (which implies the existence and uniqueness of solutions to the
weak formulation), we need
only show that its kernel is trivial. 

Assume that $L$ has a nontrivial kernel.  This implies that there exists some $\biw \in \bW^{1,2}$ such
that
\begin{align}
a_{\cL}(\biw,\biv) =  \int_{\cM}(\cL \biw)_{ab}(\cL \biv)^{ab}~dx +\int_{\Sigma_E} C^a_b\gamma_E\biw^b\gamma_E\biv_a~ds = 0,
\end{align}
for all $\biv \in \bW^{1,2}$.
Therefore, 
\begin{align}\label{eq7:15july13}
 0 &\leqs  C\|\nabla \biw\|_{2}^2 \leqs  \int_{\cM}(\cL \biw)_{ab}(\cL \biv)^{ab}~dx =-  \int_{\Sigma_E} C^a_b\gamma_E\biw^b\gamma_E\biv_a~ds \nonumber \\
& \leqs  -\alpha \| \gamma_E\biw\|^2_{2;\Sigma_E} \leqs  0,
\end{align}
where $\alpha > 0$ by the positive definite assumption on $C^a_b$.
If $\biw \ne 0$, then one of the above two inequalities must be strict.  In particular,
if $\|\nabla\biw\|_{2} = 0$, then $\biw$ is constant and the assumption that $\biw\ne 0$ implies that $\|\biw\|_{L^2;\Sigma_E}>0$.  On the other hand,
if $\|\biw\|_{2;\Sigma_E}=0$ and $\biw \ne 0$, then $\biw$ is non-constant and $\|\nabla\biw\|_{2} >0$.  In either case, we have a contradiction which
allows us to conclude that $L$ has a trivial kernel and is invertible.  
Given that $\bif\in \bW^{-1,2}$, where $\bif$ is defined in \eqref{eq8:15july13},
the weak formulation momentum constraint \eqref{MC-LYm1} has a solution.

In order to establish the {\em a priori} estimate \eqref{eq3:15july13}, we apply the open mapping theorem to conclude that $L$ is open.
Given that $L$ is invertible, we can then conclude that $L^{-1}:\bW^{-1,2} \to \bW^{1,2}$ is a bounded linear operator.  So there exists some
$C>0$ such that for any $\bih \in \bW^{-1,2}$,
\begin{align}
\|L^{-1}\bih \|_{1,2} \leqs  C\|\bih\|_{-1,2}.
\end{align}
This implies that if $\biw$ is our unique, weak solution to the momentum
constraint for ${\bif \in {\bf W}^{-1,2}}$ given by \eqref{eq8:15july13}, then
\begin{align}\label{eq4:16july13}
\|\biw\|_{1,2} \leqs  C\|\bif\|_{-1,2}.
\end{align}
The above bound \eqref{eq4:16july13} implies that 
\begin{align}
\|\biw\|_{1,2}\leqs  C\left(\|\bb_{\tau}\phi^6+\bb_j\|_{-1,2} + \|\Tr_{\tiI}\bV\|_{-\frac12,2;\Sigma_I}\right),
\end{align}
which is the desired estimate in~\eqref{eq3:15july13}.
\end{proof}

Now that we have shown that weak solutions $\biw \in\bW^{1,2}$ exist,  we utilize the following regularity theorem
to show that the Momentum constraint~\ref{MC-LYm1} has a solution in $\bW^{e,q}$, with $e,q$ satisfying \eqref{eq1:13aug13},
provided that the Robin data and coefficients functions satisfy \eqref{eq2:13aug13}.

\begin{theorem}{\bf (Regularity $\tbW^{e,q}$)}
\label{T:MC-E-Reg2}
Let $(\cM,h)$ be a connected, $3$-dimensional compact manifold with boundary satisfying \eqref{eq6:11july13}, and with 
${h\in W^{s,p}}$, ${p \in (1,\infty)}$, ${s\in (1+\frac3p,\infty)}$.
Let $\partial\cM$ be $C^{k}$, where ${k \geqs e > 1}$, and suppose that
the Robin data and coefficients satisfy the regularity 
assumptions \eqref{eq1:13aug13}. 
Then there exists a solution $\biw$ to the momentum constraint
~\eqref{MC-LYm1} in $\tbW^{e,q}$, and there exist
positive constants $C_1$ and $C_2$ such that the following estimate
holds, 
\begin{align}\label{eq3:18july13}
\|\biw\|_{e,q}\leqs & C\left(\|\bb_{\tau}\phi^6\|_{e-2,q} +\|\bb_j\|_{e-2,q} + \|\Tr_{\tiI}\bV\|_{e-1-\frac1q,q;\Sigma_I}\right). 
\end{align}
\end{theorem}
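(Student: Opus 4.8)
The strategy is to read Theorem~\ref{T:MC-E-Reg2} as an elliptic regularity statement. Existence of \emph{a} solution of \eqref{MC-LYm1} in $\bW^{1,2}$ is already available from Theorem~\ref{thm1:15july13}: one first checks that the exponent relations encoded in \eqref{eq1:13aug13}--\eqref{eq2:13aug13} imply the (weaker) data hypotheses needed there, so that a weak solution $\biw\in\bW^{1,2}$ of \eqref{MC-LYm1} exists. It then remains to show that this $\biw$ in fact lies in $\bW^{e,q}$ and to establish \eqref{eq3:18july13}. The operator $A_{\IL}$ is a second-order, strongly elliptic system — the principal symbol being precisely the one computed just before Lemma~\ref{L:GKI} — supplemented on $\partial\cM$ by the normal-derivative and Robin boundary operators of \eqref{eq7:8aug13}. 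The plan is therefore to run the corresponding interior and boundary elliptic regularity theory in the low-regularity form appropriate to a metric $h\in W^{s,p}$, bootstrapping the regularity index of $\biw$ from $(1,2)$ up to $(e,q)$.

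First I would record the mapping properties that make the statement meaningful. Since $sp>3$, $W^{s,p}$ is a Banach algebra, so $\phi^6\in W^{s,p}$; together with the pointwise-multiplication estimates of \cite{HNT07b}, whose hypotheses on $(e,q)$ are exactly the intersection conditions built into \eqref{eq1:13aug13}, this gives $\bb_\tau\phi^6\in\bW^{e-2,q}$, hence $\bb_\tau\phi^6+\bb_j\in\bW^{e-2,q}$ using \eqref{eq2:13aug13}. As $\partial\cM$ is $C^k$ with $k\geqs e$ and $\bV\in\bW^{e-1,q}$, the trace theorem gives $\gamma_I\bV\in W^{e-1-\frac1q,q}(T\Sigma_I)$, so $G(\phi)$ belongs to the target product space
\[
\mathbb{Y}:=\bW^{e-2,q}\times W^{e-1-\frac1q,q}(T\Sigma_I)\times W^{e-1-\frac1q,q}(T\Sigma_E).
\]
The same two tools — multiplication estimates for the coefficients of $\IL$ (principal coefficients in $W^{s,p}$, first-order ones in $W^{s-1,p}$), the trace theorem for the boundary operators, and the multiplication $W^{e-1-\frac1q,q}(T^1_1\Sigma_E)\cap L^{\infty}\cdot W^{e-1-\frac1q,q}(T\Sigma_E)\hookrightarrow W^{e-1-\frac1q,q}(T\Sigma_E)$ handling the Robin term $C^a_b\biw^b$ — show that $A_{\IL}:\bW^{e,q}\to\mathbb{Y}$ is bounded.

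The heart of the matter is the a priori regularity estimate: if $\biw\in\bW^{1,2}$ solves $A_{\IL}\biw=-G(\phi)$ weakly, then $\biw\in\bW^{e,q}$ with
\[
\|\biw\|_{e,q}\leqs C_1\bigl(\|\bb_\tau\phi^6\|_{e-2,q}+\|\bb_j\|_{e-2,q}+\|\gamma_I\bV\|_{e-1-\frac1q,q;\Sigma_I}\bigr)+C_2\|\biw\|_{1,2}.
\]
I would prove this by a finite bootstrap: starting from $\biw\in\bW^{1,2}$, raise the regularity index in small increments, at each stage using a partition of unity to localize, straightening $\partial\cM$ in boundary charts, freezing and mollifying the coefficients of $A_{\IL}$, applying the constant-coefficient interior and Robin-boundary estimates of Agmon--Douglis--Nirenberg type on the $W^{t,q}$ scale, and absorbing the variable-coefficient and commutator errors by the multiplication lemmas of \cite{HNT07b,HoTs10a} — these remaining valid precisely as long as one stays inside the window carved out by \eqref{eq1:13aug13} (in particular the constraint $e\leqs s$ guarantees no regularity beyond $W^{s,p}$ is ever demanded of $h$). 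Combining this estimate with the $\bW^{1,2}$ bound \eqref{eq3:15july13} from Theorem~\ref{thm1:15july13} absorbs the $C_2\|\biw\|_{1,2}$ term into the data norms and produces \eqref{eq3:18july13}.

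The main obstacle is this rough-coefficient elliptic regularity estimate, and within it the boundary part: the Robin data ($H$, $C^a_b$, and the geometry of $\partial\cM$ entering through $\nu$) have only fractional Sobolev regularity, so the usual boundary $L^q$/Schauder estimates must be carried out in a variable-coefficient, fractional-order setting, and one must verify that every localization, trace, and multiplication step keeps all indices admissible. Most of the needed machinery is already assembled in \cite{HNT07b,HoTs10a}, with any remaining ingredient to be recorded in the appendix; the genuine work is the bookkeeping — checking that a chain of intermediate regularity levels connecting $(1,2)$ to $(e,q)$ can be chosen inside the admissible range and that the multiplier and trace estimates close at each link.
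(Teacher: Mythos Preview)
Your proposal is correct and follows essentially the same approach as the paper: the paper's proof is itself only an outline, invoking a partition-of-unity localization, decomposition of $\IL$ and the boundary operators into constant-coefficient plus small-perturbation pieces (citing Proposition~5.1 of \cite{DM05a} and Lemma~B.3 of \cite{HoTs10a}), constant-coefficient elliptic estimates, interpolation, and the $\bW^{1,2}$ bound \eqref{eq3:15july13}. Your write-up is in fact more detailed than the paper's, but the strategy and the key ingredients are the same.
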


\Proof {\it (Theorem~\ref{T:MC-E-Reg2}.)~}
We give just an outline of a proof following a standard approach.
Viewing the metric in local coordinates and applying interior and boundary 
estimates, and then applying a partition of unity argument, one obtains the 
above result.  In particular, one decomposes both the vector Laplacian $\IL$ 
and the boundary operators into a sum of constant coefficient and slightly 
perturbed non-constant coefficient operators as in 
Proposition 5.1 in \cite{DM05a} and Lemma B.3 in \cite{HoTs10a}.
One then applies results for constant coefficient elliptic operators, 
interpolation inequalities, and \eqref{eq3:15july13} to obtain the result.
\qed

\begin{remark}\label{rem1:18july13}
If the data $\bV, \tbb_{\tau}, \tbb_j$ satisfies the hypotheses of Theorem~\ref{T:MC-E-Reg2},
and $q \in (3,\infty)$, $1+\frac3q < e \leqs 2 $, 
then $\bW^{e,q} \hookrightarrow \bW^{1,\infty}$ and we have that
\begin{align*}
\|\cL\biw\|_{\infty} \leqs  \|\biw\|_{1,\infty} \leqs  C\|\biw\|_{e,q}.
\end{align*}
Combining this and the a priori estimate \eqref{eq3:18july13} and using the embedding
$L^z \hookrightarrow W^{e-2,q}$ for appropriate $z$, we have that 
\begin{align}\label{eq2:18july13}
\|\cL\biw\|_{\infty}\leqs  C\left(\|\phi\|^6_{\infty}\|\bb_{\tau}\|_{z} +\|\bb_j\|_{e-2,q} + \|\Tr_{\tiI}\bV\|_{e-1-\frac1q,q;\Sigma_I}\right).
\end{align}
Furthermore, if $\bX \in W^{e-1-\frac1q,q}(T\cM)$ and $\bV$ is a vector field such that
\begin{align*}
\Tr_{\tiI}\bV  = \left((2\tau+|\theta_-|/2)B^6-\sigma(\nu,\nu)\right)\nu + \bX,
\end{align*}
we can utilize \eqref{eq2:18july13} to obtain
\begin{align}\label{eq5:27sep13}
\|\cL\biw\|_{\infty} 
   \leqs & C\left(\|\phi_+\|^6_{\infty}\|\bb_{\tau}\|_{z} +\|\bb_j\|_{e-2,q} + \|\tau B^6\|_{e-1-\frac{1}{q},q;\Sigma_I} \right. \nonumber \\
 & \quad \left. +\|\theta_-B^6\|_{e-1-\frac{1}{q},q;\Sigma_I} +\|\sigma(\nu,\nu)\|_{e-1-\frac{1}{q},q;\Sigma_I}+\|\bX\|_{e-1-\frac{1}{q},q;\Sigma_I}\right).
\end{align}
Note that we have replaced $\phi$ with $\phi_+$
in \eqref{eq2:18july13} given that we are assuming $\phi_+$ is an {\em a priori} upper bound on $\phi$.

The bounds in equation~\eqref{eq5:27sep13}
will be essential to control $a_{\biw}$ in the Hamiltonian
constraint in terms of the global super-solution $\phi_+$.  This will be necessary in order to obtain our
global sub-and super-solutions later.
\end{remark}

Theorems~\ref{thm1:15july13} and \ref{T:MC-E-Reg2} imply that the 
Picard map (in this case the solution map),
\begin{align*}
&S: W^{s,p} \to \bW^{e,q},\\
&S(\phi) = -A_{\IL}^{-1}G(\phi),
\end{align*}
is well-defined.  In order to apply the Coupled Fixed
Point Theorem~\ref{T:FIXPT2}, we will additionally require
that $S$ be continuous, which we show in the following Lemma.

\begin{lemma}
{\bf (Properties of the map $S$)}
\label{T:MC-E-Lip1}
In addition to the conditions \eqref{eq1:13aug13} and \eqref{eq2:13aug13} imposed in the beginning of this section, let 
$e \in [0,2]$ and $\bib_{\tau} \in L^z$ with $z = \frac{3q}{\max\{0,(2-e)\}q+3}$.
Let the assumptions for Theorems~\ref{thm1:15july13} and \ref{T:MC-E-Reg2}, 
so that in particular the momentum constraint \eqref{MC-LYm1} is uniquely
solvable in $\tbW^{e,q}$.
With some $\phi_+\in W^{s,p}$ satisfying $\phi_{+}>0$, let $\tbw_1$ and $\tbw_2$ be the solutions
to the momentum constraint with the source
functions $\phi_1$ and $\phi_2$ from the set $[0,\phi_+]\cap W^{s,p}$, respectively.
Then,
\begin{equation}
\|\tbw_1-\tbw_2\|_{e,q}
\leqs
C\, \|\phi_+\|_{\infty}^5
\|\tbb_{\tau}\|_{z}\,
\|\phi_1-\phi_2\|_{s,p}.
\end{equation}
\end{lemma}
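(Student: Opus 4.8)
The plan is to exploit the linearity of the momentum operator $A_{\IL}$ together with the observation that in the source $G(\phi) = (\bib_{\tau}\phi^6 + \bib_j,\, \bV,\, {\bf 0})$ only the first component depends on $\phi$, and it does so affinely in $\phi^6$. First I would subtract the two weak momentum equations $A_{\IL}\tbw_i + G(\phi_i) = 0$, $i=1,2$. Since $A_{\IL}$ is linear and, under the stated hypotheses (Theorems~\ref{thm1:15july13} and \ref{T:MC-E-Reg2}), invertible on the relevant spaces, the difference $\tbw_1 - \tbw_2$ is the unique solution of the momentum constraint with vanishing boundary data $\bV \equiv {\bf 0}$, with $\bib_j \equiv {\bf 0}$, and with interior source $-\bib_{\tau}(\phi_1^6 - \phi_2^6)$. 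Applying the a priori estimate \eqref{eq3:18july13} from Theorem~\ref{T:MC-E-Reg2} (or \eqref{eq3:15july13} from Theorem~\ref{thm1:15july13} in the borderline case $e=1$, $q=2$) to this difference yields at once
\[
\|\tbw_1 - \tbw_2\|_{e,q} \leqs C\,\|\bib_{\tau}(\phi_1^6 - \phi_2^6)\|_{e-2,q}.
\]

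It then remains to bound the right-hand side. I would use the embedding $L^z \hookrightarrow W^{e-2,q}$ on the $3$-dimensional manifold $\cM$, valid for $z = \frac{3q}{\max\{0,(2-e)\}q+3}$ (this is dual to a standard Sobolev embedding, and when $e \geqs 2$ the exponent $z=q$ renders it trivial), as already used in Remark~\ref{rem1:18july13}. This reduces matters to estimating an $L^z$ norm, and Hölder's inequality then separates the coefficient from the polynomial factor,
\[
\|\bib_{\tau}(\phi_1^6 - \phi_2^6)\|_{z} \leqs \|\bib_{\tau}\|_{z}\,\|\phi_1^6 - \phi_2^6\|_{\infty}.
\]
Factoring $\phi_1^6 - \phi_2^6 = (\phi_1 - \phi_2)\sum_{k=0}^{5}\phi_1^{5-k}\phi_2^{k}$ and using $0 \leqs \phi_1,\phi_2 \leqs \phi_+$ pointwise (which holds since $\phi_1,\phi_2 \in [0,\phi_+]\cap W^{s,p}$) bounds each of the six monomials by $\phi_+^5$, so that $\|\phi_1^6 - \phi_2^6\|_{\infty} \leqs 6\,\|\phi_+\|_{\infty}^5\,\|\phi_1 - \phi_2\|_{\infty}$. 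Finally the Sobolev embedding $W^{s,p} \hookrightarrow L^{\infty}$, available because $sp > 3$, gives $\|\phi_1 - \phi_2\|_{\infty} \leqs C\|\phi_1 - \phi_2\|_{s,p}$; chaining the inequalities produces the claimed bound, with $C$ absorbing the numerical factor and the various embedding constants.

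I do not expect a serious obstacle: the argument is essentially linearity, a factorization of the difference of sixth powers, pointwise control by the upper barrier $\phi_+$, and bookkeeping with Sobolev embeddings. The one point needing care is the verification of the embedding $L^z \hookrightarrow W^{e-2,q}$ for the stated $z$ and the compatibility of the exponent hypotheses — in particular, reconciling the range $e\in[0,2]$ in the lemma with the $e>1$ requirement of Theorem~\ref{T:MC-E-Reg2}, so that for $e\leqs 1$ one falls back on Theorem~\ref{thm1:15july13} (with $e=1$, $q=2$) and estimate \eqref{eq3:15july13}. A minor additional check is that $\bib_\tau \in L^z$ (which is part of the hypotheses), so that the product $\bib_\tau(\phi_1^6-\phi_2^6)$ indeed lies in $L^z$ and the chain of estimates is meaningful.
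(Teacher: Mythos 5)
Your proposal is correct and follows essentially the same route as the paper's proof: both reduce the difference $\tbw_1-\tbw_2$ to the interior source $\bib_{\tau}(\phi_1^6-\phi_2^6)$ via linearity/boundedness of $A_{\IL}^{-1}$ (the paper phrases this through the Bounded Inverse Theorem applied to $\|\biw\|_{e,q}\leqs C\|G(\phi)\|_{\mathbb{Y}}$ and the observation that the boundary components of $G$ are independent of $\phi$), then chain the embedding $L^z\hookrightarrow W^{e-2,q}$, H\"older, the factorization $\phi_2^6-\phi_1^6\leqs 6\,\phi_+^5|\phi_2-\phi_1|$, and $W^{s,p}\hookrightarrow L^{\infty}$. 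The exponent caveats you flag are reasonable points of care but do not change the argument.
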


\begin{proof}
The functions $\phi_1$ and $\phi_2$ pointwise satisfy the following inequalities
\begin{equation}\label{HC-phin}
\begin{split}
\phi_2^n-\phi_1^n &=
\Bigl(\sum_{j=0}^{n-1} \phi_2^j \phi_1^{n-1-j} \Bigr)
(\phi_2- \phi_1)
\leqs n\, (\phi_{+})^{n-1} \, |\phi_2- \phi_1|,\\
-\bigl[ \phi_2^{-n} -\phi_1^{-n} \bigr] &=
\frac{\phi_2^n-\phi_1^n}{(\phi_2\phi_1)^n}
\leqs n\, \frac{(\phi_{+})^{n-1}}{(\phi_{-})^{2n}} \, |\phi_2- \phi_1|,
\end{split}
\end{equation}
for any integer $n>0$.

By Theorems~\ref{thm1:15july13} and \ref{T:MC-E-Reg2}, 
for a fixed $\phi \in (0,\phi^+)$, $S^{-1}$ is an invertible operator between
$$
\mathbb{Y} = W^{e-2,q}(T\cM)\times W^{e-1-\frac1q,q}(T\Sigma_I)\times W^{e-1-\frac1q,q}(T\Sigma_E)
$$  
and  $Y=W^{e,q}(T\cM)$.  Hence, by the Bounded Inverse Theorem
$$
\|\biw\|_{e,q} \leqs  C\|G(\phi)\|_{\mathbb{Y}}.
$$
Therefore
\begin{align}\label{eq1:3oct13}
\|\biw_1-\biw_2\|_{e,q} \leqs  C\|G(\phi_1)-G(\phi_2)\|_{\mathbb{Y}} = \|b_{\tau}(\phi_1^6-\phi_2^6)\|_{e-2,q},
\end{align}
given that the boundary terms on $W^{e-1-\frac1q,q}(\Sigma_I)$ and $W^{e-1-\frac1q,q}(\Sigma_E)$ do not depend on $\phi$,
and so the norms corresponding to these terms in the $\mathbb{Y}$ norm vanish.

Using \eqref{eq1:3oct13}, the inequalities \eqref{HC-phin}, and the embeddings
${W^{s,p} \hookrightarrow L^{\infty}}$, $L^z \hookrightarrow W^{e-2,q}$, we obtain 
\begin{align*}
\|\biw_1-\biw_2\|_{e,q}
&\leqs
C\|\bib_{\tau}(\phi_1^6-\phi_2^6)\|_{e-2,q}\,
\leqs
C\|\bib_{\tau}(\phi_1^6-\phi_2^6)\|_z
\leqs
C\|\bib_{\tau}\|_z\|\phi_1^6-\phi_2^6\|_{\infty} \\
&\leqs
6C\|\phi_+\|_\infty^5
\|\bib_{\tau}\|_{z}\,
\|\phi_1-\phi_2\|_{s,p}.
\end{align*}
\end{proof}

\section{The Hamiltonian constraint and the Picard map $T$}
\label{sec:hamiltonian}
In this section we fix a particular function $a_{\biw}$ in an
appropriate space and we then separately look for weak solutions of the
Hamiltonian constraint \eqref{eq4:11july13} with Robin boundary conditions \eqref{eq5:11july13}-\eqref{eq3:8aug13}.
For convenience, we reformulate the problem here in a self-contained manner.
Our goal here is primarily to establish some properties and derive some
estimates for a Hamiltonian constraint fixed-point map $T$ that we will 
need later in our analysis of the coupled system.

Let $(\cM, h)$ be a 3-dimensional Riemannian
compact manifold with
boundary satisfying \eqref{eq6:11july13} and with $p\in(1,\infty)$ and $s\in(\frac3p,\infty)\cap[1,\infty)$, $h\in W^{s,p}$ is a positive definite metric. 
Recall the operators 
\begin{align}\label{eq11:15july13}
A_L(\phi) = &\left( \begin{array}{c} -\Delta \phi +a_R \phi \\ \gamma_I (\partial_{\nu}\phi) + \frac12H(\gamma_I\phi) \\ \gamma_E (\partial_{\nu} \phi) + c(\gamma_E\phi) \end{array}\right),\\
F(\phi,\biw) = &\left( \begin{array}{c}a_{\tau}\phi^5-a_{\biw}\phi^{-7}-a_{\rho}\phi^{-3} \\ \left(\frac12\gamma_I(\tau)- \frac14\theta_-\right)(\gamma_I(\phi))^3-\frac14S(\nu,\nu)(\gamma_I(\phi))^{-3} \\ -g \end{array}\right),
\end{align}
introduced in Section~\ref{sec1:26july13}.  The dependence of $F(\phi, \biw)$ on $\biw$ is hidden in the fact that the coefficient $a_{\biw}$ depends on $\biw$
and $S(\nu,\nu) = \cL\bw(\nu,\nu) + \sigma(\nu,\nu)$, cf. \eqref{CF-def-coeff2}.

Fix the source functions
\begin{align*}
a_\tau, a_\rho, a_{\biw} \in W^{s-2,p}_{+},~~
a_{\tiR}=\frac18R \in W^{s-2,p},\quad \text{and} ~~\\
\theta_-, H \in W^{s-1-\frac1p,p}(\Sigma_I), \quad c,g \in W^{s-1-\frac1p,p}(\Sigma_E),
\end{align*}
where $R$ is the scalar curvature of the metric $h$ and
$H$ is the mean extrinsic curvature on $\Sigma_I$ induced by $h$.
(By Corollary A.5(b) in~\cite{HoTs10a},
we know $h_{ab} \in W^{s,p}$ implies $R \in W^{s-2,p}$ and $H \in W^{s-1-\frac1p,p}(\Sigma_I)$. 
Here the pointwise multiplication by an element of $W^{{s},p}$ defines a bounded linear map in
$W^{s-2,p}$ since $s-2\geqs-s$ and $2(s-\frac3p)>0>2-3$, cf. Corollary A.5(a) in~\cite{HoTs10a}.
Therefore we have that 
\begin{align}
&A_{L}: W^{s,p} \to W^{s-2,p}\times W^{s-1-\frac{1}{p},p}(\Sigma_I) \times  W^{s-1-\frac1p,p}(\Sigma_E),\\
&F: W^{s,p} \times \bW^{e,q} \to W^{s-2,p} \times W^{s-1-\frac{1}{p},p}(\Sigma_I) \times  W^{s-1-\frac1p,p}(\Sigma_E).
\end{align}

\noindent
We then formulate the Hamiltonian constraint equation with Robin boundary conditions as follows: find an element that is a solution of
\begin{equation}
\label{HC-LYs1}
A_L(\phi) + F(\phi,\biw) = 0.
\end{equation}

Recall from Section~\ref{sec1:26july13} that our approach for finding weak solutions
to~\eqref{HC-LYs1} is to reformulate the problem as a fixed point problem of
the form
\begin{align}\label{eq2:29july13}
\phi = (A_L^s)^{-1}F^s(\phi,\biw) = T^s(\phi,\biw),
\end{align}
where we assume that $\biw$ is fixed and $A_L^s$ and $F^s$ are
the shifted operators defined in Section~\ref{sec1:26july13}.  In order for this map to be
well-defined, we obviously require $A_L^s$ to be an invertible map.  Furthermore,
we also will require the map $T^s$ to be monotonically increasing in $\phi$,
which will require $A_L^s$ to satisfy the maximum principle.
These two properties of $A^s_L$ are verified in 
Lemmas~B.7 and B.8 in~\cite{HoTs10a}.

Now that we are sure that the map $T^s$ is well-defined, we
discuss some key properties of this map that are essential in
applying the coupled fixed point Theorem.

\subsection{Invariance of $T^s$ given Global Sub-and Super-Solutions}

To establish existence results for weak solutions to the 
Hamiltonian constraint equation using fixed-point arguments, we must show that
the fixed point operator $T^s$ in \eqref{eq2:29july13} is invariant on a certain subspace.
This will require the existence of generalized (weak) sub- and super-solutions 
(sometimes called barriers) which will be derived later 
in~\S\ref{sec:barriers}.
Let us recall the definition of sub- and super-solutions in the following, 
in a slightly generalized form that will be necessary in our study of the 
coupled system.

A function $\phi_{-}\in (0,\infty)\cap W^{s,p}$ is called a {\bf sub-solution} of \eqref{eq11:15july13} iff the function
$\phi_{-}$ satisfies the inequality
\begin{equation}
\label{WF-Sb} 
A_{\tiL}\phi_{-} + F(\phi_{-},\biw)\leqs0,
\end{equation}
for some $a_{\biw}\in W^{s-2,p}$.
A function $\phi_{+} \in (0,\infty)\cap W^{s,p}$ is called a {\bf super-solution} of \eqref{eq11:15july13} iff the function $\phi_{+}$ satisfies the inequality
\begin{equation}
\label{WF-Sp}
A_{\tiL}\phi_{+} + F(\phi_{+},\biw)\geqs0,
\end{equation}
for some $a_{\biw}\in W^{s-2,p}$.
We say a pair of sub- and super-solutions is {\em compatible} if they satisfy
\begin{equation}
   \label{eqn:compat}
0 < \phi_- \leqs \phi_+ < \infty,
\end{equation}
so that the interval $[\phi_-,\phi_+] \cap W^{s,p}$ 
is both nonempty and bounded. 
In the following discussion, we will assume that $\phi_-$ and
$\phi_+$ are a compatible pair of barriers.

Now the we have discussed the basic properties of the linear
mapping $A_L$, 
we turn to the properties of the fixed-point mapping 
$T^s : U \times \mathcal{R}(S) \to X$ for the Hamiltonian constraint, 
where we define $T^s$ as in \eqref{eq2:29july13}.
In the following, we analyze the behavior of $T^s(\phi)$ for
$\phi_- \leqs  \phi \leqs  \phi_+$.  
For ease of notation, we let
\begin{align}\label{eq1:31july13}
\mathbb{X} = W^{s-2,p}\times W^{s-1-\frac1p,p}(\Sigma_I)\times W^{s-1-\frac1p,p}(\Sigma_E).
\end{align}

\begin{lemma}
{\bf (Properties of the map $T$)}
\label{l:shift}
In the above described setting, assume that $p\in(\frac{\alpha+1}{\alpha-1},\frac{\alpha+1}{3})$ 
for $\alpha>4$ and $s\in(\frac3p,\infty)\cap [1,3-\frac1{p'}]$.
With $\ba = (a,a_I,a_E)  \in \mathbb{X}_{+} $ 
satisfying $a_i\neq0$ and $\psi \in W^{s,p}_+$, let 
$$
\Psi = (\psi,0, 0) \quad \text{and} \quad a_{\tbw} \Psi = (a_{\tbw}\psi,0,0) \in \mathbb{X}.
$$
Then let $\ba_s=\ba+ a_{\tbw}\Psi\in \mathbb{X}$.
Fix the functions $\phi_{-},\phi_{+}\in W^{s,p}$ such that $0<\phi_{-}\leqs\phi_{+}$, and define the shifted operators
\begin{align}
\label{HC-def-As}
A_L^s & :W^{s,p} \to \mathbb{X},&
A_L^s\phi
&:= A_L\phi +  \ba_s\phi,\\
\label{HC-def-fs}
F_{\tbw}^s & :[\phi_{-},\phi_{+}]_{s,p} \to \mathbb{X},&
F^s_{\tbw}(\phi)
&:= F_{\tbw}(\phi) - \ba_s\phi.
\end{align}
For $\phi\in[\phi_{-},\phi_{+}]_{s,p}$ and $a_{\tbw}\in W^{s-2,p}$, let
\begin{equation}
\label{T:HC-E-def-Tw}
T^s(\phi,a_{\tbw}) := -(A_{\tiL}^s)^{-1} F_{\tbw}^s(\phi).
\end{equation}
Then, the map $T^s : [\phi_{-},\phi_{+}]_{s,p}\times W^{s-2,p}\to W^{s,p}$ is continuous in both arguments.
Moreover, there exists $\tilde{s}\in(1+\frac3p-\frac{4}{\alpha},1+\frac3p)$ and constants $C_1,C_2$ such that
\begin{equation}\label{e:T-cpt}
\|T(\phi,a_{\tbw})\|_{s,p}\leqs C_1(1+ \|a_{\tbw}\|_{s-2,p})\|\phi\|_{\tilde{s},p}+ C_2,
\end{equation}
for all $\phi\in[\phi_{-},\phi_{+}]_{s,p}$ and $a_{\tbw}\in W^{s-2,p}$.
\end{lemma}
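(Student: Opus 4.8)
The plan is to exhibit $T^s(\phi,a_{\tbw})=-(A_L^s)^{-1}F^s_{\tbw}(\phi)$ as a composition of a ``nonlinear data map'' $(\phi,a_{\tbw})\mapsto F^s_{\tbw}(\phi)\in\mathbb{X}$ and a ``linear solver'' $a_{\tbw}\mapsto(A_L^s)^{-1}\in\mathcal{L}(\mathbb{X},W^{s,p})$, and to control each factor using the Sobolev multiplication estimates and Nemytskii (superposition) bounds available for $W^{s,p}$ on the $3$-manifold $\cM$ and for $W^{s-\frac1p,p}$ on its $2$-dimensional boundary components (both are Banach algebras embedding in $C^0$, since $sp>3$), together with the elliptic theory for $A_L^s$ recalled above, namely invertibility and the maximum principle (Lemmas~B.7--B.8 of~\cite{HoTs10a} for the shift $\ba$, both properties being preserved when the extra nonnegative zeroth-order term $a_{\tbw}\psi$ is added, using $a_{\tbw}\in W^{s-2,p}_{+}$ and $\psi\in W^{s,p}_{+}$). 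I would fix at the outset an exponent $\tilde s\in(1+\frac3p-\frac4\alpha,1+\frac3p)$ with $\tilde s<s$ (compatible with the hypotheses on $p,s,\alpha$); this gives the compact embedding $W^{s,p}\hookrightarrow W^{\tilde s,p}$, keeps $\tilde sp>3$, and also satisfies $\tilde s>s-1$, which is what is needed to run the boundary multiplication estimates at regularity $\tilde s-\frac1p$.

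For \emph{continuity}, I would proceed as follows. Because $0<\phi_-\leqs\phi\leqs\phi_+<\infty$ pointwise, each $\phi$ in the interval — and each boundary trace $\gamma_I\phi$ — is bounded above and below away from zero, so the superposition maps $\phi\mapsto\phi^5,\phi^{-3},\phi^{-7}$ (into $W^{s,p}$) and $\gamma_I\phi\mapsto(\gamma_I\phi)^{\pm3}$ (into $W^{s-\frac1p,p}(\Sigma_I)$) are continuous, indeed locally Lipschitz, on the interval. Multiplying these by the fixed coefficients ($a_\tau,a_\rho\in W^{s-2,p}$; $a_R=\tfrac18R\in W^{s-2,p}$ and $H\in W^{s-1-\frac1p,p}(\Sigma_I)$ by Corollary~A.5 of~\cite{HoTs10a}; $\theta_-,c,g$ as in the hypotheses) through the multiplication maps $W^{s-2,p}\cdot W^{s,p}\to W^{s-2,p}$ and $W^{s-1-\frac1p,p}(\Sigma_I)\cdot W^{s-\frac1p,p}(\Sigma_I)\to W^{s-1-\frac1p,p}(\Sigma_I)$, and noting that $a_{\tbw}$ enters $F^s_{\tbw}$ only through the two linear bounded terms $a_{\tbw}\phi^{-7}$ and $\ba_s\phi=(\ba+a_{\tbw}\Psi)\phi$, gives joint continuity of $(\phi,a_{\tbw})\mapsto F^s_{\tbw}(\phi)$ into $\mathbb{X}$. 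For the solver (a function of $a_{\tbw}$ alone) I would write $A_L^s=\hat A+M_{a_{\tbw}}$ with $\hat A\phi=A_L\phi+\ba\phi$ independent of $a_{\tbw}$ and $M_{a_{\tbw}}\phi=(a_{\tbw}\psi\phi,0,0)$; the multiplication estimate gives $\|M_{a_{\tbw}}-M_{a_{\tbw}'}\|_{\mathcal{L}(W^{s,p},\mathbb{X})}\leqs C\|\psi\|_{s,p}\|a_{\tbw}-a_{\tbw}'\|_{s-2,p}$, so $a_{\tbw}\mapsto A_L^s$ is Lipschitz into $\mathcal{L}(W^{s,p},\mathbb{X})$, and since every $A_L^s$ is invertible, a Neumann-series resolvent argument makes $a_{\tbw}\mapsto(A_L^s)^{-1}$ continuous into $\mathcal{L}(\mathbb{X},W^{s,p})$ and locally uniformly bounded. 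Composing the two continuous maps through the bilinear bounded evaluation $\mathcal{L}(\mathbb{X},W^{s,p})\times\mathbb{X}\to W^{s,p}$ then gives continuity of $T^s$ in both arguments.

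For the bound \eqref{e:T-cpt}, set $\tilde\phi=T^s(\phi,a_{\tbw})$, so that $A_L^s\tilde\phi=-F^s_{\tbw}(\phi)$. The first task is to estimate the right-hand side by the \emph{lower} norm $\|\phi\|_{\tilde s,p}$: using the trace $\gamma_I\colon W^{\tilde s,p}\to W^{\tilde s-\frac1p,p}(\Sigma_I)$, the multiplication maps $W^{s-2,p}\cdot W^{\tilde s,p}\to W^{s-2,p}$ and $W^{s-1-\frac1p,p}(\Sigma_I)\cdot W^{\tilde s-\frac1p,p}(\Sigma_I)\to W^{s-1-\frac1p,p}(\Sigma_I)$ (legitimate for the chosen $\tilde s$), and the superposition bound $\|\phi^m\|_{\tilde s,p}\leqs C(\phi_-,\phi_+)(1+\|\phi\|_{\tilde s,p})$ (which holds because $\|\phi\|_\infty$ and $\|\phi^{-1}\|_\infty$ are controlled by the barriers), each term of $F^s_{\tbw}(\phi)$ is a fixed coefficient — or $a_{\tbw}$ itself, entering linearly — times a power of $\phi$ or of $\gamma_I\phi$, whence
$$\|F^s_{\tbw}(\phi)\|_{\mathbb{X}}\leqs C\,(1+\|a_{\tbw}\|_{s-2,p})\,(1+\|\phi\|_{\tilde s,p})\leqs C_1'\,(1+\|a_{\tbw}\|_{s-2,p})\,\|\phi\|_{\tilde s,p}+C_2',$$
the last step absorbing the additive constants via $\|\phi\|_{\tilde s,p}\geqs\|\phi\|_p\geqs\|\phi_-\|_p>0$. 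The second task is to pass this through the solver: the elliptic estimate for $A_L^s$ gives $\|\tilde\phi\|_{s,p}\leqs C\|F^s_{\tbw}(\phi)\|_{\mathbb{X}}$ with $C$ depending only on $\cM$ and the fixed data $a_R,\ba,h$, uniformly in $a_{\tbw}\in W^{s-2,p}_{+}$ (the uniformity being the familiar fact that enlarging a nonnegative zeroth-order coefficient does not worsen the estimate; alternatively one keeps the $a_{\tbw}\psi\tilde\phi$ term on the right-hand side, applies the $a_{\tbw}$-free inverse $\hat A^{-1}$, bounds $\|a_{\tbw}\psi\tilde\phi\|_{s-2,p}\leqs C\|a_{\tbw}\|_{s-2,p}\|\tilde\phi\|_{\tilde s,p}$, interpolates $\|\tilde\phi\|_{\tilde s,p}\leqs\eps\|\tilde\phi\|_{s,p}+C_\eps\|\tilde\phi\|_p$ with $\eps$ depending on $\|a_{\tbw}\|_{s-2,p}$, absorbs the top-order term, and controls $\|\tilde\phi\|_p$ by the lower-order a priori bounds for $A_L^s$). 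Combining the two inequalities yields \eqref{e:T-cpt}.

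The step I expect to be the main obstacle is the coupling between the two arguments of $T^s$ caused by the fact that the shift $\ba_s=\ba+a_{\tbw}\Psi$, and hence $A_L^s$ itself, genuinely depends on $a_{\tbw}$: this forces the resolvent-perturbation step in the continuity proof and, more delicately, forces the right-hand-side estimate to be arranged so that the output is controlled by the \emph{strictly lower} norm $\|\phi\|_{\tilde s,p}$ — which is precisely what makes $T^s$ act compactly in the fixed-point argument of \S\ref{sec:proof} — while the dependence on $a_{\tbw}$ remains only linear, as in \eqref{e:T-cpt}. Everything else — the superposition continuity of the (fractional) powers on a fixed positive interval, and the Sobolev-multiplication bookkeeping in the interior and on the boundary — is routine given the cited multiplication lemmas.
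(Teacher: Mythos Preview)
Your proposal is correct and takes essentially the same route as the paper: bound $\|F^s_{\tbw}(\phi)\|_{\mathbb X}$ term-by-term via the Sobolev multiplication estimates (getting the linear dependence on $\|a_{\tbw}\|_{s-2,p}$ and on the \emph{lower} norm $\|\phi\|_{\tilde s,p}$), then invoke invertibility of $A_L^s$ and argue continuity by composition.

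Two points of comparison are worth noting. First, the paper's main technical work is exactly what you wave through: it explicitly invokes Lemma~29 of~\cite{HNT07b} (Lemma~A.6 of~\cite{HoTs10a}) both in the interior and on the $2$-dimensional boundary, and carefully checks that the hypotheses on $p,s,\tilde s,\alpha$ place $\frac1p$ in the required interval so that $W^{s-2,p}\cdot W^{\tilde s,p}\to W^{s-2,p}$ and $W^{s-1-\frac1p,p}(\Sigma_I)\cdot W^{\tilde s-\frac1p,p}(\Sigma_I)\to W^{s-1-\frac1p,p}(\Sigma_I)$ are bounded; this is where the otherwise mysterious constraints $p\in(\frac{\alpha+1}{\alpha-1},\frac{\alpha+1}{3})$ and $\tilde s\in(1+\frac3p-\frac4\alpha,1+\frac3p)$ are used. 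You should not regard this bookkeeping as routine---it is the content of the lemma. Second, you are \emph{more} careful than the paper about the $a_{\tbw}$-dependence of $A_L^s$: the paper simply applies the Bounded Inverse Theorem at each fixed $a_{\tbw}$ and asserts continuity ``as a composition of continuous maps'', whereas your Neumann-series perturbation of $\hat A=A_L+\ba\cdot$ is a cleaner justification of joint continuity of $(A_L^s)^{-1}$ in $a_{\tbw}$. Your stronger claim that the elliptic constant is uniform over \emph{all} of $W^{s-2,p}_+$ (``enlarging a nonnegative zeroth-order coefficient does not worsen the estimate'') is not obviously true at regularity $W^{s,p}$, and your alternative interpolation-and-absorb argument leaves $\|\tilde\phi\|_p$ uncontrolled; but this stronger uniformity is not needed, since downstream (Lemma~\ref{T:HC-ball-gen}) the lemma is applied with $a_{\tbw}$ fixed, so the paper's per-$a_{\tbw}$ bounded inverse suffices.
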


\begin{proof}
We first bound
\begin{align}\label{eq2:1aug13}
\| F_{\biw}^s(\phi)\|_{\mathbb{X}} &\leqs  \|a_{\tau}\phi^5-a_{\tbw}\phi^{-7}-a_{\rho}\phi^{-3}-(a+a_{\biw}\psi)\phi\|_{s-2,p}\\
&+  \left\|\left(\frac12\tau-\frac14\theta_-\right)(\gamma_I(\phi))^3 -\frac14S(\nu,\nu)(\gamma_I(\phi))^{-3}-a_I\gamma_I(\phi)\right\|_{s-1-\frac1p,p;\Sigma_I}\nonumber\\
&+\|g-a_E\gamma_E(\phi)\|_{s-1-\frac1p,p;\Sigma_E},\nonumber \\
& = \|f_{\biw}^s(\phi)\|_{s-2,p}+ \|h^s(\phi)\|_{s-1-\frac1p,p;\Sigma_I}+\|g^s\|_{s-1-\frac1p,p;\Sigma_E}.\nonumber
\end{align}

By applying Lemma~29 from~\cite{HNT07b} (recalled as Lemma A.6 in~\cite{HoTs10a}), for any $\tilde{s}\in(\frac3p,s]$, $s-2\in[-1,1]$ and $\frac1p\in(\frac{s-1}2\delta,1-\frac{3-s}2\delta)$ with $\delta=\frac1p-\frac{\tilde{s}-1}3$, we have
\begin{align}\label{eq3:1aug13}
\|f_{\biw}^s(\phi)\|_{s-2,p}
&\leqs  C \Big(\|a_{\tau}\|_{s-2,p}\,\|\phi_{+}^4\|_{\infty}
        + \|a_{\rho}\|_{s-2,p} \, \|\phi_{-}^{-4}\|_{\infty}\Big.
\\
&\quad \Big. + \|a_{\biw}\|_{s-2,p} \, (\|\phi_{-}^{-8}\|_{\infty}+\|\psi\|_{\tilde{s},p})
  + \|a\|_{s-2,p}\Big)\|\phi\|_{\tilde{s},p}.\nonumber
\end{align}

Let us verify that $\frac1p$ is indeed in the prescribed range.
First, given the assumptions on $\tilde{s}$ we have
${\delta=\frac13+\frac1p-\frac{\tilde{s}}3<\frac{4}{3\alpha}}$.
By subsequently taking into account $s\geqs1$,
we infer that
${1-\frac{3-s}2\delta\geqs1-\frac{4}{3\alpha}=\frac{3\alpha-4}{3\alpha}}$.
This shows $\frac1p<1-\frac{3-s}2\delta$ for $p>\frac{3\alpha}{3\alpha-4}$, which is not sharp, but will be sufficient for our analysis.
For the other bound, we need $\frac1p>\frac{s-1}2\delta$.  Given
that $\delta < \frac{4}{3\alpha}$, we have that $\frac{s-1}2\delta < \frac{2(s-1)}{3\alpha}$
and because $1 \leqs  s \leqs  3-\frac1{p'}$, we have $\frac{2(s-1)}{3\alpha} <\frac{4}{3\alpha}$.  The
assumption that $p< \frac{\alpha+1}{3}$ implies that $p < \frac{3\alpha}{4}$,
and therefore that $\frac{s-1}2\delta<\frac{4}{3\alpha}< \frac1p$. 
So $\frac1p$ is in the prescribed range.

Applying Lemma~29 from~\cite{HNT07b} again we have that 
\begin{align}\label{eq1:1aug13}
\|h(\phi)\|_{s-1-\frac1p,p;\Sigma_I} 
   &\leqs  C_1\left( \|\tau\|_{s-1-\frac1p,p;\Sigma_I}\|(\gamma_I(\phi))^2\|_{\infty;\Sigma_I} \right.\nonumber \\
   & \qquad + \|\theta_-\|_{s-1-\frac1p,p;\Sigma_I}\|(\gamma_I(\phi))^2\|_{\infty;\Sigma_I}  \nonumber\\
   & \qquad \left.+\|a_I\|_{s-1-\frac1p,p;\Sigma_I} \right) \|\gamma_I(\phi)\|_{\tilde{s}-\frac1p,p;\Sigma_I}  \\
   & \qquad +C_2\left( \|\tau\|_{s-1-\frac1p,p;\Sigma_I} + \|\theta_-\|_{s-1-\frac1p,p;\Sigma_I}\right)\|B\|^5_{\infty}\|B\|_{\tilde{s}-\frac1p,p;\Sigma_I}   \nonumber
   \end{align}
where we have used the fact that 
$S(\nu,\nu) = (2\gamma_I(\tau)+|\theta_-|/2)B^6$.

We again verify the conditions of Lemma~29 from~\cite{HNT07b}.
Given that ${s \in [1,3-\frac{1}{p'}]}$, we observe that 
${s-1-\frac1p \in [-1,1]}$.
We also require 
$\frac1p \in (\frac{s-\frac1p}{2}\delta, 1-\frac{2+\frac1p-s}{2}\delta)$. 
We observe
that $0<\delta = \frac1p-(\frac{\tilde{s}-\frac1p-1}{2}) = \frac12-(\frac{\tilde{s}-\frac3p}{2}) < \frac2{\alpha}$ since $\tilde{s} \in (1+\frac3p-\frac{4}{\alpha}, 1+\frac3p)$.  Since $s \geqs 1$, we have
that $1-\frac{2+\frac1p-s}{2}\delta \geqs 1 - \frac{1+\frac1p}{\alpha} = 1-\frac{1}{\alpha}-\frac{1}{\alpha p}$.  Requiring $\frac1p < 1-\frac{1}{\alpha}-\frac{1}{\alpha p}$
implies that $p > \frac{\alpha+1}{\alpha-1}$, which holds by assumption.  Finally, we observe that $\frac{s-\frac1p}{2}\delta < \frac{s-\frac1p}{\alpha} \leqs  \frac{3-\frac1p}{\alpha} < \frac1p$ if
$p < \frac{\alpha+1}{3}$, which holds by assumption.

We must now bound the terms in~\eqref{eq1:1aug13} involving the $L^{\infty}(\Sigma_I)$ norm of $\gamma_I(\phi)$ in terms of our sub-and-super solutions.  Let $u \in W^{s,p}$. 
Then there exists a sequence $\{u_m\} \subset C^{\infty}(\overline{\cM})$ such that $u_m \to u$ in $W^{s,p}$, and because $s > \frac3p$, $u_m \to u$ in $L^{\infty}$.
Moreover, by the continuity of $\{u_m\}$ we clearly have that
$$
\|\gamma_I(u_m)\|_{\infty;\Sigma_I}  =  \sup_{x\in \Sigma_I}|u_m(x)| \leqs  \|u_m\|_{\infty} \leqs  \|u\|_{\infty}+\e(m),
$$
where $\e(m) \to 0$ as $m \to \infty$.  We therefore have that $\|u\|_{\infty;\Sigma_I} \leqs  \|u\|_{\infty}$.
Using this fact, \eqref{eq1:1aug13} becomes
\begin{align}\label{eq4:1aug13}
\|h(\phi)\|_{s-1-\frac1p,p;\Sigma_I} 
   &\leqs  C_1\left( \|\tau\|_{s-1-\frac1p,p;\Sigma_I}\|\phi_+^2\|_{\infty} \right.\nonumber \\
   & \qquad + \|\theta_-\|_{s-1-\frac1p,p;\Sigma_I}\|\phi_+^2\|_{\infty}  \nonumber\\
   & \qquad \left.+\|a_I\|_{s-1-\frac1p,p;\Sigma_I} \right) \|\gamma_I(\phi)\|_{\tilde{s}-\frac1p,p;\Sigma_I}  \\
   & \qquad + C_2\left(\|\tau\|_{s-1-\frac1p,p;\Sigma_I} + \|\theta_-\|_{s-1-\frac1p,p;\Sigma_I}\right)\|B\|^5_{\infty}\|B\|_{\tilde{s}-\frac1p,p;\Sigma_I}. \nonumber
\end{align}

Similarly, we have that
\begin{align}\label{eq9:21aug13}
\|g^s\|_{s-1-\frac1p,p;\Sigma_E} \leqs  \|g\|_{s-1-\frac1p,p;\Sigma_I}+C\|a_E\|_{s-1-\frac1p,p;\Sigma_E}\|\gamma_E(\phi)\|_{\tilde{s}-\frac1p,p;\Sigma_E}.
\end{align}

Combining
Eqs~\eqref{eq2:1aug13}, \eqref{eq3:1aug13}, \eqref{eq4:1aug13},\eqref{eq9:21aug13} and utilizing the Trace Theorem to obtain the bound 
$$
\|\phi\|_{\tilde{s}-\frac1p,p;\Sigma_i} \leqs  \|\phi\|_{\tilde{s},p}, \quad \text{for} \quad i \in \{I,E\},
$$
we have that 
\begin{align}\label{eq5:1aug13}
\|F_{\tbw}^s(\phi)\|_{\mathbb{X}} \leqs  C_1(1+ \|a_{\tbw}\|_{s-2,p})\|\phi\|_{\tilde{s},p}+ C_2.
\end{align}

To finalize the proof of \eqref{e:T-cpt}, note that the operator $A_{L}^s$ is invertible by Lemma B.8 in~\cite{HoTs10a}, since the function $\ba_s$ is positive.
The inverse $(A_L^s)^{-1}:\mathbb{X} \to W^{s,p}$ is bounded by the Bounded Inverse Theorem; this gives \eqref{e:T-cpt}, with possibly different constants than in \eqref{eq5:1aug13}.

The continuity of the mapping $F_{\tbw}^s:[\phi_{-},\phi_{+}]_{s,p}\to \mathbb{X}$ for any $a_{\biw}\in W^{s-2,p}$ follows because
$F_{\tbw}^s$ is a composition of continuous maps, and the continuity of
$a_{\biw}\mapsto F^s_{\biw}(\phi)$ for fixed $\phi\in[\phi_{-},\phi_{+}]_{s,p}$ is obvious.
Being the composition of continuous maps, $(\phi,a_{\biw})\mapsto T_{\biw}^s(\phi)$ is also continuous.
\end{proof}


The following lemma shows that by choosing the shift sufficiently large, 
we can make the map $T^s$ monotone increasing.
This result is important for ensuring that the
Picard map $T$ for the Hamiltonian constraint is invariant
on the interval $[\phi_-,\phi_+]$ defined by sub- and super-solutions.
There is an obstruction that the scalar curvature and should be continuous,
which is a sufficient condition to guarantee that pointwise multiplication
in the space $W^{s,p}\otimes W^{s-2,p} \to W^{s-2,p}$ be continuous
operation.  This assumption can be handled in the general case by 
conformally transforming the metric to a metric with 
continuous scalar curvature and using the conformal covariance 
of the Hamiltonian constraint, cf. Section \ref{sec:proof1} and Lemma~\ref{l:conf-inv}.
(We omit explicitly writing the trace maps $\gamma_I$ and $\gamma_E$ for quantities evaluated on the boundaries $\Sigma_I$ and $\Sigma_E$ in the statement of Lemma~\ref{l:shift1} below without danger of confusion.)

\begin{lemma}
{\bf (Monotone increasing property of $T$)}
\label{l:shift1}
In addition to the conditions of Lemma~\ref{l:shift},
let $a_{\tiR}, H$ and $c$ be continuous and define the shift function $\ba_s = \ba + a_{\biw}\Psi$, where
$\ba$ and $a_{\biw}\Psi$ are as in Lemma~\ref{l:shift} and
\begin{align}\label{T:HC-E-def-alpha}
&a = \max\{1,a_R\}+5a_{\tau}\phi_+^4+3a_{\rho}\frac{\phi_+^2}{\phi_-^{6}}, \quad \psi = 7\frac{\phi_+^6}{\phi_-^{14}},  \\
&a_I = \max\{1,H\}+\frac32 |\tau|\phi_+^2+ \frac34|\theta_-|\phi_+^2+\frac{3B^6}{4}(2|\tau|+|\theta_-|/2)\frac{\phi_+^{2}}{\phi_-^6}, \quad \text{on $\Sigma_I$} \nonumber \\
&a_E = \max\{1,c\}, \quad \text{on $\Sigma_E$}. \nonumber
\end{align}
Then, for any fixed $a_{\tbw}\in W^{s-2,p}$, the map $\phi\mapsto T^s(\phi,a_{\tbw}) : [\phi_{-},\phi_{+}]_{s,p}\to W^{s,p}$ is monotone increasing.
\end{lemma}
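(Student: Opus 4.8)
The plan is to follow the standard ordered-Banach-space argument used for semilinear elliptic problems with a shifted Picard map: since $T^s(\phi,a_{\tbw}) = -(A_L^s)^{-1} F_{\tbw}^s(\phi)$ and $A_L^s$ satisfies the maximum principle (Lemma~B.8 in~\cite{HoTs10a}), i.e. $A_L^s u \leqs A_L^s v \Rightarrow u \leqs v$, it suffices to show that $F_{\tbw}^s$ is monotone \emph{decreasing} on $[\phi_-,\phi_+]_{s,p}$: if $\phi_- \leqs \phi_1 \leqs \phi_2 \leqs \phi_+$ then $F_{\tbw}^s(\phi_1) \geqs F_{\tbw}^s(\phi_2)$ componentwise in $\mathbb{X}$. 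Indeed, then $A_L^s T^s(\phi_1,a_{\tbw}) = -F_{\tbw}^s(\phi_1) \leqs -F_{\tbw}^s(\phi_2) = A_L^s T^s(\phi_2,a_{\tbw})$, and the maximum principle gives $T^s(\phi_1,a_{\tbw}) \leqs T^s(\phi_2,a_{\tbw})$. So the whole proof reduces to a pointwise monotonicity check on each of the three components of $F_{\tbw}^s(\phi) = F_{\tbw}(\phi) - \ba_s \phi$, using the explicit shift coefficients $a$, $\psi$, $a_I$, $a_E$ in~\eqref{T:HC-E-def-alpha}.

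First I would treat the interior (bulk) component. Here $f^s_{\tbw}(\phi) = a_\tau \phi^5 - a_{\tbw}\phi^{-7} - a_\rho \phi^{-3} - (a + a_{\tbw}\psi)\phi$, and I want to show $\phi \mapsto f^s_{\tbw}(\phi)$ is monotone decreasing for $\phi \in [\phi_-,\phi_+]$. The clean way is to observe that monotone decreasing of a difference of the two endpoint values follows from a pointwise bound on the derivative: writing $g(\phi) = a_\tau \phi^5 - a_{\tbw}\phi^{-7} - a_\rho \phi^{-3}$, one has $g'(\phi) = 5a_\tau \phi^4 + 7 a_{\tbw}\phi^{-8} + 3a_\rho \phi^{-4}$, which for $\phi_- \leqs \phi \leqs \phi_+$ is bounded above by $5a_\tau \phi_+^4 + 7 a_{\tbw}\phi_-^{-8} + 3a_\rho \phi_+^2 \phi_-^{-6}$ (using that $a_\tau, a_\rho, a_{\tbw} \geqs 0$ and comparing powers carefully — note $\phi^{-4} \leqs \phi_-^{-6}\phi_+^2$ only needs $\phi \leqs \phi_+$ and $\phi \geqs \phi_-$ via $\phi^2 \leqs \phi_+^2$, $\phi^{-6} \leqs \phi_-^{-6}$). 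This is precisely $\leqs (a + a_{\tbw}\psi)$ with the choices $a = \max\{1,a_R\} + 5a_\tau\phi_+^4 + 3a_\rho \phi_+^2\phi_-^{-6}$ and $\psi = 7\phi_+^6\phi_-^{-14}$ — here $a_{\tbw}\phi_-^{-8} \leqs a_{\tbw}\psi$ requires $\phi_-^{-8} \leqs 7\phi_+^6\phi_-^{-14}$, i.e. $\phi_-^6 \leqs 7\phi_+^6$, which holds since $\phi_- \leqs \phi_+$ (the factor $7$ being generous). Hence $g(\phi) - (a+a_{\tbw}\psi)\phi$ has nonpositive derivative on $[\phi_-,\phi_+]$, so it is decreasing there; the pointwise decrease at a.e.\ point then gives $f^s_{\tbw}(\phi_1) \geqs f^s_{\tbw}(\phi_2)$ in $W^{s-2,p}$ via the order cone~\eqref{E:wkp-cone}, since $W^{s-2,p}_+$ contains a.e.-nonnegative functions when $s-2 \geqs 0$ and, more generally, the inequality is tested against $C^\infty_+$ functions. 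The term $\max\{1,a_R\}$ in $a$ plays no role in monotonicity here but is needed later (Lemma~\ref{l:shift}) for positivity/invertibility; it only makes $f^s_{\tbw}$ "more decreasing," which is harmless.

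Next I would do the interior boundary component $h^s(\phi) = (\tfrac12\tau - \tfrac14\theta_-)(\gamma_I\phi)^3 - \tfrac14 S(\nu,\nu)(\gamma_I\phi)^{-3} - a_I \gamma_I\phi$, using the substitution $S(\nu,\nu) = (2\gamma_I\tau + |\theta_-|/2)B^6$ (recorded in the proof of Lemma~\ref{l:shift}). The same derivative-bound strategy applies on $\Sigma_I$: differentiating in $\phi$ gives $\tfrac32(\tau - \tfrac12\theta_-)\phi^2 + \tfrac34(2\tau + |\theta_-|/2)B^6 \phi^{-4}$, and on $[\phi_-,\phi_+]$ this is bounded above by $\tfrac32|\tau|\phi_+^2 + \tfrac34|\theta_-|\phi_+^2 + \tfrac34 B^6(2|\tau| + |\theta_-|/2)\phi_+^2\phi_-^{-6}$ (crudely bounding $|\tau - \tfrac12\theta_-| \leqs |\tau| + \tfrac12|\theta_-|$ and $\phi^{-4} \leqs \phi_+^2\phi_-^{-6}$, and dropping signs on the coefficients to get a worst-case bound) which is exactly $a_I - \max\{1,H\} \leqs a_I$. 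So $h^s$ is monotone decreasing on $\Sigma_I$, giving the middle-component inequality in $W^{s-1-1/p,p}(\Sigma_I)$. The exterior boundary component is $g^s(\phi) = -g - a_E\gamma_E\phi$, which is trivially monotone decreasing since $a_E = \max\{1,c\} \geqs 0$ and $g$ is $\phi$-independent. Assembling the three componentwise inequalities and invoking the product order cone~\eqref{sum-cone}, $F_{\tbw}^s(\phi_1) \geqs F_{\tbw}^s(\phi_2)$ in $\mathbb{X}$, and then the maximum principle for $A_L^s$ finishes the argument.

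The main obstacle — really the only subtle point — is making the "pointwise derivative bound implies order-cone monotonicity" step rigorous in the fractional Sobolev / dual-pairing order structure: when $s - 2 < 0$ the elements of $W^{s-2,p}$ are distributions and one cannot literally take pointwise values, so one must argue that for $\phi_1 \leqs \phi_2$ the difference $F_{\tbw}^s(\phi_1) - F_{\tbw}^s(\phi_2)$, which (by the mean value / fundamental theorem of calculus representation) equals pointwise a.e.\ a nonnegative $L^1_{\mathrm{loc}}$ (indeed Sobolev, by the composition estimates of Lemma~29 of~\cite{HNT07b} already invoked in Lemma~\ref{l:shift}) function, therefore pairs nonnegatively with every $\varphi \in C^\infty_+$, hence lies in $W^{s-2,p}_+$. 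The continuity/composition machinery needed for this — that $\phi_i \in W^{s,p}$ with $sp>3$ makes all the nonlinear terms well-defined elements of $W^{s-2,p}$ and that the difference is represented by the a.e.-pointwise formula — has essentially been set up already in the proof of Lemma~\ref{l:shift}, so I would cite that and keep this verification brief, the way the closed-manifold analogue is handled in~\cite{HNT07b}.
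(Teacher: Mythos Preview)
Your proof is correct and follows essentially the same approach as the paper: show that $(A_L^s)^{-1}$ is monotone increasing via the maximum principle (the paper cites Lemma~B.7 rather than B.8 of \cite{HoTs10a} for this), and then show componentwise that $F_{\tbw}^s$ is monotone decreasing on $[\phi_-,\phi_+]$ so that the composition is monotone increasing. The only cosmetic difference is that the paper bounds the differences $\phi_2^n-\phi_1^n$ and $\phi_2^{-n}-\phi_1^{-n}$ directly via the algebraic identities~\eqref{HC-phin} instead of passing through a derivative bound and the mean value theorem, but these yield exactly the same shift coefficients and the same order-cone argument you describe.
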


\begin{proof}
By Lemma~B.7 in~\cite{HoTs10a} the shifted operator $A_L^s$ satisfies the maximum principle,
hence the inverse $(A_L^s)^{-1}:\mathbb{X} \to W^{s,p}$ is monotone increasing.

Now we will show that the operator $F_{\biw}^s$ is monotone
decreasing in $\phi$. Given any two functions $\phi_2$, $\phi_1\in
[\phi_{-},\phi_{+}]_{s,p}$ with $\phi_2\geqs\phi_1$, we have
\begin{align}
&F_{\biw}^s(\phi_2) - F_{\biw}^s(\phi_1)= \\
&\left( \begin{array}{c}a_{\tau} 
\bigl[ \phi_2^5 - \phi_1^5 \bigr]
- (a+a_{\biw}\psi) [\phi_2-\phi_1]
- a_{\rho} \bigl[ \phi_2^{-3} - \phi_1^{-3}\bigr]
- a_{\biw} \bigl[ \phi_2^{-7} - \phi_1^{-7}\bigr],\\
(\frac12\tau-\theta_-/4)[(\gamma_I(\phi_2))^3-(\gamma_I(\phi_1))^3]\\
-((\frac12\tau+\frac18|\theta_-|) B^6 )[(\gamma_I(\phi_2))^{-3}
-(\gamma_I(\phi_1))^{-3}] -a_I[\gamma_I(\phi_2)-\gamma_I(\phi_1)], \\
-a_E[\gamma_E(\phi_2)-\gamma_E(\phi_1)]\end{array}\right).  \nonumber
\end{align}
The inequalities \eqref{HC-phin}, the condition $0<\phi_1\leqs\phi_2$, and the choice of $\ba_s$ imply that
\[
F_{\biw}^s(\phi_2) - F_{\biw}^s(\phi_1) \leqs 0,
\]
which establishes that $F_{\biw}^s$ is monotone decreasing.

Both the operator $(A_L^s)^{-1}$ and the
map $-F_{\biw}^s$ are monotone increasing, therefore the composition map
$T^s(\cdot,a_{\biw}) = -(A_L^s)^{-1} F_{\biw}^s(\cdot)$ 
is also monotone increasing.
\end{proof}

\begin{lemma}
{\bf (Barriers for $T$ and the Hamiltonian constraint)}
\label{l:shiftsubsup}
Let the conditions of Lemma \ref{l:shift1} hold, with $\phi_{-}$ and $\phi_{+}$ sub- and super-solutions of the Hamiltonian constraint equation \eqref{HC-LYs1}, respectively.
Then, we have
$T^s(\phi_{+},a_{\tbw})\leqs\phi_{+}$ and $T^s(\phi_{-},a_{\tbw})\geqs\phi_{-}$.
\end{lemma}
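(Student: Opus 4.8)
The plan is to obtain both inequalities directly from the definitions of sub- and super-solution, \eqref{WF-Sb}--\eqref{WF-Sp}, together with the maximum principle for the shifted operator $A_L^s$, exactly as in the closed-manifold argument of~\cite{HNT07b}. Fix the coefficient $a_{\tbw}\in W^{s-2,p}$ appearing in $T^s(\cdot,a_{\tbw})$, and assume $\phi_{-}$ satisfies \eqref{WF-Sb} and $\phi_{+}$ satisfies \eqref{WF-Sp} relative to this $a_{\tbw}$ (this is exactly what the global barrier construction of Section~\ref{sec:barriers} will supply). Recall from \eqref{HC-def-As}--\eqref{HC-def-fs} that $A_L^s\phi = A_L\phi + \ba_s\phi$ and $F^s_{\biw}(\phi) = F_{\biw}(\phi) - \ba_s\phi$, with $\ba_s = \ba + a_{\tbw}\Psi > 0$ as in Lemma~\ref{l:shift1}, and that by \eqref{T:HC-E-def-Tw} the element $T^s(\phi,a_{\tbw})$ is the unique solution $u\in W^{s,p}$ of $A_L^s u = -F^s_{\biw}(\phi)$.

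For the super-solution I would start from \eqref{WF-Sp}, namely $A_L(\phi_{+}) + F(\phi_{+},\biw)\geqs 0$ in $\mathbb{X}$, where the inequality is taken with respect to the product order cone \eqref{sum-cone}, so that it holds componentwise in the interior, on $\Sigma_I$, and on $\Sigma_E$. Adding $\ba_s\phi_{+}$ to both sides and rearranging,
\[
A_L^s(\phi_{+}) = A_L(\phi_{+}) + \ba_s\phi_{+} \geqs \ba_s\phi_{+} - F(\phi_{+},\biw) = -F^s_{\biw}(\phi_{+}) = A_L^s\bigl(T^s(\phi_{+},a_{\tbw})\bigr).
\]
Since $A_L^s$ is invertible (Lemma~B.8 of~\cite{HoTs10a}, using $\ba_s>0$) and $(A_L^s)^{-1}$ is monotone increasing by the maximum principle (Lemma~B.7 of~\cite{HoTs10a}, already invoked in Lemma~\ref{l:shift1}), applying $(A_L^s)^{-1}$ to the displayed inequality gives $\phi_{+}\geqs T^s(\phi_{+},a_{\tbw})$.

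The sub-solution case is the mirror image: from \eqref{WF-Sb}, $A_L(\phi_{-}) + F(\phi_{-},\biw)\leqs 0$; adding $\ba_s\phi_{-}$ gives $A_L^s(\phi_{-})\leqs -F^s_{\biw}(\phi_{-}) = A_L^s\bigl(T^s(\phi_{-},a_{\tbw})\bigr)$, and monotonicity of $(A_L^s)^{-1}$ yields $\phi_{-}\leqs T^s(\phi_{-},a_{\tbw})$. Combined with the monotone increasing property of $\phi\mapsto T^s(\phi,a_{\tbw})$ from Lemma~\ref{l:shift1}, these two bounds immediately give the invariance $T^s(\cdot,a_{\tbw}) : [\phi_{-},\phi_{+}]_{s,p}\to[\phi_{-},\phi_{+}]_{s,p}$, which is the form in which the result enters the application of Theorem~\ref{T:FIXPT2}.

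I do not expect an analytic obstacle here: the lemma is essentially an abstract consequence of the maximum principle for $A_L^s$ (already established) and the definition of the shift. The one point that genuinely needs care is bookkeeping --- making sure the $a_{\tbw}$ entering $T^s(\cdot,a_{\tbw})$ is one for which $\phi_{\pm}$ really are sub-/super-solutions, and that the ordering used for the maximum principle is the product cone on $\mathbb{X} = W^{s-2,p}\times W^{s-1-\frac1p,p}(\Sigma_I)\times W^{s-1-\frac1p,p}(\Sigma_E)$, so that the boundary components of \eqref{WF-Sb}--\eqref{WF-Sp} are respected when $\ba_s = (a+a_{\tbw}\psi,a_I,a_E)$ is added to both sides. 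The real difficulty in making $T^s$ invariant on $[\phi_{-},\phi_{+}]$ resides entirely in constructing the compatible global barriers in Section~\ref{sec:barriers}, not in this lemma.
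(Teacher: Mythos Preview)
Your proof is correct and is essentially the same as the paper's: both reduce to observing that $\phi_{+}-T^s(\phi_{+},a_{\tbw})=(A_L^s)^{-1}[A_L^s\phi_{+}+F^s_{\biw}(\phi_{+})]=(A_L^s)^{-1}[A_L\phi_{+}+F_{\biw}(\phi_{+})]\geqs0$ by the super-solution property and monotonicity of $(A_L^s)^{-1}$, with the sub-solution case analogous. The paper's version is just a one-line compression of your ``add $\ba_s\phi_{+}$ to both sides'' step.
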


\begin{proof}
We have
\begin{equation*}
\phi_{+}-T^s(\phi_{+},a_{\biw})
=(A_{\tiL}^s)^{-1}\bigl[A_{\tiL}^s(\phi_{+})+F_{\biw}^s(\phi_{+})\bigr],
\end{equation*}
which is nonnegative since $\phi_{+}$ is a super-solution and $(A_{\tiL}^s)^{-1}$ is linear and monotone increasing.
The proof of the other inequality is completely analogous.
\end{proof}

Since we are no longer using normal order cones, 
our non-empty, convex, closed interval $[\phi_-,\phi_+]_{s,p}$ 
is not necessarily bounded as a subset of $W^{s,p}$.
Therefore, we also need {\em a priori} bounds in the norm on $W^{s,p}$
to ensure the Picard iterates stay inside the intersection of the interval 
with the closed ball $\overline{B}_M$ in $W^{s,p}$ of radius $M$, centered at the origin.
We first establish a lemma to this effect that will be useful for 
both the non-CMC and CMC cases.

\begin{lemma}
{\bf (Invariance of $T$ on the ball $\overline{B}_M$)}
\label{T:HC-ball-gen}
Let the conditions of Lemma \ref{l:shift} hold, and let $a_{\tbw}\in W^{s-2,p}$.
Additionally assume that $s \in (1+\frac3p-\frac{4}{\alpha},\infty)$.  Then for any $\tilde{s}\in(1+\frac3p-\frac4{\alpha},\min\{s,1+\frac3p\})$ with $(\alpha >4)$, and for some $t\in(1+\frac3p-\frac{4}{\alpha},\tilde{s})$, there exists a closed ball $\overline{B}_M\subset W^{\tilde{s},p}$ 
of radius $M=\cO\left([1+\|a_{\tbw}\|_{s-2,p}]^{\tilde{s}/(\tilde{s}-t)}\right)$
such that
\begin{equation*}
\phi \in [\phi_-,\phi_+]_{\tilde{s},p}\cap \overline{B}_M
\quad\Rightarrow\quad
T^s(\phi,a_{\tbw})\in\overline{B}_M.
\end{equation*}
\end{lemma}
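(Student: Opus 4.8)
The plan is to read off invariance on a ball from the smoothing estimate \eqref{e:T-cpt} of Lemma~\ref{l:shift} combined with an interpolation inequality, exploiting that on the order interval $[\phi_-,\phi_+]$ the $L^p$-norm is automatically bounded by a fixed constant. The point is that \eqref{e:T-cpt} by itself only yields $\|T^s(\phi,a_{\tbw})\|_{s,p}\leqs C(1+\|a_{\tbw}\|_{s-2,p})M+C$ when $\|\phi\|_{\tilde s,p}\leqs M$, which is linear in $M$ and hence too weak to close the invariance; the remedy is to feed \eqref{e:T-cpt} the strictly lower input index $t$, bound $\|\phi\|_{t,p}$ by interpolating the uniform $L^p$ bound on $[\phi_-,\phi_+]$ against the ball bound $M$ in $W^{\tilde s,p}$, and thereby gain a \emph{sublinear} power $M^{t/\tilde s}$.

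First I would record an a priori $L^p$ bound: if $\phi\in[\phi_-,\phi_+]_{\tilde s,p}$, then since $\phi_\pm\in W^{s,p}\hookrightarrow L^\infty$ (as $sp>3$) and, for functions of this regularity, the order relation of $W^{\tilde s,p}$ coincides with the pointwise a.e.\ order, one has $0<\phi_-\leqs\phi\leqs\phi_+$ a.e.; hence $\|\phi\|_\infty\leqs\|\phi_+\|_\infty=:C_\infty$ and $\|\phi\|_p\leqs\vol(\cM)^{1/p}C_\infty$, with $C_\infty$ independent of $\phi$, $M$, and $a_{\tbw}$. Next, fix $t\in(1+\frac3p-\frac4\alpha,\tilde s)$ as in the statement (the interval is nonempty exactly because $\tilde s>1+\frac3p-\frac4\alpha$, and $0<t<\tilde s$); a standard Gagliardo--Nirenberg/interpolation inequality gives, for $\phi\in[\phi_-,\phi_+]_{\tilde s,p}\cap\overline B_M$,
\[
\|\phi\|_{t,p}\leqs C\,\|\phi\|_p^{\,1-t/\tilde s}\,\|\phi\|_{\tilde s,p}^{\,t/\tilde s}\leqs C\,C_\infty^{\,1-t/\tilde s}\,M^{\,t/\tilde s}.
\]

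Since $t$ lies in the range $(1+\frac3p-\frac4\alpha,1+\frac3p)$ admissible for the input index in Lemma~\ref{l:shift} — the proof of that lemma establishes \eqref{e:T-cpt} for any such index, the exponent conditions for Lemma~29 of \cite{HNT07b} being the same as there — we get $\|T^s(\phi,a_{\tbw})\|_{s,p}\leqs C_1(1+\|a_{\tbw}\|_{s-2,p})\|\phi\|_{t,p}+C_2$. Using $W^{s,p}\hookrightarrow W^{\tilde s,p}$ (valid since $\tilde s<s$) together with the previous display,
\[
\|T^s(\phi,a_{\tbw})\|_{\tilde s,p}\leqs C_3\,(1+\|a_{\tbw}\|_{s-2,p})\,M^{\,t/\tilde s}+C_4,
\]
with $C_3,C_4$ independent of $\phi$, $M$, and $a_{\tbw}$. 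Now write $N=1+\|a_{\tbw}\|_{s-2,p}\geqs1$ and set $M=K\,N^{\tilde s/(\tilde s-t)}$. Then $N\,M^{t/\tilde s}=K^{t/\tilde s}N^{\tilde s/(\tilde s-t)}$, so the bound becomes $\|T^s(\phi,a_{\tbw})\|_{\tilde s,p}\leqs\bigl(C_3K^{t/\tilde s}+C_4N^{-\tilde s/(\tilde s-t)}\bigr)N^{\tilde s/(\tilde s-t)}$; since $t/\tilde s<1$ one may fix $K$ (depending only on $C_3,C_4$) large enough that $C_3K^{t/\tilde s}+C_4\leqs K$, giving $\|T^s(\phi,a_{\tbw})\|_{\tilde s,p}\leqs M$. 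Thus $\overline B_M\subset W^{\tilde s,p}$ with this $M=\cO\bigl((1+\|a_{\tbw}\|_{s-2,p})^{\tilde s/(\tilde s-t)}\bigr)$ is invariant under $T^s$ on $[\phi_-,\phi_+]_{\tilde s,p}\cap\overline B_M$.

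The main obstacle is the observation underlying this argument: a direct use of \eqref{e:T-cpt} is not enough, and the correct move is to lower the input index to $t<\tilde s$ and interpolate against the uniform $L^p$ control on $[\phi_-,\phi_+]$, which converts a linear-in-$M$ estimate into a sublinear one. The only technical point that needs care is checking that $t$ is an admissible input index for Lemma~\ref{l:shift}, which holds because $\alpha>4$ forces $1+\frac3p-\frac4\alpha>\frac3p$, so $t>\frac3p$ and the exponent inequalities there are unaffected.
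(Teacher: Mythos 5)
Your proposal is correct and follows essentially the same route as the paper: apply the estimate \eqref{e:T-cpt} with the lowered input index $t<\tilde s$, interpolate $\|\phi\|_{t,p}$ between the uniform $L^p$ bound coming from the order interval $[\phi_-,\phi_+]$ and the $W^{\tilde s,p}$ ball bound $M$, and exploit the resulting sublinear dependence $M^{t/\tilde s}$ to close the invariance with $M=\cO\bigl([1+\|a_{\tbw}\|_{s-2,p}]^{\tilde s/(\tilde s-t)}\bigr)$. The only cosmetic difference is that you use the multiplicative interpolation inequality and fix the constant $K$ afterwards, whereas the paper uses the equivalent additive $\varepsilon$-form and ties $\varepsilon$ to $1+\|a_{\tbw}\|_{s-2,p}$; both yield the same exponent.
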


\begin{proof}
From Lemma \ref{l:shift}, there exist $t\in(1+\frac3p-\frac4{\alpha},\tilde{s})$ and $C_1, C_2 >0$ such that
$$
\|T^s(\phi,a_{\biw})\|_{\tilde{s},p}\leqs C_1(1+\|a_{\biw}\|_{s-2,p})\|\phi\|_{t,p}+C_2, \qquad
\forall\phi \in [\phi_-,\phi_+]_{\tilde{s},p}.
$$
For any $\varepsilon>0$, the norm $\|\phi\|_{t,p}$ can be bounded by the interpolation estimate
$$
\|\phi\|_{t,p}\leqs\varepsilon\|\phi\|_{\tilde{s},p}+C\varepsilon^{-t/(\tilde{s}-t)}\|\phi\|_p,
$$
where $C$ is a constant independent of $\varepsilon$.
Since $\phi$ is bounded from above by $\phi_+$, $\|\phi\|_p$ is bounded uniformly,
and now demanding that $\phi\in\overline{B}_M$, we get
\begin{equation}\label{e:Tbound}
\|T^s(\phi,a_{\biw})\|_{\tilde{s},p}\leqs C_1[1+\|a_{\biw}\|_{s-2,p}]\left(M\varepsilon+C\varepsilon^{-t/(\tilde{s}-t)}\right)+ C_2,
\end{equation}
with a possibly different constant $C_1$.
Choosing $\varepsilon$ such that $2\varepsilon C_1[1+\|a_{\biw}\|_{s-2,p}]=1$ and setting $M=2(CC_1[1+\|a_{\biw}\|_{s-2,p}]\varepsilon^{-t/(\tilde{s}-t)}+C_2)$, we can ensure that
the right hand side of \eqref{e:Tbound} is bounded by $M$.
\end{proof}

\section{Barriers for the Hamiltonian constraint}
\label{sec:barriers}

The results developed in~\S\ref{sec:hamiltonian} for a particular
fixed-point map $T$ for analyzing the Hamiltonian constraint equation 
and the coupled system rely on the existence
of generalized (weak) sub- and super-solutions, or barriers.
There, the Hamiltonian constraint with Robin boundary conditions was studied in isolation from the 
momentum constraint with Robin type boundary conditions, and these generalized barriers only needed to
satisfy the conditions given at the beginning of~\S\ref{sec:hamiltonian}
for a given fixed function $\biw$ appearing as a source term in the
nonlinearity of the Hamiltonian constraint.
Therefore, these types of barriers are sometimes referred to as 
{\em local barriers}, in that the coupling to the momentum constraint
is ignored. In order to establish existence results for the
coupled system in the non-CMC case, it will be critical that the
sub- and super-solutions satisfy one additional property that now
reflects the coupling, giving rise to the term {\em global barriers}. 
It will be useful now to define this global property precisely.

\begin{definition}
  \label{D:barriers}
A sub-solution $\phi_{-}$ is called {\bf global} iff it is a
sub-solution of \eqref{HC-LYs1} for all vector fields $\tbw_\phi$
solution of \eqref{MC-LYm1} with source function
$\phi\in[\phi_{-},\infty)\cap W^{s,p}$. A super-solution
$\phi_{+}$ is called {\bf global} iff it is a super-solution of
\eqref{HC-LYs1} for all vector fields $\tbw_\phi$ solution of
\eqref{MC-LYm1} with source function $\phi\in (0,\phi_{+}]\cap W^{s,p}$. A pair $\phi_-\leqs\phi_+$ of sub- and super-solutions is
called an {\bf admissible pair} if $\phi_-$ and $\phi_+$ are sub-
and super-solutions of \eqref{HC-LYs1} for all vector fields
$\tbw_\phi$ of \eqref{MC-LYm1} with source function $\phi\in
[\phi_{-},\phi_{+}]\cap W^{s,p}$.
\end{definition}

It is obvious that if $\phi_-$ and $\phi_+$ are respectively global
sub- and super-solutions, then the pair $\phi_-,\phi_+$ is
admissible in the sense above, provided they satisfy the compatibility
condition~\eqref{eqn:compat}.

Here our primary interests is in 
developing existence results for weak (and strong) non-CMC solutions to the 
coupled system with Robin boundary conditions which are
free of the near-CMC assumption. This assumption had appeared in two
distinct places in all prior literature in the case of closed manifolds~\cite{jIvM96,ACI08}; the first assumption appears in the
construction of a fixed-point argument based on strict
$k$-contractions, and the second assumption appears in the
construction of global super-solutions.  In the case of compact manifolds with
boundary, the only existence results to date require the mean curvature 
to be constant \cite{HoTs10a,SD04,DM05a}.
In this section, we construct global super-solutions for the coupled constraint equations with Robin boundary conditions
that are free of the near-CMC assumption, along with some compatible sub-solutions. These
sub- and super-solution constructions are needed for the general fixed-point result for
the coupled system, leading to our
main non-CMC results (Theorem~\ref{T:main1}).
 
Throughout this section, we will assume that the background metric $h$ belongs to $W^{s,p}$ with $p\in(1,\infty)$ and $s\in(\frac3p,\infty)\cap(1,2]$.
Recall that $r=\frac{3p}{3+(2-s)p}$, so that the
continuous embedding $L^{r}\hookrightarrow W^{s-2,p}$ holds.
Given a symmetric two-index tensor $\sigma\in L^{2r}$ and a vector
field $\biw\in\biW^{1,2r}$, introduce the functions
$a_{\sigma}=\frac18\sigma^2\in L^r$ and $a_{\cL\biw}=\frac18(\cL\biw)^2\in L^{r}$.
Note that under these conditions $a_{\biw}$ belongs to $L^r\hookrightarrow W^{s-2,p}$, and that if $a_{\sigma},a_{\cL\biw}\in L^\infty$, we have the pointwise estimate
\begin{align}\label{eq1:20aug13}
a_{\biw}^{\tiwedge}\leqs 2a_{\sigma}^{\tiwedge}+2a_{\cL\biw}^{\tiwedge}.
\end{align}
Here and in what follows, given any scalar function $u\in L^{\infty}$,
we use the notation
\[
u^{\tiwedge}:= \mbox{ess~sup}\, u,\qquad
u^{\tivee}:= \mbox{ess~inf}\, u.
\]
In the event that a given function $f$ is defined on some portion of the boundary (i.e. on $\Sigma_I$ or $\Sigma_E$), we
slightly abuse notation and let $f^{\tivee}$ and $f^{\tiwedge}$ denote 
$$
f^{\tivee} = \min_{\Sigma_i} (\mbox{ess~inf}|_{\Sigma_i}\ f), \quad \text{and} \quad f^{\tiwedge} = \max_{\Sigma_i} (\mbox{ess~sup}|_{\Sigma_i}\ f).
$$
In some places we will assume that when the vector field $\biw\in\biW^{1,2r}$ is given by the solution of the momentum constraint equation \eqref{MC-LYm1} with the source term $\phi\in W^{s,p}$,
\begin{equation}\label{CS-aLw-bound}
a_{\cL\biw}^{\tiwedge} \leqs \ttk(\phi_+):=
\ttk_1 \, \|\phi_+\|_{\infty}^{12} + \ttk_2,
\end{equation}
with some positive constants $\ttk_1$ and $\ttk_2$ and $\phi_+$ an { \em a priori}
upper bound on $\phi$. 
We can verify this assumption e.g. when the conditions of
Theorem~\ref{T:MC-E-Reg2} and Remark~\ref{rem1:18july13} are satisfied, since from Remark~\ref{rem1:18july13} we get 
\begin{align}\label{eq1:16oct13}
a_{\cL\biw}^{\tiwedge} =& \|\cL\biw\|_{\infty}^2 \\
 \leqs & C^2\left(\|\phi_+\|^6_{\infty}\|\bb_{\tau}\|_{z} +\|\bb_j\|_{e-2,q} +\|\tau B^6\|_{e-1-\frac{1}{q},q;\Sigma_I} \right. \nonumber \\
& \qquad + \left. \|\theta_- B^6\|_{e-1-\frac{1}{q},q;\Sigma_I}
+\|\sigma(\nu,\nu)\|_{e-1-\frac{1}{q},q;\Sigma_I}+\|\bX\|_{e-1-\frac{1}{q},q;\Sigma_I}\right)^2, \nonumber
\end{align}
given that $S(\nu,\nu) = (2\gamma_I(\tau)+|\theta_-|/2)B^6$.
If we apply Lemma 29 in \cite{HoTs10a} to the term 
$$
\|\tau B^6\|_{e-1-\frac{1}{q},q;\Sigma_I},
$$
appearing in \eqref{eq1:16oct13}, we obtain
\begin{align}
\|\tau B^6\|_{e-1-\frac{1}{q},q;\Sigma_I} \leqs C\left(\|\tau\|_{e-1-\frac{1}{q},q;\Sigma_I}\|B\|^5_{\infty}\|B\|_{\tilde{s},p}\right),
\end{align}  
where 
$
q \in (3, \frac{\alpha+1}{3}), ~\alpha > 8,~ e \in (1+\frac3q, 2],~ p \in (3,\infty) ~~\text{ and}~~ \tilde{s} \in (1+\frac3p - \frac{4}{\alpha}, 1+ \frac3p),
$
which is similar to conditions in Lemma~\ref{l:shift}.  A similar bound can obtained
for the term $\|\theta_- B^6\|_{e-1-\frac{1}{q},q;\Sigma_I},$  
and combining these estimates with \eqref{eq1:16oct13} we obtain bound \eqref{CS-aLw-bound} with the constants
\begin{align}\label{e:k1k2}
&\ttk_1 = 2 C^2( \|\bib_{\tau}\|_{z})^2, \\
&\ttk_2 = 2C^2\left(\|\bib_{j}\|_{e-2,q}+\|\sigma(\nu,\nu)\|_{e-1-\frac{1}{q},q;\Sigma_I}+\|\bX\|_{e-1-\frac{1}{q},q;\Sigma_I}\right. \nonumber\\ 
& \qquad \qquad \left.+ \|B\|^5_{\infty}\|B\|_{\tilde{s},p}\left(\|\tau\|_{e-1-\frac{1}{q},q;\Sigma_I}+\|\theta_-\|_{e-1-\frac{1}{q},q;\Sigma_I}\right)\right)^2. \nonumber
\end{align}

\subsection{Constant barriers}
   \label{sec:constant}
Now we will present some global sub- and super-solutions for the
Hamiltonian constraint equation \eqref{HC-LYs1} which are constant functions.
The proofs are based on the arguments in~\cite{HNT07b} for
the case of closed manifolds.  To simplify notation, we will omit
the trace operators $\Tr_{\tiI}$ and $\Tr_{\tiE}$ from the boundary
operators.

\begin{lemma}{\bf (Global super-solution)}
\label{L:HC-Sp}
Let $(\cM,h)$ be a 3-dimensional, smooth, compact
Riemannian manifold with metric $h \in W^{s,p}$, $s>\frac3p$ and
non-empty boundary satisfying the conditions \eqref{eq6:11july13}.
Suppose that the estimate \eqref{CS-aLw-bound} holds for the 
solution of the momentum constraint equation, 
and assume that $a_R$ is uniformly bounded from below, $a_\rho,a_\sigma\in L^\infty$, $c^{\tivee} >0$, 
$~(2\tau^{\tivee} +|\theta_-|^{\tivee}) > 0$ on $\Sigma_I$ and $g$ is uniformly bounded from above.
With the parameter $\varepsilon>0$ to be chosen later,
define the following rational polynomials
\begin{align}
&q_{1,\varepsilon}(\chi)=
(a_{\tau}^{\tivee}-\ttK_{1\varepsilon}) \, \chi^5
+ a_{\tiR}^{\tivee} \, \chi
- a_{\rho}^{\tiwedge}\,\chi^{-3}
- \ttK_{2\varepsilon} \chi^{-7},\\
&q_2(\chi) = \frac12H^{\tivee}\chi+\left(\frac12\tau^{\tivee} + \frac{|\theta_-|}{4}^{\tivee}\right)\chi^3-\left(\frac12\tau^{\tiwedge}+\frac18|\theta_-|^{\tiwedge} \right)B^6\chi^{-3}\\
&q_{3}(\chi) = c^{\tivee}\chi - g^{\tiwedge},
\end{align}
where $\ttK_{1\varepsilon}:=(1+\frac{1}\varepsilon)\ttk_1$ and $\ttK_{2\varepsilon}:=(1+\varepsilon)a_{\sigma}^{\tiwedge}+(1+\frac{1}{\varepsilon})\ttk_2$.

We distinguish the following two cases:

(a) In case $\ttk_1<a_{\tau}^{\tivee}$, choose $\displaystyle\varepsilon>\frac{\ttk_1}{a_{\tau}^{\tivee}-\ttk_1}$.
If $q_{1\varepsilon}$ has a root, let $\phi_1=\phi_1(a_{\tau}^{\tivee}-\ttK_{1\varepsilon},a_{\tiR}^{\tivee},a_{\rho}^{\tiwedge},\ttK_{2\varepsilon})$ 
be the largest positive root of $q_{1\varepsilon}$, and if $q_{1\varepsilon}$ has no positive roots, let $\phi_1=1$.
Similarly, let $\phi_2$ be the largest positive root of $q_2$ if its exists,
otherwise let $\phi_2 = 1$.  Now, the constant ${\phi_+=\max\{\phi_1, \phi_2, g^{\tiwedge}/c^{\tivee}\}}$ is a global super-solution of the Hamiltonian constraint equation (\ref{HC-LYs1}).

(b) In case $\ttk_1\geqs a_{\tau}^{\tivee}$, choose $\varepsilon>0$.  
In addition, assume that $a_{\tiR}^{\tivee}>0$ is sufficiently large and that both
$a_{\rho}^{\tiwedge}$ and $\ttK_{2\varepsilon}$ are sufficiently
small, so that $q_{1\varepsilon}$ has two positive roots, with the largest being as large as $\max\{\phi_3, g^{\tiwedge}/c^{\tivee}\}$, where
$\phi_3$ is the largest positive root of $q_3$. Then, the largest root
$\phi_+=\phi_2(a_{\tau}^{\tivee}-\ttK_{1\varepsilon},a_{\tiR}^{\tivee},a_{\rho}^{\tiwedge},\ttK_{2\varepsilon})$
of $q_{1\varepsilon}$ is a super-solution of the Hamiltonian constraint equation
(\ref{HC-LYs1}).
\end{lemma}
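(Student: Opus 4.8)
The plan is to test the global super-solution inequality \eqref{WF-Sp} on a \emph{constant} $\phi_+>0$ and reduce it to three one-variable rational inequalities, one per component of $A_L+F$. A constant kills both the Laplace--Beltrami term and the conormal derivative, so $-\Delta\phi_+=0$ and $\partial_\nu\phi_+=0$, and the requirement $A_L(\phi_+)+F(\phi_+,\biw)\geqs0$ in $\mathbb{X}$ (cf.\ \eqref{eq2:12july13}, \eqref{eq8:8aug13}) becomes, componentwise, the pointwise a.e.\ inequalities $E_1:=a_{\tiR}\phi_++a_\tau\phi_+^5-a_{\biw}\phi_+^{-7}-a_\rho\phi_+^{-3}\geqs0$ on $\cM$, $E_2:=\tfrac12H\phi_++(\tfrac12\tau-\tfrac14\theta_-)\phi_+^3-\tfrac14S(\nu,\nu)\phi_+^{-3}\geqs0$ on $\Sigma_I$, and $E_3:=c\phi_+-g\geqs0$ on $\Sigma_E$ (the interpretation in the order cone of $\mathbb{X}$ reduces to this under the standing hypotheses that $a_R$ is bounded below, $a_\rho,a_\sigma\in L^\infty$ and $g$ bounded above; a non-continuous scalar curvature is first removed via the conformal-covariance reduction as in Lemma~\ref{l:shift1}). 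To certify a \emph{global} super-solution of \eqref{HC-LYs1} these must hold for every $\biw$ solving \eqref{MC-LYm1} with source $\phi\in(0,\phi_+]\cap W^{s,p}$; the only $\biw$-dependence is through $a_{\biw}$ in $E_1$ and $S(\nu,\nu)$ in $E_2$, and the latter is in fact $\phi$- and $\biw$-independent here: by the choice of the interior datum $\bV$ in \eqref{eq4:8aug13} (cf.\ Remark~\ref{rem5:27sep13}), every such $\biw$ satisfies $(\cL\biw)(\nu,\nu)=V^a\nu_a=(2\tau+|\theta_-|/2)B^6-\sigma(\nu,\nu)$ on $\Sigma_I$, hence $S(\nu,\nu)=(\cL\biw+\sigma)(\nu,\nu)=(2\tau+|\theta_-|/2)B^6$ there.

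Next I would produce the uniform bound on $a_{\biw}=\tfrac18(\sigma+\cL\biw)^2$. The elementary inequality $(x+y)^2\leqs(1+\tfrac1\varepsilon)y^2+(1+\varepsilon)x^2$ gives, a.e.\ on $\cM$, $a_{\biw}\leqs(1+\tfrac1\varepsilon)a_{\cL\biw}+(1+\varepsilon)a_\sigma$, so $a_{\biw}^{\tiwedge}\leqs(1+\tfrac1\varepsilon)a_{\cL\biw}^{\tiwedge}+(1+\varepsilon)a_\sigma^{\tiwedge}$; the assumed estimate \eqref{CS-aLw-bound}, applied with the source bounded by $\phi_+$, yields $a_{\cL\biw}^{\tiwedge}\leqs\ttk_1\phi_+^{12}+\ttk_2$ uniformly over all admissible $\phi,\biw$. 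Hence $a_{\biw}^{\tiwedge}\leqs\ttK_{1\varepsilon}\phi_+^{12}+\ttK_{2\varepsilon}$, so $-a_{\biw}\phi_+^{-7}\geqs-\ttK_{1\varepsilon}\phi_+^{5}-\ttK_{2\varepsilon}\phi_+^{-7}$, and bounding $a_\tau\geqs a_\tau^{\tivee}$, $a_{\tiR}\geqs a_{\tiR}^{\tivee}$, $a_\rho\leqs a_\rho^{\tiwedge}$ gives $E_1\geqs q_{1,\varepsilon}(\phi_+)$. Substituting $S(\nu,\nu)=(2\tau+|\theta_-|/2)B^6$ into $E_2$ and replacing each coefficient by its essential infimum or supremum (using $-\theta_-=|\theta_-|$ under the sign convention on $\theta_-$, and $\phi_+>0$) gives $E_2\geqs q_2(\phi_+)$, and $E_3\geqs c^{\tivee}\phi_+-g^{\tiwedge}=q_3(\phi_+)$ is immediate.

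It then remains to choose $\phi_+>0$ with $q_{1,\varepsilon}(\phi_+)\geqs0$, $q_2(\phi_+)\geqs0$, $q_3(\phi_+)\geqs0$ simultaneously. Multiplying $q_{1,\varepsilon}$ by $\chi^7$ and $q_2$ by $\chi^3$ turns them into genuine polynomials, so their positive-root sets are finite and their large-$\chi$ sign is governed by the leading coefficients $a_\tau^{\tivee}-\ttK_{1\varepsilon}$ and $\tfrac14(2\tau^{\tivee}+|\theta_-|^{\tivee})$. In case (a), the choice $\varepsilon>\ttk_1/(a_\tau^{\tivee}-\ttk_1)$ is exactly what forces $\ttK_{1\varepsilon}=(1+\tfrac1\varepsilon)\ttk_1<a_\tau^{\tivee}$, making that leading coefficient positive, while the standing hypothesis $2\tau^{\tivee}+|\theta_-|^{\tivee}>0$ makes the other positive; hence each of $q_{1,\varepsilon},q_2$ is either strictly positive on all of $(0,\infty)$ (and we then set its ``largest root'' to $1$) or has a largest positive root $\phi_1$, resp.\ $\phi_2$, beyond which it stays positive, and since $q_3$ is increasing with unique root $g^{\tiwedge}/c^{\tivee}$ the constant $\phi_+=\max\{\phi_1,\phi_2,g^{\tiwedge}/c^{\tivee}\}>0$ makes all three nonnegative at $\phi_+$, so $E_1,E_2,E_3\geqs0$ and $\phi_+$ is a global super-solution. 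In case (b), $a_\tau^{\tivee}-\ttK_{1\varepsilon}<0$; one uses the extra hypotheses — $a_{\tiR}^{\tivee}>0$ large, $a_\rho^{\tiwedge}$ and $\ttK_{2\varepsilon}$ small — and the intermediate value theorem applied to $\chi^7q_{1,\varepsilon}(\chi)$ to produce two positive roots with $q_{1,\varepsilon}\geqs0$ between them, and (enlarging $a_{\tiR}^{\tivee}$ further) to push the larger root $\phi_+$ past $g^{\tiwedge}/c^{\tivee}$ and past the largest root of $q_2$; then $E_1\geqs q_{1,\varepsilon}(\phi_+)=0$, $E_2\geqs q_2(\phi_+)\geqs0$, $E_3\geqs q_3(\phi_+)\geqs0$.

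The main obstacle is the $a_{\biw}$ term: since $\biw$ ranges over all momentum-constraint solutions with source $\phi\in(0,\phi_+]$, the bound on $a_{\cL\biw}^{\tiwedge}$ must be uniform over that family \emph{and} grow no faster than $\phi_+^{12}$, so that after multiplication by $\phi_+^{-7}$ the resulting $\phi_+^5$ term can be absorbed into $a_\tau^{\tivee}\phi_+^5$; this is precisely why the split $\ttk_1<a_\tau^{\tivee}$ (case (a)) versus $\ttk_1\geqs a_\tau^{\tivee}$ (case (b)) is forced, and why case (b) must lean on a large linear coefficient $a_{\tiR}^{\tivee}$ together with smallness of $a_\rho^{\tiwedge},a_\sigma^{\tiwedge},\ttk_2$. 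A secondary but essential bit of care is the sign bookkeeping on the $\Sigma_I$ coefficients ($-\tfrac14\theta_-$ versus $+\tfrac14|\theta_-|^{\tivee}$, and the $B^6\phi^{-3}$ term) and reading ``largest positive root'' correctly when a polynomial is positive throughout $(0,\infty)$.
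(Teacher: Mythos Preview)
Your proof is correct and follows essentially the same route as the paper: test the super-solution inequality on a constant, use the $\varepsilon$-Young inequality together with the estimate \eqref{CS-aLw-bound} to bound $a_{\biw}^{\tiwedge}\leqs\ttK_{1\varepsilon}\phi_+^{12}+\ttK_{2\varepsilon}$ uniformly over admissible $\biw$, insert $S(\nu,\nu)=(2\tau+\tfrac12|\theta_-|)B^6$, and reduce to the three one-variable inequalities $q_{1,\varepsilon},q_2,q_3\geqs0$. The only stylistic differences are that in case~(b) the paper computes $q_{1,\varepsilon}'$ and $q_{1,\varepsilon}''$ to locate the maximum of $q_{1,\varepsilon}$ rather than invoking the intermediate value theorem, and your parenthetical about conformal covariance is not needed here (that reduction is used later for the monotone Picard map, not for the barrier construction, which only requires $a_R\geqs a_R^{\tivee}$).
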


\begin{proof}
We look for a super-solution among the constant functions.  Let $\chi$ be any positive constant.
Then we have
\begin{align*}
A(\chi)+ F(\chi,\biw) = F(\chi,\biw) 
=\left( \begin{array}{c} a_{\tau} \chi^{5}+ a_{\tiR} \chi- a_{\rho}\chi^{-3}- a_{\biw} \chi^{-7} \\ \frac12H\chi+\left(\frac12\tau-\frac14\theta_-\right)\chi^3-\frac14S(\nu,\nu)\chi^{-3}\\ c\chi-g\end{array} \right) .
\end{align*}
In order for $\chi$ to be a super-solution of \eqref{HC-LYs1}, we require that $F(\chi,\biw) \geqs {\bf 0}$, which implies that each of the components
of $F(\chi,\biw)$ must be nonnegative.
Given any $\varepsilon>0$, the
inequality $2|\sigma_{ab} (\cL\biw)^{ab}|\leqs\varepsilon \sigma^2 +
\frac1\varepsilon(\cL\biw)^2$ implies that
\[\textstyle
8a_{\biw} = \sigma^2+(\cL \biw)^2 + 2 \sigma_{ab}(\cL\biw)^{ab}
\leqs (1+\varepsilon)\,\sigma^2
+ ( 1+\frac{1}{\varepsilon} ) \,(\cL \biw)^2,
\]
hence, taking into account \eqref{CS-aLw-bound}, for any
$\biw\in\biW^{1,2r}$ that is a solution of the momentum constraint
equation \eqref{MC-LYm1} with any source term $\phi\in(0,\chi]$, the
constant $a_{\biw}^{\tiwedge}$ must fulfill the inequality
\begin{equation}
\label{CS-aw-bound}\textstyle
a_{\biw}^{\tiwedge}
\leqs (1+\varepsilon)a_{\sigma}^{\tiwedge}+(1+\frac1\varepsilon)a_{\cL\biw}^{\tiwedge}
\leqs \ttK_{1\varepsilon}\|\phi_+\|_{\infty}^{12}+\ttK_{2\varepsilon}.
\end{equation}
Using the fact that $S(\nu,\nu) = (2\tau+|\theta_-|/2)B^6$ and letting $\phi_+ = \chi$, we have that for any constant $\chi>0$ and all $\phi\in(0,\chi]$, it holds that

\begin{align}
F(\chi,\biw_{\phi})
   &\geqs \left(\begin{array}{c} a_{\tau}^{\tivee} \chi^5+ a_{\tiR}^{\tivee} \chi- a_{\rho}^{\tiwedge} \chi^{-3}- \left( \ttK_{1\varepsilon} \, \|\phi_+\|_{\infty}^{12} + \ttK_{2\varepsilon}\right)\chi^{-7}\\ \frac12 H^{\tivee}\chi+\left(\frac12\tau -\frac14\theta_-\right)\chi^3-\frac14((2\tau+|\theta_-|/2)B^6)\chi^{-3} \\ c^{\tivee}\chi-g^{\tiwedge} \end{array}\right) 
  \nonumber \\
& \geqs  \left( \begin{array}{c} B_{\varepsilon}\chi^5+a_{R}^{\tivee}\chi-a_{\rho}^{\tiwedge}\chi^{-3}- \ttK_{2\varepsilon}\chi^{-7} \\  \frac12H^{\tivee}\chi+\left(\frac12\tau^{\tivee} + \frac{|\theta_-|}{4}^{\tivee}\right)\chi^3-\left(\frac12\tau^{\tiwedge}+\frac18|\theta_-|^{\tiwedge} \right)(B^{\tiwedge})^6\chi^{-3} \\ c^{\tivee}\chi -g^{\tiwedge}   \end{array} \right) 
 \nonumber \\ 
& = \left( \begin{array}{c} q_{1\varepsilon}(\chi) \\ q_2(\chi) \\ q_{3}(\chi) \end{array} \right),
\label{eq4:18july13}
\end{align}
where $B_{\varepsilon}:=a_{\tau}^{\tivee}-\ttK_{1\varepsilon}$.

Clearly we have
that $q_{2}(\chi) \geqs 0$ for $\chi$ sufficiently large if $2\tau^{\tivee} + |\theta_-|^{\tivee}>0$.  Similarly, $q_3(\chi) \geqs 0$ for
$\chi \geqs g^{\tiwedge}/c^{\tivee}$.  
We calculate the first and second derivative of $q_{1\varepsilon}$ as
\begin{equation}
\begin{split}
q_{\varepsilon}'(\chi)
&=
5B_{\varepsilon} \chi^4
+ a_{\tiR}^{\tivee}
+3 a_{\rho}^{\tiwedge} \chi^{-4}
+7 \ttK_{2\varepsilon} \chi^{-8},\\
q_{\varepsilon}''(\chi)
&=
20B_{\varepsilon} \chi^3
-12 a_{\rho}^{\tiwedge} \chi^{-5}
-56 \ttK_{2\varepsilon} \chi^{-9}.
\end{split}
\end{equation}

Consider the case (a). In this case, because of the choice
$\varepsilon>\frac{\ttk_1}{a_{\tau}^{\tivee}-\ttk_1}$,
we have $B_{\varepsilon}>0$, and so $q_{1\varepsilon}(\chi)>0$ for
sufficiently large $\chi$, and $q_{1\varepsilon}$ is increasing. The
function $q_{1\varepsilon}$ has no positive root only if
$a_{\rho}^{\tiwedge}=\ttK_{2\varepsilon}=0$. So if $q_{1\varepsilon}$
has no positive root, let $\phi_1 = 1$ and $q_{1\varepsilon}(\chi)\geqs0$ for all
$\chi\geqs0$. If $q_{1\varepsilon}$ has at least one positive root,
let $\phi_1$ be the largest positive root and $q_{1\varepsilon}(\chi)\geqs0$ for
all $\chi\geqs\phi_1$.  Similarly, let $\phi_2$ be the largest positive root of
$q_2$ if it exists, otherwise let $\phi_2 =1$.  Then $q_2(\chi) \geqs 0$ for
all $\chi \geqs \phi_2$ given that $2\tau^{\tivee} + |\theta_-|^{\tivee}>0 $.  
Recalling now that any constant $\chi$
satisfies $A(\chi)=0$, we conclude that 
\begin{align}
A(\chi) + F(\chi,\biw_\phi) \geqs 0 \quad
\quad \forall \,\chi\geqs\ \max\{\phi_1, \phi_2, g^{\tiwedge}/c^{\tivee}\}
\,,~~\forall \,\phi\in(0,\chi],
\end{align}
implying that $\phi_{+}=\max\{\phi_1, \phi_2, g^{\tiwedge}/c^{\tivee}\}$ is a global super-solution of the
Hamiltonian constraint (\ref{HC-LYs1}).

For the case (b), since $B_{\varepsilon}<0$ and
$a_{\rho}^{\tiwedge}$ and $\ttK_{2\varepsilon}$ are nonnegative, the
first derivative $q_{1\varepsilon}'(\chi)$ is strictly decreasing for
$\chi>0$, and since $q_{1\varepsilon}'(\chi)>0$ for sufficiently
small $\chi>0$ and $q_{\varepsilon}'(\chi)<0$ for sufficiently large
$\chi>0$, the derivative $q_{1\varepsilon}'$ has a unique positive
root, at which the polynomial $q_{1\varepsilon}$ attains its maximum
over $(0,\infty)$. This maximum is positive if both
$a_{\rho}^{\tiwedge}$ and $\ttK_{2\varepsilon}$ are sufficiently
small, and hence the polynomial $q_{1\varepsilon}$ has two positive
roots $\phi_{1\e}\leqs\phi_{2\e}$. Moreover, if $a_{R}^{\tivee}$ is sufficiently large, $\phi_{2\e} \geqs \max\{\phi_3,g^{\tiwedge}/c^{\tivee}\}$,
where $\phi_3$ is the largest positive root of $q_3$ if it exists, and $\phi_3= 1$ otherwise.
Similar to the above we conclude that
\[
A(\chi) + F(\chi,\biw_\phi) \geqs 0,
\qquad ~~\text{for}~~ \chi=\phi_{2\e}, \,\forall \,\phi\in(0,\chi],
\]
implying that $\phi_{+}= \phi_{2\e}$ is a global super-solution of
the Hamiltonian constraint (\ref{HC-LYs1}).
\end{proof}

\begin{remark}\label{rem1:8oct13}
In order for the condition $S(\nu,\nu) = (2\tau+|\theta_-|)B^6$ to imply the marginally trapped surface condition
\eqref{eq8:26jun13}, if suffices to construct a global super-solution $\phi_+$ and choose $B$ to be a constant
such that $B > \| \phi_+\|_{\infty}$.  In light
of \eqref{e:k1k2}, we observe that for both cases (a) and (b)
above we can choose $B$ large and $\|\tau\|_{e-1-\frac1q,q;\Sigma_I}$ and $\|\theta_-\|_{e-1-\frac1q,q;\Sigma_I}$ sufficiently
small so that the coefficient $K_{2\e}$ in \eqref{eq4:18july13} remains unchanged and $(2\tau^{\tiwedge}+|\theta_-|^{\tiwedge}/2)B^6$ decreases in size.
This ensures that we can choose $B > \|\phi_+\|_{\infty}$ in the above construction.
\end{remark}

Case (a) of the above lemma has the condition $\ttk_1<a_{\tau}^{\tivee}$, 
which is analogous to the near-CMC condition.
The above condition also requires 
that the extrinsic mean curvature $\tau$ is nowhere zero.
Noting that there are solutions even for $\tau\equiv0$ in some 
cases (cf.~\cite{jI95}), the condition $\inf\tau\ne 0$ appears 
as a rather strong restriction.
We see that case (b) of the above lemma removes this restriction, 
in exchange for the sign condition on $R$ and size conditions on $R$, $\rho$, $j$, and $\sigma$.

In the next Lemma we construct a global sub-solution using a pre-existing global super-solution.

\begin{lemma}{\bf(Global sub-solution)}
\label{L:HC-GSb}
Let $(\cM,h)$ be a 3-dimensional, smooth, compact
Riemannian manifold with metric $h \in W^{s,p}$, $s\geqs \frac3p$ and
non-empty boundary satisfying the conditions \eqref{eq6:11july13}.
Assume that $c$, $H$, $a_{\tiR}$ and $\tau$ are uniformly bounded from above, 
and $g$ is uniformly bounded from below.
Let $\phi_{+}>0$ be a global super-solution of the Hamiltonian
constraint and suppose that
$~(4\tau^{\tivee}+|\theta_-|^{\tivee}) > 0$ on $\Sigma_I$. 
Let $\phi_1$ be the unique positive root of the polynomial 
\begin{align}
q(\chi) = \frac12\max\{1,H^{\tiwedge}\}\chi+\left(\frac12\tau^{\tiwedge} + \frac{|\theta_-|}{4}^{\tiwedge}\right)\chi^3-\left(\frac12\tau^{\tivee}+\frac18|\theta_-|^{\tivee} \right)B^6\chi^{-3}
\end{align}
We distinguish between the following two cases:

\medskip

\noindent
(a) If $a_{\rho}^{\tivee} > 0$, let $\phi_2$ be the unique positive root of the polynomial 
\begin{align}
q_{\rho}(\chi) = a_{\tau}^{\tiwedge}\chi^8+\max\{1,a_R^{\tiwedge}\}\chi^4-a_{\rho}^{\tivee}.
\end{align}
Then 
$$
\phi_- = \min\{\phi_1,\phi_2,g^{\tivee}/H^{\tiwedge}\}
$$
is a global sub-solution.\\

\medskip

\noindent
(b) Let $a_{\sigma}^{\tivee}>\ttk(\phi_{+})$, where $\ttk$
is as in \eqref{CS-aLw-bound}. Then, for some
$\varepsilon\in(\ttk(\phi_{+})/a_{\sigma}^{\tivee},1)$, if $\phi_3$ is the unique
positive root of the polynomial
$$
q_\sigma(\chi)=
a_{\tau}^{\tiwedge} \,\chi^{12}
+ \max\{1,a_{\tiR}^{\tiwedge}\}\, \chi^8
- \ttK_{\varepsilon},
$$
where $\ttK_{\varepsilon}:=(1-\varepsilon)a_{\sigma}^{\tivee}-\bigl(\frac1\varepsilon-1\bigr)\ttk(\phi_{+})$,
then
$$
\phi_- =  \min\{\phi_1,\phi_3,g^{\tivee}/H^{\tiwedge}\}
$$
is a global sub-solution of \eqref{HC-LYs1}.
\end{lemma}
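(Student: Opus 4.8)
The plan is to build $\phi_-$ as a positive \emph{constant}, reversing all the inequalities in the super-solution construction of Lemma~\ref{L:HC-Sp}. For a constant $\chi>0$ the operator $A_L$ kills both $-\Delta$ and the conormal derivative $\partial_\nu$, so that
\[
A_L(\chi)+F(\chi,\biw)=\left(\begin{array}{c}
a_{\tau}\chi^5+a_{\tiR}\chi-a_{\rho}\chi^{-3}-a_{\biw}\chi^{-7}\\
\frac12 H\chi+(\frac12\tau-\frac14\theta_-)\chi^3-\frac14 S(\nu,\nu)\chi^{-3}\\
c\chi-g
\end{array}\right),
\]
and $\phi_-$ being a (global) sub-solution amounts to each of these three components being nonpositive on $\cM$, on $\Sigma_I$, and on $\Sigma_E$ respectively, for every $\biw=\biw_\phi$ solving \eqref{MC-LYm1} with source $\phi$ in the admissible range. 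For each component I would bound it from above by a rational expression in $\chi$ with \emph{constant} coefficients and positive leading behaviour, whose positive root furnishes an upper cut-off on $\chi$; then $\phi_-$ is the minimum of these cut-offs.

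The exterior component is handled at once: $c$ bounded above and $g$ bounded below by $g^{\tivee}>0$ give $c\chi-g\leqs c^{\tiwedge}\chi-g^{\tivee}\leqs 0$ for $\chi\leqs g^{\tivee}/c^{\tiwedge}$. For the $\Sigma_I$ component I would first insert $S(\nu,\nu)=(2\tau+|\theta_-|/2)B^6$ (cf.\ Remark~\ref{rem5:27sep13}) and then use, for $\chi>0$, the elementary bounds $\frac12 H\leqs\frac12\max\{1,H^{\tiwedge}\}$, $\frac12\tau-\frac14\theta_-\leqs\frac12\tau^{\tiwedge}+\frac14|\theta_-|^{\tiwedge}$, and $\frac12\tau+\frac18|\theta_-|\geqs\frac12\tau^{\tivee}+\frac18|\theta_-|^{\tivee}=\frac18(4\tau^{\tivee}+|\theta_-|^{\tivee})>0$; this last inequality is exactly where the hypothesis $(4\tau^{\tivee}+|\theta_-|^{\tivee})>0$ enters, since it forces a strictly negative coefficient on $\chi^{-3}$. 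Altogether the $\Sigma_I$ component is dominated by $q(\chi)$, and since $\chi^3q(\chi)$ is strictly negative as $\chi\downarrow 0$ and $\phi_1$ is the unique positive root of $q$, we obtain $q\leqs 0$ on $(0,\phi_1]$.

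For the $\cM$ component, bound $a_{\tau}\chi^5+a_{\tiR}\chi\leqs a_{\tau}^{\tiwedge}\chi^5+\max\{1,a_{\tiR}^{\tiwedge}\}\chi$ and then produce a \emph{lower} bound on $a_{\rho}\chi^{-3}+a_{\biw}\chi^{-7}$. In Case~(a), $a_{\rho}^{\tivee}>0$ and $a_{\biw}\geqs 0$ give $a_{\rho}\chi^{-3}+a_{\biw}\chi^{-7}\geqs a_{\rho}^{\tivee}\chi^{-3}$, so after multiplying by $\chi^3$ the component is controlled by $q_{\rho}(\chi)$, which is strictly increasing with $q_{\rho}(0^+)=-a_{\rho}^{\tivee}<0$; hence $q_{\rho}\leqs 0$ on $(0,\phi_2]$. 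In Case~(b) the crux is a pointwise lower bound on $a_{\biw}$: from $8a_{\biw}=|\sigma+\cL\biw|^2$ and the splitting $|\sigma+\cL\biw|^2\geqs(1-\varepsilon)|\sigma|^2-(\frac1\varepsilon-1)|\cL\biw|^2$ (which is just $(\sqrt\varepsilon\,|\sigma|-|\cL\biw|/\sqrt\varepsilon)^2\geqs 0$), combined with the a priori bound \eqref{CS-aLw-bound} applied to $\biw_\phi$ with $\phi\leqs\phi_+$, one gets $a_{\biw}^{\tivee}\geqs(1-\varepsilon)a_{\sigma}^{\tivee}-(\frac1\varepsilon-1)\ttk(\phi_+)=\ttK_{\varepsilon}$, which is strictly positive precisely when $\varepsilon\in(\ttk(\phi_+)/a_{\sigma}^{\tivee},1)$, a nonempty interval exactly because $a_{\sigma}^{\tivee}>\ttk(\phi_+)$. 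Then $a_{\rho}\chi^{-3}+a_{\biw}\chi^{-7}\geqs\ttK_{\varepsilon}\chi^{-7}$, and after multiplying by $\chi^7$ the component is controlled by $q_{\sigma}(\chi)$, strictly increasing with $q_{\sigma}(0^+)=-\ttK_{\varepsilon}<0$, so $q_{\sigma}\leqs 0$ on $(0,\phi_3]$.

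Putting it together, with $\phi_-=\min\{\phi_1,\phi_2,g^{\tivee}/c^{\tiwedge}\}>0$ in Case~(a) (respectively $\phi_-=\min\{\phi_1,\phi_3,g^{\tivee}/c^{\tiwedge}\}>0$ in Case~(b)), the choice $\chi=\phi_-$ makes all three components of $A_L(\phi_-)+F(\phi_-,\biw_\phi)$ nonpositive: in Case~(a) for every $\phi\in[\phi_-,\infty)\cap W^{s,p}$ (only $a_{\biw}\geqs 0$ was used), and in Case~(b) for every $\phi\in(0,\phi_+]\cap W^{s,p}$, so that in either case $(\phi_-,\phi_+)$ is an admissible pair in the sense of Definition~\ref{D:barriers}. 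The step I expect to be the main obstacle is Case~(b): squeezing out a \emph{positive} pointwise lower bound for $a_{\biw}$ requires balancing the Cauchy--Schwarz parameter $\varepsilon$ against the a priori control of $\|\cL\biw\|_{\infty}$ by $\|\phi_+\|_{\infty}$ in \eqref{CS-aLw-bound}, and hence keeping careful track of the admissible range of the source $\phi$ relative to the given super-solution $\phi_+$.
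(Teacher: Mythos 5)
Your proposal is correct and follows essentially the same route as the paper: a constant sub-solution, the substitution $S(\nu,\nu)=(2\tau+|\theta_-|/2)B^6$ and the positivity of $4\tau^{\tivee}+|\theta_-|^{\tivee}$ to control the $\Sigma_I$ component via $q$, the bound $a_{\biw}\geqs 0$ (case (a)) or the reverse Cauchy--Schwarz splitting $8a_{\biw}\geqs(1-\varepsilon)\sigma^2-(\tfrac1\varepsilon-1)(\cL\biw)^2$ together with \eqref{CS-aLw-bound} (case (b)) for the interior component, and $g^{\tivee}/c^{\tiwedge}$ for the exterior one. Your writing $g^{\tivee}/c^{\tiwedge}$ rather than the statement's $g^{\tivee}/H^{\tiwedge}$ agrees with what the paper's own proof actually uses.
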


\begin{proof}
If $\chi$ is constant, then we have
\begin{align*}
A(\chi)+ F(\chi,\biw_{\phi}) = F(\chi,\biw_{\phi}) 
=\left( \begin{array}{c} a_{\tau} \chi^{5}+ a_{\tiR} \chi- a_{\rho}\chi^{-3}- a_{\biw} \chi^{-7} \\ \frac12H\chi+\frac12\tau\chi^3-\frac14S(\nu,\nu)\chi^{-3}-\frac14\theta_-\chi^3\\ c\chi-g\end{array} \right) .
\end{align*}
In order for $\chi$ to be a sub-solution of \eqref{HC-LYs1}, we require that $F(\chi,\biw_{\phi}) \leqs  {\bf 0}$, which implies that each of the components
of $F(\chi,\biw_{\phi})$ must be non-positive.

In case (a), we have that
\begin{align}
F(&\chi, \biw_{\phi}) \nonumber \\
&\leqs \left(\begin{array}{c} a_{\tau}^{\tiwedge} \chi^5+ a_{\tiR}^{\tiwedge} \chi- a_{\rho}^{\tivee} \chi^{-3}\\ \frac12 H^{\tiwedge}\chi+(\frac12\tau^{\tiwedge}+\frac14|\theta_-|^{\tiwedge})\chi^3-\frac14((2\tau^{\tivee}+|\theta_-|^{\tivee}/2)B^6)\chi^{-3} \\ c^{\tiwedge}\chi-g^{\tivee} \end{array} \right) 
   \label{eq3:24july13} \\
&\leqs \left(\begin{array}{c} a_{\tau}^{\tiwedge} \chi^5+ \max\{1,a_{\tiR}^{\tiwedge}\} \chi- a_{\rho}^{\tivee} \chi^{-3}\\ \frac12 \max\{1,H^{\tiwedge}\}\chi+(\frac12\tau^{\tiwedge}+\frac14|\theta_-|^{\tiwedge})\chi^3-\frac14((2\tau^{\tivee}+|\theta_-|^{\tivee}/2)(B^{\tivee})^6)\chi^{-3} \\ c^{\tiwedge}\chi-g^{\tivee} \end{array} \right). \nonumber
\end{align}
We observe that each component in~\eqref{eq3:24july13} will be non-positive provided that we have ${\phi_- = \chi = \min\{\phi_1,\phi_2, g^{\tivee}/c^{\tiwedge}\}}$,
where $\phi_- > 0$ given that $a_{\rho}>0$, $~(4\tau^{\tivee}+|\theta_-|^{\tivee}) > 0$ on $\Sigma_I$ and $g^{\tivee}/c^{\tiwedge} > 0$ by \eqref{eq1:12july13}.

In case (b), we observe that if $\chi>0$ is any constant function and $\biw\in\biW^{1,2r}$,
then we have
\begin{align}\label{eq5:24july13}
F(\chi,\biw_{\phi})&\leqs \left(\begin{array}{c}a_{\tau}^{\tiwedge} \chi^5+ C \chi- a_{\biw}^{\tivee} \chi^{-7} \\ 
\frac12 H^{\tiwedge}\chi+(\frac12\tau^{\tiwedge}+\frac14|\theta_-|^{\tiwedge})\chi^3-((\frac12\tau^{\tivee}+\frac18|\theta_-|^{\tivee})(B^{\tivee})^6)\chi^{-3} \\ c^{\tiwedge}\chi-g^{\tivee} \end{array} \right)  ,
\end{align}
where we have used that $a_{\rho}$ is nonnegative, and
introduced the constant $C=\max\{1,a_{\tiR}^{\tiwedge}\}$. 

Given any
$\varepsilon>0$, the inequality $2|\sigma_{ab}
(\cL\biw)^{ab}|\leqs\varepsilon \sigma^2 + \frac1\varepsilon(\cL\biw)^2$
implies that
\[\textstyle
8a_{\biw} = \sigma^2+(\cL\biw)^2 + 2 \sigma_{ab}(\cL\biw)^{ab}
\geqs (1-\varepsilon)\,\sigma^2
- ( \frac{1}{\varepsilon}-1 ) \,(\cL\biw)^2,
\]
hence, taking into account \eqref{CS-aLw-bound},
for any $\biw\in\biW^{1,2r}$ that is a solution of the momentum constraint equation \eqref{MC-LYm1} with any source term $\phi\in(0,\phi_{+}]$,
the constant $a_{\biw}^{\tivee}$ must fulfill the inequality
\begin{equation*}\textstyle
a_{\biw}^{\tivee}
\geqs (1-\varepsilon)a_{\sigma}^{\tivee}-(\frac1\varepsilon-1)a_{\cL\biw}^{\tiwedge}
\geqs (1-\varepsilon)a_{\sigma}^{\tivee}-(\frac1\varepsilon-1)\ttk(\phi_{+})=:\ttK_{\varepsilon}.
\end{equation*}
We use the above estimate in \eqref{eq5:24july13} to get, for any $\biw\in\biW^{1,2r}$ that
is a solution of the momentum constraint equation \eqref{MC-LYm1}
with any source term $\phi\in(0,\phi_{+}]$,
\begin{align}
F(\chi,\biw_{\phi})&\leqs \left(\begin{array}{c}a_{\tau}^{\tiwedge} \chi^5+ C \chi-\ttK_{\varepsilon} \chi^{-7}. \\ 
\frac12 H^{\tiwedge}\chi+(\frac12\tau^{\tiwedge}+\frac14|\theta_-|^{\tiwedge})\chi^3-((\frac12\tau^{\tivee}+\frac18|\theta_-|^{\tivee})(B^{\tivee})^6)\chi^{-3} \\ c^{\tiwedge}\chi-g^{\tivee} \end{array} \right)  .
\end{align}
Because of the choice
$\ttk(\phi_{+})/a_{\sigma}^{\tivee}<\varepsilon<1$, we have
$\ttK_{\varepsilon}>0$. So with the unique positive root $\phi_3$
of
\[
q_{\sigma}(\chi) :=
a_{\tau}^{\tiwedge} \,\chi^5
+ C\, \chi
- \ttK_{\varepsilon}\, \chi^{-7},
\]
we have $q_{\sigma}(\chi)\leqs0$ for any constant
$\chi\in(0,\phi_3]$.  Taking $\phi_- = \min\{\phi_1,\phi_3,g^{\tivee}/c^{\tiwedge}\}$, we have that
$F(\phi,\biw_{\phi}) \leqs  0$, which completes the proof.
\end{proof}

\subsection{Non-constant Barriers}\label{s:nonconstB}

The barriers constructed in the previous section require that the scalar curvature be strictly positive and sufficiently large
or that the near-CMC condition be satisfied.  Few restrictions are placed on the size of the data $\theta_-,\rho, \bj, \sigma(\nu,\nu)$ and $\bX$.  
In this section we develop non-constant global sub-and super-solutions using an auxiliary problem
similar to the one considered in Lemma~\ref{l:apriori} in Appendix~\ref{sec:app}.  The advantage of this construction
is that we only require the metric $h \in \cY^+$.  However, the tradeoff is that
we will require the data $|\theta_-|, \rho, \bj, \sigma(\nu,\nu)$ and $\bX$ to be sufficiently small.  Additionally,
we will either have to require that $b_{\tau}$ is sufficiently small or that $\delta > 0$ is sufficiently small,
where we recall that $g = \delta(c+ \cO(R^{-3}))$.  This assumption is analogous to the
smallness condition on the Dirichlet data $\phi_D$ in \cite{Dilt13a}. 

\begin{lemma}{\bf (Non-Constant Global super-solution for small $D\tau$)}
\label{L:HC-NCSp}
Let $(\cM,h)$ be a 3-dimensional, smooth, compact
Riemannian manifold with metric $h \in W^{s,p}$, $s>\frac3p$ in the positive Yamabe class and
non-empty boundary satisfying the conditions \eqref{eq6:11july13}.
Assume that the estimate \eqref{CS-aLw-bound} holds for the 
solution of the momentum constraint equation for two positive constants
$k_1$ and $k_2$, which can be chosen sufficiently small.
Additionally assume that $ a_\rho, a_\sigma\in L^\infty$, $a^{\tiwedge}_{\rho}, a_{\sigma}^{\tiwedge}$
are sufficiently small and that $~(2\tau^{\tivee} +|\theta_-|^{\tivee}) > 0$ on $\Sigma_I$.
Then if $u \in W^{s,p}$ satisfies 
\begin{align}\label{eq6:21aug13}
-\Delta u +a_Ru &= \Lambda_1>0,\\
\gamma_I\partial_{\nu}u +\frac12 H \gamma_Iu &= \Lambda_2>0, \nonumber \\
\gamma_E\partial_{\nu}u +c \gamma_Eu &= \Lambda_3>0, \nonumber
\end{align} 
for positive functions $ \Lambda_1,\Lambda_2,\Lambda_3$, then there exists
a constant $\beta > 0$ such that $\phi_+ = \beta u$ is a positive global super-solution of the 
Hamiltonian constraint equation (\ref{HC-LYs1}).
\end{lemma}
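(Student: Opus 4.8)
The plan is to verify the sub-/super-solution inequalities componentwise, using the function $u$ from the auxiliary problem \eqref{eq6:21aug13} together with a scaling parameter $\beta$ that can be adjusted. First I would record the basic structure: since $u>0$ on the compact manifold $\cM$ (this follows because $u$ solves a linear elliptic equation with strictly positive right-hand side and Robin data satisfying the maximum principle — the operator $A_L$ with the coefficients $a_R, H, c$ satisfying $c^{\tivee}>0$ is invertible and order-preserving, cf.~Lemmas~B.7--B.8 in~\cite{HoTs10a}), there exist constants $0 < u^{\tivee} \leqs u \leqs u^{\tiwedge} < \infty$. Set $\phi_+ = \beta u$ and compute $A_L(\phi_+) + F(\phi_+, \biw)$. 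The key observation is that $A_L(\beta u) = \beta\,A_L(u) = \beta(\Lambda_1, \Lambda_2, \Lambda_3)$, which is a strictly positive vector; this positive term is what will allow us to dominate the (possibly negative) contributions of the lower-order terms in $F$.

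Next I would treat the three components in turn. For the interior Hamiltonian component, we need
\[
\beta\Lambda_1 + a_{\tau}(\beta u)^5 + a_R \beta u \cdot 0 - a_{\rho}(\beta u)^{-3} - a_{\biw}(\beta u)^{-7} \geqs 0
\]
(the $a_R$ term has been absorbed into $\Lambda_1$ via \eqref{eq6:21aug13}). Using \eqref{CS-aLw-bound}/\eqref{CS-aw-bound} to bound $a_{\biw}^{\tiwedge} \leqs \ttK_{1\varepsilon}\|\phi_+\|_\infty^{12} + \ttK_{2\varepsilon} = \ttK_{1\varepsilon}\beta^{12}\|u\|_\infty^{12} + \ttK_{2\varepsilon}$, and using $\phi_+ = \beta u \geqs \beta u^{\tivee}$, the negative terms are bounded by $a_{\rho}^{\tiwedge}(\beta u^{\tivee})^{-3} + (\ttK_{1\varepsilon}\beta^{12}(u^{\tiwedge})^{12} + \ttK_{2\varepsilon})(\beta u^{\tivee})^{-7}$. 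The worst growth in $\beta$ among the negative terms is $\beta^{12}\cdot\beta^{-7} = \beta^5$, but this is controlled by the positive $a_\tau(\beta u^{\tivee})^5$ term provided $\ttK_{1\varepsilon}$ is small relative to $a_\tau^{\tivee}(u^{\tivee}/u^{\tiwedge})^{12}\cdot(u^{\tivee})^{?}$ — i.e., for small $k_1$ as hypothesized; the remaining pieces ($a_\rho^{\tiwedge}\beta^{-3}$, $\ttK_{2\varepsilon}\beta^{-7}$) are handled either by the $\beta\Lambda_1$ term (growing linearly) for $\beta$ large, or by smallness of $a_\rho^{\tiwedge}, k_2$. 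For the two boundary components, on $\Sigma_I$ we need
\[
\beta\Lambda_2 + \tfrac12 H \beta u \cdot 0 + \bigl(\tfrac12\tau - \tfrac14\theta_-\bigr)(\beta u)^3 - \tfrac14 S(\nu,\nu)(\beta u)^{-3} \geqs 0,
\]
where $H$ is again absorbed into $\Lambda_2$. With $S(\nu,\nu) = (2\tau + |\theta_-|/2)B^6$ and $2\tau^{\tivee} + |\theta_-|^{\tivee} > 0$, the cubic term $(\tfrac12\tau^{\tivee} + \tfrac14|\theta_-|^{\tivee})(\beta u^{\tivee})^3$ is eventually positive and dominates the $\chi^{-3}$ term for $\beta$ large; the $\beta\Lambda_2$ term provides additional slack. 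The exterior component $\beta\Lambda_3 + c\beta u \cdot 0 - g \geqs \beta\Lambda_3 - g^{\tiwedge} \geqs 0$ holds for $\beta$ large since $g$ is bounded above.

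The main obstacle is the bookkeeping of the three competing smallness/largeness requirements and ensuring they are mutually consistent: we need $\beta$ large enough for the $\beta\Lambda_i$ terms to beat the $O(1)$ and decaying negative contributions, yet we also feed $\beta$ back into the bound $\|\phi_+\|_\infty^{12} = \beta^{12}\|u\|_\infty^{12}$ appearing in $a_{\biw}^{\tiwedge}$, producing a $\beta^{12-7}=\beta^5$ term that can \emph{only} be absorbed by the quintic $a_\tau$ term, not by $\beta\Lambda_1$. Hence the genuine constraint is the inequality $a_\tau^{\tivee}(\beta u^{\tivee})^5 \geqs \ttK_{1\varepsilon}\beta^{12}(u^{\tiwedge})^{12}(\beta u^{\tivee})^{-7}$, i.e.\ $a_\tau^{\tivee}(u^{\tivee})^{12} \geqs \ttK_{1\varepsilon}(u^{\tiwedge})^{12}$, which is $\beta$-independent and reduces to the smallness of $k_1$ (recall $\ttK_{1\varepsilon} = (1+\tfrac1\varepsilon)k_1$, so choosing $\varepsilon$ fixed, say $\varepsilon = 1$, and $k_1$ small suffices). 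Once this structural inequality holds, one fixes $\varepsilon$, then chooses $\beta$ large enough to dominate all remaining terms, and the smallness of $a_\rho^{\tiwedge}$, $a_\sigma^{\tiwedge}$, $k_2$ ensures there is no conflict; I would present the argument by first fixing $\varepsilon$, then deriving the explicit threshold on $\beta$ from each of the three componentwise inequalities and taking the maximum.
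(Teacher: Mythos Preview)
Your overall structure is right, but there is a real gap in how you handle the $\beta^5$ term coming from the momentum-constraint feedback. You propose to absorb the term $\ttK_{1\varepsilon}\beta^{12}(u^{\tiwedge})^{12}(\beta u^{\tivee})^{-7}$ with the positive quintic term $a_\tau^{\tivee}(\beta u^{\tivee})^5$, arriving at the $\beta$-independent condition $a_\tau^{\tivee}(u^{\tivee})^{12} \geqs \ttK_{1\varepsilon}(u^{\tiwedge})^{12}$. But this cannot be satisfied when $a_\tau^{\tivee}=0$, i.e.\ when $\tau$ has zeros --- and the whole point of this lemma (see Remark~\ref{rem1:23sep13}) is precisely to allow $\tau$ to vanish. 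With your inequality, $a_\tau^{\tivee}=0$ forces $\ttK_{1\varepsilon}=0$, hence $D\tau\equiv 0$, i.e.\ CMC; you have inadvertently reintroduced the near-CMC condition you are trying to avoid.

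The paper's fix is to invert your order of choices. Drop the nonnegative term $a_\tau\phi_+^5$ altogether, and instead use the \emph{linear} contribution $\beta\Lambda_1^{\tivee}$ to dominate the quintic negative term. Concretely: first choose $\beta$ large so that the two boundary inequalities \eqref{eq1:7oct13}--\eqref{eq2:7oct13} hold (this uses only $(2\tau^{\tivee}+|\theta_-|^{\tivee})>0$ on $\Sigma_I$ and $g^{\tiwedge}<\infty$). With $\beta$ now \emph{fixed}, choose $k_1$ small enough that
\[
\beta\Lambda_1^{\tivee} - \beta^5 K_1 k_3^{12}(u^{\tiwedge})^5 > 0,
\qquad k_3 = u^{\tiwedge}/u^{\tivee},\quad K_1 = 2k_1,
\]
which is now a genuine smallness condition on $k_1$ (equivalently on $D\tau$) that does \emph{not} involve $a_\tau^{\tivee}$ at all. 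Finally choose $a_\rho^{\tiwedge}$, $a_\sigma^{\tiwedge}$, $k_2$ small so that the remaining $\beta^{-3}$ and $\beta^{-7}$ terms fit into the slack. The sequencing --- fix $\beta$, then make $k_1$ small at that $\beta$, then make the other data small --- is what allows $\tau$ to vanish while still keeping $D\tau$ merely small rather than zero.
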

\begin{proof}
We first observe that by Lemmas~B.8 and B.7 in~\cite{HoTs10a}, the solution $u$ exists and is positive.
Evaluating~\eqref{HC-LYs1} at $\phi_+$ we have
\begin{align*}
A(\phi_+)+ F(\phi_+,\biw)  
=\left( \begin{array}{c} \beta\Lambda_1+a_{\tau} \phi_+^{5}- a_{\rho}\phi_+^{-3}- a_{\biw} \phi_+^{-7} \\ \beta\Lambda_2+\left(\frac12\tau -\frac14\theta_- \right)(\gamma_I\phi_+)^3-\frac14S(\nu,\nu)(\gamma_I\phi_+)^{-3}\\ \beta\Lambda_3-g\end{array} \right). 
\end{align*}
If we use the point-wise bound \eqref{eq1:20aug13} and the estimate \eqref{CS-aLw-bound}
with $k_1$ and $k_2$, we have that $a_{\biw} \leqs  K_1(\phi^{\tiwedge}_+)^{12}+K_2$, 
where $K_1 = 2k_1$ and $K_2 = 2a_{\sigma}^{\tiwedge}+2k_2$. Letting $k_3 = u^{\tiwedge}/u^{\tivee}$ and recalling that $\theta_- \leqs  0$, we then have 

\begin{align*}
A(\phi_+) & + F(\phi_+,\biw)  \\
&\\
\geqs & \left( \begin{array}{c} \beta\Lambda_1+a^{\tivee}_{\tau} \phi_+^{5}- a^{\tiwedge}_{\rho}\phi_+^{-3}- [K_1(\phi^{\tiwedge}_+)^{12}+K_2] \phi_+^{-7} \\ \beta\Lambda_2+\left(\frac12\tau -\frac14\theta_- \right)(\gamma_I\phi_+)^3-\left(\left(\frac12\tau +\frac18|\theta_-|\right)B^6\right)(\gamma_I\phi_+)^{-3}\\ \beta\Lambda_3-g^{\tiwedge}\end{array} \right)\\
\geqs & \left( \begin{array}{c} \beta\Lambda_1-\beta^5K_1k^{12}_3 u^{5}- \beta^{-3}a^{\tiwedge}_{\rho}u^{-3}- \beta^{-7}K_2 u^{-7} \\ \beta\Lambda_2+\beta^3\left(\frac12\tau -\frac14\theta_- \right)(\gamma_Iu)^3-\beta^{-3}\left(\frac12\tau +\frac18|\theta_-| \right)B^6(\gamma_Iu)^{-3}\\ \beta\Lambda_3-g^{\tiwedge}\end{array} \right)\\
\geqs & \left( \begin{array}{c} \beta\Lambda^{\tivee}_1-\beta^5K_1k^{12}_3 (u^{\tiwedge} )^{5}- \beta^{-3}a^{\tiwedge}_{\rho}(u^{\tivee})^{-3}- \beta^{-7}K_2 (u^{\tivee})^{-7} \\  \beta\Lambda_2^{\tivee}+\beta^3\left(\frac12\tau^{\tivee} +\frac14|\theta_-|^{\tivee} \right)(\gamma_Iu^{\tivee})^3-\beta^{-3}\left(\frac12\tau^{\tiwedge} +\frac18|\theta_-|^{\tiwedge} \right)(B^{\tiwedge})^6(\gamma_Iu^{\tivee})^{-3}\\ \beta\Lambda^{\tivee}_3-g^{\tiwedge}\end{array} \right)\\
\end{align*}

Therefore, $A(\phi_+)+ F(\phi_+,\biw) \geqs 0$ provided that we can choose $\beta, k_1, k_2, a_{\rho}$ and $a_{\sigma}$ 
so that 
\begin{align}
\beta\Lambda^{\tivee}_1-\beta^5K_1k^{12}_3 (u^{\tiwedge} )^{5}- \beta^{-3}a^{\tiwedge}_{\rho}(u^{\tivee})^{-3}- \beta^{-7}K_2 (u^{\tivee})^{-7} &\geqs 0,\label{eq2:21aug13}\\ 
\beta\Lambda_2^{\tivee}+\beta^3\left(\frac12\tau^{\tivee} +\frac14|\theta_-|^{\tivee} \right)(\gamma_Iu^{\tivee})^3
\qquad \qquad \qquad & \nonumber \\
-\beta^{-3}\left(\frac12\tau^{\tiwedge} +\frac18|\theta_-|^{\tiwedge} \right)(B^{\tiwedge})^6(\gamma_Iu^{\tivee})^{-3} &\geqs 0, \label{eq1:7oct13}\\
\beta\Lambda^{\tivee}_3-g^{\tiwedge} &\geqs 0. \label{eq2:7oct13}
\end{align}

We choose $\beta$ sufficiently large so that  
Eqs.~\eqref{eq1:7oct13}-\eqref{eq2:7oct13} are nonnegative.
Then choose $k_1$ so that 
\begin{align}\label{eq3:21aug13}
\beta\Lambda^{\tivee}_1-\beta^5K_1k^{12}_3 (u^{\tiwedge} )^{5} > 0.
\end{align}
Finally, choose $a_\rho, a_{\sigma}$ and $k_2$ sufficiently small so that
\begin{align}\label{eq2:8oct13}
\beta\Lambda^{\tivee}_1-\beta^5K_1k^{12}_3 (u^{\tiwedge} )^{5}- \beta^{-3}a^{\tiwedge}_{\rho}(u^{\tivee})^{-3}- \beta^{-7}K_2 (u^{\tivee})^{-7} \geqs 0.
\end{align}
For this choice of data, $\phi_+ = \beta u $ is a global super solution.
\end{proof}
 
\begin{remark}\label{rem2:8oct13}
As we mentioned in Remark~\ref{rem1:8oct13},
we require that $B > \| \phi_+\|_{\infty}$ in order for the condition $S(\nu,\nu) = (2\tau+|\theta_-|)B^6$ to imply that the marginally trapped surface condition
\eqref{eq8:26jun13}. We may again choose choose $B$ to be a large constant and require that $\|\tau\|_{e-1-\frac1q,q;\Sigma_I}$ and $\|\theta_-\|_{e-1-\frac1q,q;\Sigma_I}$ be sufficiently
small so that the coefficient $K_{2}$ in \eqref{eq2:8oct13} remains unchanged and $(2\tau^{\tiwedge}+|\theta_-|^{\tiwedge}/2)B^6$ decreases in size.
This ensures that we can choose $B > \|\phi_+\|_{\infty}$ in the above construction.
\end{remark}

 \begin{remark}\label{rem1:23sep13}
 The requirement that $k_1$ be sufficiently small places a restriction on the size of $b_{\tau} = \frac23D\tau$.
 While this is not ideal, the above result still allows for $\tau$ to have zeroes and not satisfy the inequality
 $$
 \frac{\|D\tau\|_{z}}{\tau^{\tivee}} \leqs  C< \infty,
 $$
 where $z \geqs 1$.  This is the near-CMC condition.  So the above barrier construction
 holds in the far-CMC setting, even though it places some restrictions on $D\tau$. 
 \end{remark}

Recalling that $g = \delta( c+ \cO(R^{-3}))$ for $\delta >0$, we show in the next Lemma that we may construct
global super-solutions if we replace the assumption that $D\tau$ be small with the assumption
that $\delta$ can be chosen arbitrarily small

\begin{lemma}{\bf (Non-Constant Global Super-Solution with small $\delta$)}\label{lem:16oct13}
Let the assumptions of Lemma~\ref{L:HC-NCSp} hold, with the exception that no smallness
assumptions are placed on $k_1$.  Then if $\delta>0$ can be chosen sufficiently small, there exists
a $\beta > 0$ such that if $B = \beta u$, then $\phi_+ = \beta u $ is a positive global super-solution of the 
Hamiltonian constraint equation (\ref{HC-LYs1}). 
\end{lemma}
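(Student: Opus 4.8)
The plan is to follow the proof of Lemma~\ref{L:HC-NCSp} almost verbatim, but to exploit two scaling mechanisms not used there: tying the boundary parameter to the super-solution via $B=\beta u=\phi_+$, and shrinking the exterior data $g=\delta(c+\mathcal{O}(R^{-3}))$ by taking $\delta$ small rather than $k_1$ small. First I would produce, exactly as in the opening of the proof of Lemma~\ref{L:HC-NCSp} (Lemmas~B.7 and B.8 in~\cite{HoTs10a}), the unique positive solution $u\in W^{s,p}$ of the auxiliary system \eqref{eq6:21aug13}, and record $u^{\tivee}>0$, $u^{\tiwedge}<\infty$ (valid since $sp>3$). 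Then, with $\phi_+=B=\beta u$ for a constant $\beta>0$ to be fixed, I would evaluate $A(\phi_+)+F(\phi_+,\biw_\phi)$ componentwise using \eqref{eq6:21aug13}, the pointwise bound \eqref{eq1:20aug13}, and the momentum estimate \eqref{CS-aLw-bound}, which gives $a_{\biw}\leqs K_1(\phi_+^{\tiwedge})^{12}+K_2$ with $K_1=2k_1$, $K_2=2a_\sigma^{\tiwedge}+2k_2$, now with \emph{no} smallness assumed on $k_1$. As in Lemma~\ref{L:HC-NCSp}, the bound \eqref{CS-aLw-bound} holds uniformly over all sources $\phi\in(0,\phi_+]$, so the super-solution produced is automatically global.

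The decisive new point is the interior boundary component. Since $B=\phi_+=\beta u$, one has $B^6\phi_+^{-3}=\phi_+^3=\beta^3u^3$, so (recall $\theta_-\leqs0$) the term $\frac14 S(\nu,\nu)\phi_+^{-3}=\bigl(\frac12\tau+\frac18|\theta_-|\bigr)\beta^3u^3$ cancels against the matching part of $\bigl(\frac12\tau+\frac14|\theta_-|\bigr)\phi_+^3=\bigl(\frac12\tau+\frac14|\theta_-|\bigr)\beta^3u^3$, leaving the interior component bounded below by $\beta\Lambda_2^{\tivee}+\frac18|\theta_-|^{\tivee}\beta^3(u^{\tivee})^3\geqs\beta\Lambda_2^{\tivee}>0$ for \emph{every} $\beta>0$. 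For the exterior component, $g=\delta(c+\mathcal{O}(R^{-3}))$ gives $g^{\tiwedge}\leqs\delta C_g$ for a constant $C_g$ independent of $\delta$, whence $\beta\Lambda_3-g\geqs\beta\Lambda_3^{\tivee}-\delta C_g\geqs0$ once $\delta\leqs\beta\Lambda_3^{\tivee}/C_g$. The bulk Hamiltonian component, after discarding the nonnegative $a_\tau$ term, reduces to
\[
\beta\Lambda_1^{\tivee}\;\geqs\;2k_1\beta^5(u^{\tiwedge})^{12}(u^{\tivee})^{-7}+a_\rho^{\tiwedge}\beta^{-3}(u^{\tivee})^{-3}+K_2\beta^{-7}(u^{\tivee})^{-7}.
\]
The $\beta^5$ term is $o(\beta)$ as $\beta\to0$ for \emph{any} value of $k_1$, so I would fix $\beta$ small enough to absorb it into $\frac13\beta\Lambda_1^{\tivee}$; the two lower bounds on $\beta$ needed to absorb the $\beta^{-3}$ and $\beta^{-7}$ terms are then met because $a_\rho^{\tiwedge}$, $a_\sigma^{\tiwedge}$ and $k_2$ are small by the inherited hypotheses of Lemma~\ref{L:HC-NCSp} (which exclude only $k_1$). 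With $\beta$ so fixed, choosing $\delta$ correspondingly small closes the exterior estimate, and $\phi_+=\beta u$ is a positive global super-solution of \eqref{HC-LYs1}.

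The main obstacle I expect is purely the bookkeeping that makes the admissible window for $\beta$ nonempty: the upper bound on $\beta$ coming from the $k_1$-term must dominate the lower bounds coming from the $a_\rho$- and $k_2$-terms, which forces one to quantify ``sufficiently small'' for $a_\rho^{\tiwedge},a_\sigma^{\tiwedge},k_2$ in terms of $\Lambda_1^{\tivee},k_1,u^{\tivee},u^{\tiwedge}$. A related subtlety is that $k_2$ in \eqref{e:k1k2} depends on $\|B\|_\infty=\beta\|u\|_\infty$, which looks circular; this is harmless because the dependence enters only through the factor $\|B\|_\infty^5\|B\|_{\tilde s,p}=\beta^6\|u\|_\infty^5\|u\|_{\tilde s,p}$, so for $\beta\leqs1$ one bounds $k_2$ by a $\beta$-independent quantity that is small under the data hypotheses, and only afterwards fixes $\beta$.
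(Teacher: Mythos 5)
Your proposal is correct and follows essentially the same route as the paper: setting $B=\phi_+=\beta u$ so that the interior boundary component reduces to $\beta\Lambda_2^{\tivee}+\tfrac18|\theta_-|^{\tivee}\beta^3(u^{\tivee})^3\geqs0$ for every $\beta>0$, absorbing the $\beta^5K_1$ term into $\beta\Lambda_1^{\tivee}$ by taking $\beta$ small (so that no smallness of $k_1$ is needed, only of $a_\rho^{\tiwedge}$, $a_\sigma^{\tiwedge}$, $k_2$), and closing the exterior inequality $\beta\Lambda_3^{\tivee}-g^{\tiwedge}\geqs0$ by taking $\delta$ small. Your additional remark resolving the apparent circularity of $k_2$ depending on $\|B\|_\infty=\beta\|u\|_\infty$ addresses a detail the paper leaves implicit, and your treatment of it is sound.
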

\begin{proof}
If we set $B = \beta u$, and follow the same process as in the proof of Lemma~\ref{L:HC-NCSp}, we find
that the following three inequalities must be satisfied in order for $\beta u$ to be a global super-solution:
\begin{align}
\beta\Lambda^{\tivee}_1-\beta^5K_1k^{12}_3 (u^{\tiwedge} )^{5}- \beta^{-3}a^{\tiwedge}_{\rho}(u^{\tivee})^{-3}- \beta^{-7}K_2 (u^{\tivee})^{-7} &\geqs 0,\label{eq3:16oct13}\\ 
\beta\Lambda_2^{\tivee}+\beta^3\frac18|\theta_-|^{\tivee} (\gamma_Iu^{\tivee})^3 &\geqs 0, \label{eq4:16oct13}\\
\beta\Lambda^{\tivee}_3-g^{\tiwedge} &\geqs 0. \label{eq5:16oct13}
\end{align}
Eq.~\eqref{eq4:16oct13} is always true for positive $u$, and we may now choose $\beta > 0$, $a_{\sigma}$, $k_2$, $a_{\rho}$ sufficiently small
so that \eqref{eq3:16oct13} is true.  Finally, choosing $\delta$ sufficiently small will ensure that \eqref{eq5:16oct13} is satisfied.
\end{proof}

\begin{remark}\label{rem1:16oct13}
We note that the choice of $B$ in Lemma~\ref{lem:16oct13} ensures that enforcing the condition ${S(\nu,\nu) = (2\tau+|\theta_-|)B^6}$ will imply the marginally trapped surface condition \eqref{eq8:26jun13}.
\end{remark}

\begin{lemma}{\bf (Non-Constant Global Sub-solution)}
\label{L:HC-NCSub}
Let $(\cM,h)$ be a 3-dimensional, smooth, compact
Riemannian manifold with metric $h \in W^{s,p}$, $s>\frac3p$ and
non-empty boundary satisfying the conditions \eqref{eq6:11july13}.
Assume that $\tau \in L^{\infty}$,
$~(4\tau^{\tivee}+|\theta_-|^{\tivee}) > 0$ on $\Sigma_I$, $g^{\tivee} > 0$ and that $\phi_+$ is the global super-solution obtained
from Lemma~\ref{L:HC-NCSp} by solving~\eqref{eq6:21aug13}. 
We have the following two cases:

(a) If $a^{\tivee}_{\rho} > 0$, then there exists $\alpha > 0$ sufficiently small so that
$\phi_- = \alpha \phi_+ < \phi_+$ is a global sub-solution.

(b) Suppose that $a^{\tiwedge}_{\sigma} > \ttk(\phi_+)$, where $\ttk$ is as in \eqref{CS-aLw-bound}.  Then there
exists $\alpha > 0$ sufficiently small so that $\phi_- = \alpha \phi_+ < \phi_+$ is a global sub-solution.
\end{lemma}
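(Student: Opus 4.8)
The plan is to construct the global sub-solution by scaling down the already-constructed non-constant global super-solution $\phi_+ = \beta u$ from Lemma~\ref{L:HC-NCSp}, mirroring the strategy used in Lemma~\ref{L:HC-GSb} for constant barriers but now with the non-constant profile $u$ solving \eqref{eq6:21aug13}. Set $\phi_- = \alpha\phi_+$ for a small parameter $\alpha \in (0,1)$ to be chosen, so that automatically $0 < \phi_- < \phi_+$, giving the compatibility condition \eqref{eqn:compat} for free. The task reduces to verifying that each of the three components of $A(\phi_-) + F(\phi_-,\biw_\phi)$ is $\leqs 0$ for all vector fields $\biw_\phi$ arising from solutions of the momentum constraint \eqref{MC-LYm1} with source $\phi \in (0,\phi_+]$.

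First I would compute $A(\phi_-) + F(\phi_-,\biw_\phi)$ componentwise using that $u$ solves \eqref{eq6:21aug13}, so that the interior operator contributes $\alpha\beta\Lambda_1 + a_\tau \phi_-^5 + a_R\phi_- - a_\rho\phi_-^{-3} - a_{\biw}\phi_-^{-7}$ on $\cM$, and the boundary operators contribute $\alpha\beta\Lambda_2$ plus the interior-boundary nonlinearity on $\Sigma_I$, and $\alpha\beta\Lambda_3 - g$ on $\Sigma_E$ (here note that, in contrast to the super-solution argument, the positive terms $\alpha\beta\Lambda_i$ now work against us). Next, for the $\Sigma_E$ component, $\alpha\beta\Lambda_3 - g \leqs 0$ holds once $\alpha$ is small enough relative to $g^{\tivee}/(\beta\Lambda_3^{\tiwedge})$, using $g^{\tivee}>0$. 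For the $\Sigma_I$ component, I would bound the positive cubic terms by their suprema and the negative $\phi_-^{-3}$ term from below by its infimum and $B^6$; since $(4\tau^{\tivee} + |\theta_-|^{\tivee}) > 0$ on $\Sigma_I$, the negative $\chi^{-3}$ term dominates as $\chi \to 0^+$, so for $\alpha$ sufficiently small this component is $\leqs 0$ — this is precisely the role of the polynomial $q_1$ in Lemma~\ref{L:HC-GSb}, now applied along the profile $\gamma_I u$. For the interior component, I would handle cases (a) and (b) as in Lemma~\ref{L:HC-GSb}: in case (a), using $a_\rho^{\tivee}>0$, the $-a_\rho\phi_-^{-3}$ term dominates the positive $a_\tau\phi_-^5 + a_R\phi_-$ as $\alpha \to 0$ (and the $-a_{\biw}\phi_-^{-7}$ term only helps), so smallness of $\alpha$ suffices; in case (b), using the pointwise lower bound $8a_{\biw} \geqs (1-\varepsilon)\sigma^2 - (\tfrac1\varepsilon - 1)(\cL\biw)^2$ together with \eqref{CS-aLw-bound}, one gets $a_{\biw}^{\tivee} \geqs \ttK_\varepsilon > 0$ for $\varepsilon \in (\ttk(\phi_+)/a_\sigma^{\tivee}, 1)$ (nonempty because $a_\sigma^{\tiwedge} > \ttk(\phi_+)$), and then the $-\ttK_\varepsilon\phi_-^{-7}$ term dominates for small $\alpha$.

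The common mechanism is that all the offending positive terms — $\alpha\beta\Lambda_i$, $a_\tau\chi^5$, $a_R\chi$, the cubic boundary terms — vanish or stay bounded as $\alpha \to 0^+$, while at least one genuinely negative term ($-a_\rho\chi^{-3}$ in case (a), $-\ttK_\varepsilon\chi^{-7}$ in case (b), and the boundary $-(\cdots)B^6\chi^{-3}$ term) blows up like a negative power of $\alpha$; hence there is a threshold $\alpha_0 > 0$ below which all three components are nonpositive simultaneously. One should emphasize that the resulting $\phi_-$ is \emph{global} in the sense of Definition~\ref{D:barriers}: the bound $a_{\biw}^{\tivee} \geqs \ttK_\varepsilon$ in case (b), and the trivial bound $a_\rho^{\tivee} > 0$ with $-a_{\biw}\phi_-^{-7} \leqs 0$ in case (a), hold uniformly over all $\phi \in (0,\phi_+]$ by \eqref{CS-aLw-bound}, so the sub-solution inequality holds for every $\biw_\phi$ with source in $(0,\phi_+]$.

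The main obstacle I anticipate is bookkeeping rather than conceptual: one must choose $\alpha$ to satisfy \emph{all three} inequalities at once, and the three thresholds depend on $u^{\tivee}, u^{\tiwedge}$, on $\beta$ (already fixed by the super-solution construction), on $B$, and on the data norms, so care is needed to confirm these are independent of $\alpha$ and can be ordered consistently. A secondary subtlety is making sure the small $a_\rho^{\tiwedge}$, $a_\sigma^{\tiwedge}$, $k_1$, $k_2$ already imposed in Lemma~\ref{L:HC-NCSp} do not conflict with what is needed here — in case (b) we in fact \emph{want} $a_\sigma^{\tiwedge}$ large enough to exceed $\ttk(\phi_+)$, so one must check the hypothesis set is consistent, i.e. that the construction in case (b) is invoked only in a regime where $a_\sigma^{\tiwedge} > \ttk(\phi_+)$ is compatible with the super-solution's smallness requirements, which is legitimate since $\ttk(\phi_+)$ can be made small (small $k_1, k_2$) while $a_\sigma^{\tiwedge}$ is held fixed and moderate.
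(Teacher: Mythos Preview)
Your proposal is correct and follows essentially the same route as the paper: set $\phi_-=\alpha\phi_+=\alpha\beta u$, use that $u$ solves \eqref{eq6:21aug13} to replace the linear part by $\alpha\beta\Lambda_i$, then show each component is $\leqs 0$ for small $\alpha$ via the dominant negative-power term ($-a_\rho^{\tivee}\phi_-^{-3}$ in case~(a), $-\ttK_\varepsilon\phi_-^{-7}$ in case~(b), and the $-(\cdots)B^6\phi_-^{-3}$ term on $\Sigma_I$). One small slip: in your interior expression the term $a_R\phi_-$ is already absorbed into $\alpha\beta\Lambda_1$ (since $-\Delta u + a_R u = \Lambda_1$), so it should not appear separately --- the paper's display \eqref{eq7:21aug13} omits it for this reason; your final paragraph's observation about the tension in case~(b) between $a_\sigma^{\tiwedge}>\ttk(\phi_+)$ and the smallness hypotheses of Lemma~\ref{L:HC-NCSp} is exactly the content of the paper's Remark~\ref{rem6:27sep13}.
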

\begin{proof}
Evaluating at $\phi_-$, where $\alpha$ is to be determined, we have that
\begin{align*}
A_L(\phi_-)+ F(\phi_-,\biw) = \left( \begin{array}{c} \alpha\beta\Lambda_1+a_{\tau} \phi_-^{5}- a_{\rho}\phi_-^{-3}- a_{\biw} \phi_-^{-7} \\ \alpha\beta\Lambda_2+\left(\frac12\tau -\frac14\theta_- \right)(\gamma_I\phi_-)^3-\frac14S(\nu,\nu)(\gamma_I\phi_-)^{-3}\\ \alpha\beta\Lambda_3-g\end{array} \right). 
\end{align*}
This implies that 
\begin{align}\label{eq7:21aug13}
&A_L(\phi_-) +F(\phi_-,\biw) \\
& \nonumber\\
\leqs & \left( \begin{array}{c} \alpha\beta\Lambda_1+\alpha^5a^{\tiwedge}_{\tau} \phi_+^{5}- \alpha^{-3}a^{\tivee}_{\rho}\phi_+^{-3}- \alpha^{-7}a^{\tivee}_{\biw} \phi_+^{-7} \\ \alpha\beta\Lambda_2+\alpha^3\left(\frac12\tau^{\tiwedge} +\frac14|\theta_-|^{\tiwedge} \right)(\gamma_I\phi_+)^3-\alpha^{-3}\left(\frac12\tau^{\tivee} +\frac18|\theta_-|^{\tivee} \right){(B^{\tivee})}^6 (\gamma_I\phi_+)^{-3}\\ \alpha\beta\Lambda_3-g^{\tivee}\end{array} \right). \nonumber
\end{align}

In case(a), because $\tau^{\tiwedge} < \infty, a_{\rho}^{\tivee} > 0, ~(4\tau^{\tivee} + |\theta_-|^{\tivee}) > 0$ and $g^{\tivee} > 0$, we may choose $\alpha$ sufficiently small so that each of the equations in \eqref{eq7:21aug13} is non-positive.  This
implies that ${A_L(\phi_-)+F(\phi_-,\biw) \leqs  0}$ and that $\phi_- = \alpha\phi_+$ is a global sub-solution.

\medskip

In case(b) we have that $a^{\tivee}_{\rho} = 0$, so we have that
\begin{align}\label{eq8:21aug13}
&A_L(\phi_-) +F(\phi_-,\biw) \\
& \nonumber\\
\leqs & \left( \begin{array}{c} \alpha\beta\Lambda_1+\alpha^5a^{\tiwedge}_{\tau} \phi_+^{5}- \alpha^{-7}a^{\tivee}_{\biw} \phi_+^{-7} \\ \alpha\beta\Lambda_2+\alpha^3\left(\frac12\tau^{\tiwedge} +\frac14|\theta_-|^{\tiwedge} \right)(\gamma_I\phi_+)^3-\alpha^{-3}\left(\frac12\tau^{\tivee} +\frac18|\theta_-|^{\tivee} \right)(B^{\tivee})^6(\gamma_I\phi_+)^{-3}\\ \alpha\beta\Lambda_3-g^{\tivee}\end{array} \right). \nonumber
\end{align}
The equations corresponding to the lower bounds for the Robin operators in \eqref{eq8:21aug13} remain unchanged.  So in order to guarantee that we can choose
$\alpha>0$ sufficiently small so that $\phi_-$ is a global sub-solution, we must show that $a^{\tivee}_{\biw}>0$
given the assumption that $a_{\sigma}^{\tivee} > k(\phi_+)$.  For $\ee >0$, the inequality $2|\sigma_{ab}(\cL\biw)^{ab}| \leqs  \ee\sigma^2+\frac{1}{\ee}(\cL\biw)^2$
implies that
\begin{align}
a^{\tivee}_{\biw}\geqs (1-\ee)a_{\sigma}^{\tivee}-\left(\frac{1}{\ee}-1\right)a^{\tiwedge}_{\cL\biw} \geqs (1-\ee)a^{\tivee}_{\sigma}-\left(\frac{1}{\ee}-1\right)\ttk(\phi_+) = \ttK_{\ee},
\end{align}
where $a_{\sigma}$ and $a_{\cL\biw}$ are defined in the paragraph preceding \eqref{eq1:20aug13} and $\ttk$ is the bound \eqref{CS-aLw-bound} 
on the momentum constraint.  Requiring that $\ee \in (0,1)$ and that $a^{\tivee}_{\cL\biw} > 0$, we find that $\ee \in (\ttk(\phi_+)/a^{\tiwedge}_{\sigma},1)$,
which is nonempty provided that $a_{\sigma}^{\tivee} > k(\phi_+)$.  Therefore, choosing $\ee \in (\ttk(\phi_+)/a^{\tiwedge}_{\sigma},1)$ we have that
$a^{\tivee}_{\cL\biw} > 0$, which allows us to choose $\alpha$ sufficiently small so that $A_L(\phi_-)+F(\phi_-,\biw) \leqs  0$, which
implies that $\phi_- = \alpha\phi_+$ is a global sub-solution.
\end{proof}

\begin{remark}\label{rem1:17oct13}
In Lemma~\ref{lem1:17oct13}, we assume that $\delta$ can be taken arbitrarily small and we set $B = \beta u $,
where $u$ solves \eqref{eq6:21aug13}.  The global sub-solution constructed in Lemma~\ref{L:HC-NCSub} does
not work for this choice of $B$.  We instead use the following Lemma to obtain a global sub-solution when
$\delta$ is small.  
\end{remark}

\begin{lemma}{\bf (Global Sub-Solution for small $\delta$)}\label{lem1:17oct13}
Let $(\cM,h)$ be a 3-dimensional, smooth, compact
Riemannian manifold with metric $h \in W^{s,p}$, $s>\frac3p$ and
non-empty boundary satisfying the conditions \eqref{eq6:11july13}.
Assume that $\tau \in L^{\infty}$,
$~(4\tau^{\tivee}+|\theta_-|^{\tivee}) > 0$ on $\Sigma_I$, $g^{\tivee} > 0$ and that $B = \phi_+ = \beta u $ is the global super-solution obtained
from Lemma~\ref{L:HC-NCSp} by solving~\eqref{eq6:21aug13}. 
Suppose $v \in W^{s,p}$ is a positive solution to
\begin{align}\label{eq1:17oct13}
-\Delta  v + a_Rv&= {\lambda}_1 > 0, \\
\partial_{\nu} v + \frac12 H v &= {\lambda}_2 > 0, \nonumber\\
\partial_{\nu} v + c v &= {\lambda}_3 > 0, \nonumber
\end{align}
where ${\lambda_1}, {\lambda_2}, {\lambda_3}$ are positive
functions chosen so that $v$ is distinct from $u$.
We have the following two cases:

(a) If $a^{\tivee}_{\rho} > 0$, then there exists $\alpha > 0$ sufficiently small so that
$\phi_- = \alpha v < \phi_+$ is a global sub-solution.

(b) Suppose that $a^{\tiwedge}_{\sigma} > \ttk(\phi_+)$, where $\ttk$ is as in \eqref{CS-aLw-bound}.  Then there
exists $\alpha > 0$ sufficiently small so that $\phi_- = \alpha v< \phi_+$ is a global sub-solution.
\end{lemma}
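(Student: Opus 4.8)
The plan is to exhibit a global sub-solution of the form $\phi_-=\alpha v$, where $v>0$ is the positive solution of the auxiliary linear problem~\eqref{eq1:17oct13} furnished by the hypothesis (its existence follows from Lemmas~B.7 and~B.8 in~\cite{HoTs10a}, exactly as for $u$ in Lemma~\ref{L:HC-NCSp}), the right-hand sides $\lambda_1,\lambda_2,\lambda_3$ being taken positive and bounded and such that $v$ is distinct from $u$, and $\alpha>0$ a small parameter to be fixed at the end. Since $s>\tfrac3p$ we have $W^{s,p}\hookrightarrow C^0(\cM)$, so $u$ and $v$ are continuous and strictly positive on the compact manifold $\cM$; hence $m:=\sup_{\cM}(v/u)<\infty$, and for every $\alpha\in(0,\beta/m)$ the function $\phi_-=\alpha v$ satisfies $0<\phi_-\leqs\beta u=\phi_+$. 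This compatibility step is the one genuinely new point compared with Lemma~\ref{L:HC-NCSub}: because here $B=\phi_+$, the scaled super-solution $\alpha\phi_+$ is no longer the right candidate (cf.~Remark~\ref{rem1:17oct13}), and we build $\phi_-$ from a $v$ distinct from $u$, at the cost of having to control $v/u$.

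Next I would verify the sub-solution inequality $A_L(\phi_-)+F(\phi_-,\biw_\phi)\leqs0$ componentwise, for an arbitrary $\biw_\phi$ solving the momentum constraint~\eqref{MC-LYm1} with source $\phi\in(0,\phi_+]\cap W^{s,p}$. Since $A_L$ is linear and $v$ solves~\eqref{eq1:17oct13}, $A_L(\alpha v)=\alpha(\lambda_1,\lambda_2,\lambda_3)$; using~\eqref{eq8:8aug13} together with $S(\nu,\nu)=(2\tau+|\theta_-|/2)B^6$ and $B=\beta u$, the three entries of $A_L(\phi_-)+F(\phi_-,\biw_\phi)$ are $\alpha\lambda_1+a_\tau\alpha^5 v^5-a_{\biw}\alpha^{-7}v^{-7}-a_\rho\alpha^{-3}v^{-3}$, then $\alpha\lambda_2+(\tfrac12\tau-\tfrac14\theta_-)\alpha^3(\gamma_I v)^3-\tfrac14(2\tau+|\theta_-|/2)(\gamma_I(\beta u))^6\alpha^{-3}(\gamma_I v)^{-3}$, and $\alpha\lambda_3-g$. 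I would bound each entry from above by replacing the coefficients by the relevant essential infima/suprema. The $\Sigma_E$-entry is at most $\alpha\lambda_3^{\tiwedge}-g^{\tivee}$, which is $\leqs0$ for $\alpha$ small because $g^{\tivee}>0$. The $\Sigma_I$-entry is dominated as $\alpha\to0^+$ by $-\tfrac14(2\tau+|\theta_-|/2)(\gamma_I(\beta u))^6\alpha^{-3}(\gamma_I v)^{-3}$, which is at most a fixed negative multiple of $\alpha^{-3}$ since $2\tau^{\tivee}+\tfrac12|\theta_-|^{\tivee}=\tfrac12(4\tau^{\tivee}+|\theta_-|^{\tivee})>0$ and $(\gamma_I u)^{\tivee}>0$, while the other two terms stay bounded (indeed vanish) as $\alpha\to0^+$; hence this entry is $\leqs0$ for $\alpha$ small.

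For the bulk entry I split into the two stated cases, mirroring Lemma~\ref{L:HC-NCSub}. In case (a) ($a_\rho^{\tivee}>0$) I discard the nonnegative term $-a_{\biw}\alpha^{-7}v^{-7}$ and bound the entry by $\alpha\lambda_1^{\tiwedge}+a_\tau^{\tiwedge}(v^{\tiwedge})^5\alpha^5-a_\rho^{\tivee}(v^{\tiwedge})^{-3}\alpha^{-3}\to-\infty$ as $\alpha\to0^+$. In case (b) ($a_\sigma^{\tiwedge}>\ttk(\phi_+)$) I instead retain the $a_{\biw}$ term: the pointwise splitting $8a_{\biw}=\sigma^2+(\cL\biw)^2+2\sigma_{ab}(\cL\biw)^{ab}\geqs(1-\varepsilon)\sigma^2-(\tfrac1\varepsilon-1)(\cL\biw)^2$ together with~\eqref{CS-aLw-bound} gives $a_{\biw}^{\tivee}\geqs(1-\varepsilon)a_\sigma^{\tivee}-(\tfrac1\varepsilon-1)\ttk(\phi_+)=:\ttK_\varepsilon>0$ for a suitable $\varepsilon\in(\ttk(\phi_+)/a_\sigma^{\tiwedge},1)$, exactly as in the proof of Lemma~\ref{L:HC-NCSub}(b); discarding now the nonnegative term $-a_\rho\alpha^{-3}v^{-3}$, the entry is at most $\alpha\lambda_1^{\tiwedge}+a_\tau^{\tiwedge}(v^{\tiwedge})^5\alpha^5-\ttK_\varepsilon(v^{\tiwedge})^{-7}\alpha^{-7}\to-\infty$. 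Fixing finally one $\alpha>0$ small enough that $\alpha<\beta/m$ and all three upper bounds are $\leqs0$ simultaneously, we conclude that $\phi_-=\alpha v$ satisfies $A_L(\phi_-)+F(\phi_-,\biw_\phi)\leqs0$ for every admissible $\biw_\phi$, i.e. it is a global sub-solution with $0<\phi_-<\phi_+$.

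The argument is essentially a transcription of the constant and near-constant barrier computations already carried out in Lemmas~\ref{L:HC-NCSub} and~\ref{lem:16oct13}, so I do not expect a serious obstacle; the only places demanding care are (i) the compatibility $\phi_-\leqs\phi_+$, which now truly relies on the continuity and strict positivity of $u$ and $v$ to bound $v/u$, and (ii) in case (b), checking that the requirement $a_\sigma^{\tiwedge}>\ttk(\phi_+)$ can coexist with the smallness of $a_\sigma$ used in Lemma~\ref{lem:16oct13} to construct $\phi_+$ — which is possible since $\ttk(\phi_+)=\ttk_1\|\phi_+\|_\infty^{12}+\ttk_2$ can be driven arbitrarily small by shrinking $\beta$ and the momentum-constraint data.
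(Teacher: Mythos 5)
Your proposal is correct and follows essentially the same route as the paper: evaluate $A_L(\alpha v)+F(\alpha v,\biw_\phi)$ using the auxiliary problem~\eqref{eq1:17oct13}, bound each of the three components by essential suprema/infima, and shrink $\alpha$ so that all three are non-positive, with case (b) handled exactly as in Lemma~\ref{L:HC-NCSub}(b) via the $\varepsilon$-splitting of $8a_{\biw}$ and the bound~\eqref{CS-aLw-bound}. Your explicit control of $\sup(v/u)$ for the compatibility $\phi_-<\phi_+$ and your closing remark on reconciling $a_\sigma^{\tiwedge}>\ttk(\phi_+)$ with the smallness assumptions of Lemma~\ref{lem:16oct13} simply spell out details the paper leaves implicit.
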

\begin{proof}
The existence of a positive $v$ solving~\eqref{eq1:17oct13}
follows from Theorem~2.1 in~\cite{HoTs10a}.
Evaluating at $\phi_-$, where $\alpha$ is to be determined, we have that
\begin{align*}
A_L(\phi_-)+ F(\phi_-,\biw) = \left( \begin{array}{c} \alpha\beta{\lambda}_1+a_{\tau} \phi_-^{5}- a_{\rho}\phi_-^{-3}- a_{\biw} \phi_-^{-7} \\ \alpha\beta{\lambda}_2+\left(\frac12\tau -\frac14\theta_- \right)(\gamma_I\phi_-)^3-\frac14S(\nu,\nu)(\gamma_I\phi_-)^{-3}\\ \alpha\beta{\lambda}_3-g\end{array} \right). 
\end{align*}
This implies that 
\begin{align}\label{eq3b:17oct13}
A_L(\phi_-) & +F(\phi_-,\biw) \\
& \nonumber\\
\leqs & \left( \begin{array}{c} \alpha\beta{\lambda}_1+\alpha^5a^{\tiwedge}_{\tau} v^{5}- \alpha^{-3}a^{\tivee}_{\rho}v^{-3}- \alpha^{-7}a^{\tivee}_{\biw} v^{-7} \\ \alpha\beta{\lambda}_2+\alpha^3\left(\frac12\tau^{\tiwedge} +\frac14|\theta_-|^{\tiwedge} \right)(\gamma_I v)^3-\alpha^{-3}\left(\frac12\tau^{\tivee} +\frac18|\theta_-|^{\tivee} \right)B^6 (\gamma_I v)^{-3}\\ \alpha\beta{\lambda}_3-g^{\tivee}\end{array} \right). \nonumber
\end{align}

In case(a), because $\tau^{\tiwedge} < \infty, a_{\rho}^{\tivee} > 0, ~(4\tau^{\tivee} + |\theta_-|^{\tivee}) > 0$, and $g^{\tivee} > 0$, we may choose $\alpha$ sufficiently small so that each of the equations in \eqref{eq7:21aug13} is non-positive.  This
implies that ${A_L(\phi_-)+F(\phi_-,\biw) \leqs  0}$ and that $\phi_- = \alpha\phi_+$ is a global sub-solution.  Moreover, because $u^{\tivee} > 0$, we can choose $\alpha >0$ so that $\phi_- = \alpha v < \phi_+ = \beta u$.

In case (b), the proof follows by making an argument similar to case (b) in the proof of Lemma~\ref{L:HC-NCSub}, with $\phi_- = \alpha v$.
\end{proof}

\begin{remark}\label{rem6:27sep13}
In practice, it will be impossible to construct the global sub-solution in Lemma~\ref{L:HC-NCSub}(b)
from using the global super-solution obtained in Lemma~\ref{L:HC-NCSp} given the smallness
assumptions on $a_{\sigma}$ and the dependence of $\ttk_2$ on $\sigma$. We
include the construction here for completeness as an alternative to the condition
that $\rho \ne 0$.
We note that in the closed case, it was shown by Maxwell in~\cite{dM09}
that under suitable smoothness assumptions on the metric (that it is
in atleast $W^{2,p}$), the known decay and other properties of the Green's 
function for the Laplace-Beltrami operator (cf.~\cite{Aubi82}) implied
that it was sufficient to construct only a global super-solution for 
completion of the Schauder argument in Theorem~\ref{T:FIXPT2}.
This allowed Maxwell in~\cite{dM09} to partially extend the far-from-CMC 
results in~\cite{HNT07b} for closed manifolds to the vacuum case ($\rho = 0$)
when the metric is in $W^{2,p}$ or better; the vacuum case for rough metrics 
on closed manifolds remains open.
In~\cite{Dilt13a}, Dilts followed closely Maxwell's argument in~\cite{dM09}
and showed that under the same smoothness assumptions on the background 
metric, and by also making smoothness assumptions on the boundary and
exploiting some additional results from~\cite{HoTs10a}, the standard
estimates for the Green's function from~\cite{Aubi82} can again be used
to exploit Maxwell's technique for avoiding the sub-solution 
on compact manifolds with boundary that have smooth metrics.
Additional assumptions can be placed on the other data to avoid
assuming that $\rho \ne 0$ to obtain a global sub-solution.  The following
Lemma, based on barrier constructions pioneered in \cite{dM05}
(see also~\cite{HNT07b} for a detailed discussion of related constructions
based on auxiliary problems),
provides a method to obtain a global sub-solution in vacuum with additional,
mild assumptions on $a_R$, $\sigma$ and $a_{\tau}$.  
\end{remark}

\begin{lemma}{\bf (Global Sub-solution with $\rho \equiv 0$) }\label{lem2:17oct13}
Let $(\cM,h)$ be a 3-dimensional, smooth, compact
Riemannian manifold with metric $h \in W^{s,p}$, $s>\frac3p$ and
non-empty boundary satisfying the conditions \eqref{eq6:11july13}.
Assume that $\tau \in L^{\infty}$,
$~(4\tau^{\tivee}+|\theta_-|^{\tivee}) > 0$ on $\Sigma_I$, $g^{\tivee} > 0$
and $|\sigma| > 0$.
Additionally assume that there exists $\gamma_1>0$ such that
$a_R+\gamma_1 a_{\tau} \geqs 0$ and $\gamma_2$ so that
$H+\gamma_2 (2\tau+|\theta_-|) \geqs 0$.  Let $v \in W^{s,p}$ be a positive solution to
\begin{align}\label{eq1b:17oct13}
-\Delta  v + (a_R+\gamma_1 a_{\tau})v&=a_{\biw} > 0 \\
\partial_{\nu} v + \frac12 (H+\gamma_2 (2\tau+|\theta_-|)) v &= {\eta}_2 > 0, \nonumber\\
\partial_{\nu} v + c v &= {\eta}_3 > 0, \nonumber
\end{align}
where $\eta_2, \eta_3$ are positive functions.  Then there exists $\alpha > 0$
such that $\phi_- = \alpha v \leqs \phi_+$ is a global sub-solution, where $\phi_+$
is any positive global super-solution.
\end{lemma}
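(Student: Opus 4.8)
The plan is to follow the auxiliary-problem barrier technique of Maxwell~\cite{dM05} (see also the related constructions discussed in~\cite{HNT07b}): use the positive solution $v$ of the linear boundary value problem~\eqref{eq1b:17oct13} to build the candidate sub-solution $\phi_-=\alpha v$, and then verify the defining inequalities of a global sub-solution (Definition~\ref{D:barriers}) row by row for $A_L(\phi_-)+F(\phi_-,\biw)$, choosing the scalar $\alpha>0$ small at the end. Existence of a positive $v\in W^{s,p}$ is immediate from Theorem~2.1 in~\cite{HoTs10a}: the bulk potential $a_R+\gamma_1 a_\tau$ and the interior Robin coefficient $\tfrac12(H+\gamma_2(2\tau+|\theta_-|))$ are nonnegative by hypothesis, $c>0$ on $\Sigma_E$, and the right-hand sides $a_{\biw},\eta_2,\eta_3$ are strictly positive. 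Since $v$ is continuous and positive on the compact $\cM$ it obeys $0<v^{\tivee}\leqs v\leqs v^{\tiwedge}<\infty$, and a positive global super-solution $\phi_+$ is likewise bounded below away from zero, so $\phi_-=\alpha v\leqs\phi_+$ holds as soon as $\alpha\leqs\phi_+^{\tivee}/v^{\tiwedge}$.

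For the bulk component, insert $\phi_-=\alpha v$ into the first row of $A_L(\phi_-)+F(\phi_-,\biw)$, use $\rho\equiv0$ (so $a_\rho=0$), and eliminate $-\Delta v$ via the interior equation of~\eqref{eq1b:17oct13}; this reduces the row to
$$
-\Delta\phi_-+a_R\phi_-+a_\tau\phi_-^{5}-a_{\biw}\phi_-^{-7}
= a_{\biw}\bigl(\alpha-\alpha^{-7}v^{-7}\bigr)+\alpha\,a_\tau v\bigl(\alpha^{4}v^{4}-\gamma_1\bigr).
$$
Both summands are nonpositive for $\alpha$ small: one needs only $a_{\biw}\geqs0$ together with $\alpha-\alpha^{-7}v^{-7}=\alpha^{-7}(\alpha^{8}-v^{-7})<0$, which holds once $\alpha^{8}(v^{\tiwedge})^{7}<1$; and $a_\tau v\geqs0$ while $\alpha^{4}v^{4}-\gamma_1\leqs\alpha^{4}(v^{\tiwedge})^{4}-\gamma_1<0$ for $\alpha$ small. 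The essential point — and the step I expect to cost the most care — is that this must be arranged uniformly over the entire admissible family $\{\biw_\phi:\phi\in[\phi_-,\infty)\cap W^{s,p}\}$: the sign condition $a_R+\gamma_1 a_\tau\geqs0$ is exactly what lets us trade the $a_\tau\phi_-^{5}$ term (which scales like $\alpha^{5}$) against the $\alpha\,a_\tau v$ term produced by the auxiliary PDE, and the a~priori bound~\eqref{CS-aLw-bound} from the momentum-constraint theory (Theorem~\ref{T:MC-E-Reg2}) is what keeps the coefficient $a_{\biw}=a_{\biw_\phi}$ controlled uniformly for $\phi\in(0,\phi_+]$, so that a single $\alpha$ works simultaneously for every such $\biw$.

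The two boundary rows are routine. On $\Sigma_I$, substituting the interior Robin condition of~\eqref{eq1b:17oct13} collapses the boundary principal part to $\alpha\eta_2-\tfrac{\alpha}{2}\gamma_2(2\tau+|\theta_-|)v$, so (recalling $S(\nu,\nu)=(2\tau+|\theta_-|/2)B^{6}$) the $\Sigma_I$ row becomes
$$
\alpha\eta_2-\tfrac{\alpha}{2}\gamma_2(2\tau+|\theta_-|)v
+\alpha^{3}\bigl(\tfrac12\tau-\tfrac14\theta_-\bigr)v^{3}
-\tfrac14\alpha^{-3}S(\nu,\nu)v^{-3};
$$
the hypothesis $(4\tau^{\tivee}+|\theta_-|^{\tivee})>0$ forces $S(\nu,\nu)$ to be strictly positive pointwise, so the $\alpha^{-3}$-singular negative term dominates all the $O(\alpha)$ and $O(\alpha^{3})$ contributions and the row is $\leqs0$ for $\alpha$ small. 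On $\Sigma_E$, the exterior Robin condition of~\eqref{eq1b:17oct13} reduces the last row to $\alpha\eta_3-g$, which is $\leqs0$ whenever $\alpha\leqs g^{\tivee}/\|\eta_3\|_\infty$, using $g^{\tivee}>0$. Taking $\alpha>0$ below the finitely many thresholds accumulated above — including $\alpha\leqs\phi_+^{\tivee}/v^{\tiwedge}$ — yields a single $\alpha$ for which $\phi_-=\alpha v$ satisfies all three inequalities against every admissible $\biw$, hence is a compatible global sub-solution, as claimed.
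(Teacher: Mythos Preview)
Your approach is essentially identical to the paper's: both substitute the auxiliary equations~\eqref{eq1b:17oct13} into $A_L(\alpha v)+F(\alpha v,\biw)$ row by row, factor the bulk row as $a_{\biw}\bigl(\alpha-\alpha^{-7}v^{-7}\bigr)+a_\tau\bigl(\alpha^{5}v^{5}-\alpha\gamma_1 v\bigr)$, and drive each component nonpositive by taking $\alpha$ small (the $\Sigma_I$ row being dominated by the $\alpha^{-3}$ term and the $\Sigma_E$ row reducing to $\alpha\eta_3-g$). The only cosmetic differences are that the paper cites Lemmas~B.7--B.8 of~\cite{HoTs10a} (rather than Theorem~2.1) for existence and positivity of $v$, and passes to sup/inf constants before concluding, whereas you argue pointwise.
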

\begin{proof}
The function $v$ exists and is positive by
Lemmas~B.7 and~B.8 in~\cite{HoTs10a}.
Evaluating at $\phi_-$, where $\alpha$ will be determined, we have that
\begin{align*}
A_L(\phi_-)+ F(\phi_-,\biw) = \left( \begin{array}{c} \alpha a_{\biw}-\alpha\gamma_1a_{\tau}v+a_{\tau} \phi_-^{5}- a_{\biw} \phi_-^{-7} \\  \alpha\eta_2-\alpha\gamma_2(2\tau+|\theta_-|)v+\left(\frac12\tau -\frac14\theta_- \right)\phi_-^3-\frac14S(\nu,\nu)\phi_-^{-3}\\ \alpha\eta_3-g\end{array} \right). 
\end{align*}
This implies that 
\begin{align}\label{eq3:17oct13}
A_L(\phi_-) & +F(\phi_-,\biw) \\
& \leqs \left( \begin{array}{c}
    (\alpha^5(v^{\tiwedge})^5-\alpha\gamma_1 v^{\tivee})a^{\tiwedge}_{\tau}+( \alpha  - \alpha^{-7}(v^{\tivee})^{-7})a^{\tivee}_{\biw} \\
     \alpha(\eta^{\tiwedge}_2-\gamma_2(2\tau^{\tivee}+|\theta_-|^{\tivee})v^{\tivee})+\alpha^3\left(\frac12\tau^{\tiwedge} +\frac14|\theta_-|^{\tiwedge} \right)(v^{\tiwedge})^3 -\Theta \\
     \alpha\eta^{\tiwedge}_3-g^{\tivee}\end{array} \right), \nonumber
\end{align}
where $\Theta = \alpha^{-3}\left(\frac12\tau^{\tivee} +\frac18|\theta_-|^{\tivee} \right)(B^{\tivee})^6 (v^{\tiwedge})^{-3}$.
By choosing $\alpha > 0$ sufficiently small, each of the components in \eqref{eq3:17oct13} can be made
non-positive.  Moreover, for any positive super-solution $\phi_+ > 0$, $\alpha $ can be chosen sufficiently small
so that $\phi_- = \alpha v < \phi_+$. 
\end{proof}

\subsection{Obstacles to Global Barriers for Arbitrary 
            $h \in \cY^+$ and $\tau$ }

In Section~\ref{s:nonconstB}, we showed that we can obtain global super solutions provided that certain data is sufficiently small and that 
either $h$ has sufficiently large scalar curvature and $\tau$ is arbitrary,
or that $h \in \cY^+$ and $D\tau$ is sufficiently small.  However, it has proven to be extremely difficult to construct global
super solutions where both $h$ and $\tau$ are freely specifiable.
We now give an analysis which helps to explain why this is the case.

As we saw in Theorem~\ref{L:HC-NCSp}, in order for 
$\phi_+ = \beta u$ to be a global super solution, $\beta $ must satisfy  
\begin{align*}
 \beta\Lambda^{\tivee}_1-\beta^5K_1k^{12}_3 (u^{\tiwedge} )^{5}> 0, \quad \text{and} \quad \beta\Lambda^{\tivee}_3-g^{\tiwedge} \geqs 0,
\end{align*}
where $k_3 = u^{\tiwedge}/u^{\tivee}$.  This implies that
\begin{align}\label{eq7:24sep13}
\frac{g^{\tiwedge}}{\Lambda^{\tivee}_3} \leqs  \beta < \frac{(\Lambda^{\tivee}_1)^{\frac14} (u^{\tivee})^3}{K_1^{\frac14}(u^{\tiwedge})^{\frac{17}{4}}}.
\end{align}

As we mentioned earlier,
it is impossible to choose $g^{\tiwedge}$ small without affecting the size of $\Lambda^{\tivee}_3$ (which
ultimately depends on $u$).  Therefore, if we cannot choose $D\tau$ to be small (which makes $K_1$ small), there is no guarantee that $\beta$ can be chosen to satisfy the
above conditions without knowing more about $u^{\tiwedge}, u^{\tivee}$.  

In an attempt to at least partially resolve this issue, we consider the following auxiliary problem: find $u$
that solves
\begin{align}\label{eq1:24sep13}
-\Delta u + a_R u &= f_1(u,\Lambda), \\
\Tr_{\tiI}\partial_{\nu} u+ \frac12 H\Tr_{\tiI}u &= f_2(\Tr_{\tiI}u,\Lambda),\nonumber\\
\Tr_{\tiE}\partial_{\nu} u+ c\Tr_{\tiE}u &= f_3(\Tr_{\tiE}u,\Lambda),\nonumber
\end{align}
where $\Lambda$ is a real valued parameter and $f_1, f_2$ and $f_3$
are positive functions.  
The idea is to choose the functions $f_1, f_2$ and $f_3$ so that we can
solve \eqref{eq1:24sep13} using the method of sub-and super-solutions. 
This will allow us to obtain a solution $u$ that is point-wise bounded by
the sub-and super-solutions, which will give us some control of 
$u^{\tivee}$ and $u^{\tiwedge}$.  The easiest approach to constructing
barriers for \eqref{eq1:24sep13} is to look for constant barriers.
Therefore we require that $f_1, f_2$ and $f_3$ satisfy one of the
two following conditions:

\begin{enumerate}

\item[{\bf 1)}]The functions
\begin{align}
&g_1(x) = a_R^{\tiwedge}x - f_1(x,\Lambda),\nonumber\\
&g_2(x) = \frac12H^{\tiwedge}x - f_2(x,\Lambda),\nonumber\\
&g_3(x) = c^{\tiwedge} x - f_3(x,\Lambda), \nonumber\\
&h_1(x) =a_R^{\tivee}x - f_1(x,\Lambda),\nonumber\\
&h_2(x) = \frac12H^{\tivee}x - f_2(x,\Lambda),\nonumber\\
&h_3(x) = c^{\tivee}x - f_3(x,\Lambda),\nonumber
\end{align}
all have at exactly one positive root. 

\medskip

\item[{\bf 2)}]\label{eq5:24sep13} All of the functions $g_i$ and $h_i$ have at least two roots, where
$\alpha_{i1}, \alpha_{i2}$ are the two smallest positive roots for each $g_i$ and 
$\gamma_{i1}$ and $\gamma_{i2}$ are the two smallest positive roots for each $h_i$.
Additionally assume that $\gamma_{j1} < \gamma_{i2}$
and $\alpha_{i1} < \alpha_{j2}$ for each $1 \leqs  i,j \leqs  3$.
\end{enumerate}

We observe that an unfortunate consequence of requiring
that one of the above two conditions be satisfied is that $R, H$ and $c$ must be positive,
given that $f_1, f_2$ and $f_3$ are strictly positive.  Therefore we
are not entirely free to specify $h$ in this construction.  While this limitation is
not ideal, the auxiliary problem \eqref{eq1:24sep13} is a natural starting point
in our attempts at constructing a global super-solution with a freely specifiable
$\tau$ and as few conditions on $h$ as possible.  

In case ${\bf 1}$, let $\alpha$ by the smallest root
of the $g_i$ and let $\gamma$ be the largest root of the $h_i$, and in case
${\bf 2}$ let $\alpha$ be the smallest root of the $g_i$ and $\gamma$ be the smallest
root of the $h_i$.  It is easily checked that $\alpha$ and $\gamma$ are sub- and super-solutions 
for~\eqref{eq1:24sep13}. Therefore, using the techniques outlined in this paper,
this problem can be solved to obtain $u$
which satisfies $\alpha \leqs  u^{\tivee}$ and $u^{\tiwedge} \leqs  \gamma$.  
Moreover, based on the definition of $\alpha$ and $\gamma$, $g_i(u) \geqs 0$ and $h_i(u) \geqs 0$ for each $1 \leqs  i \leqs  3$.

\begin{remark}
We note that in general, in order to solve~\eqref{eq1:24sep13} we only require that
the functions $f_1,f_2, f_3$ be chosen so that there exists an interval $I_1$ such that the
functions $h_i$ are nonnegative on this interval.  Similarly, we also require that
there exist and interval $I_2$ with $\sup I_2 < \inf I_1$ such that the $g_i$
are non-positive on $I_2$.  Then way may choose a super-solution $\beta \in I_1$ and
a sub-solution in $\alpha \in I_2$.  For this more general collection of $f_i$, it is unclear
whether $h_i(u) > 0$ or $g_i(u) >0$ for $\alpha \leqs  u \leqs  \gamma$.  This is not a necessary
condition, and the following discussion suggests that this is not ideal.  However, this
assumption allows for the following heuristic that illustrates the difficulties with constructing
barriers with minimal assumptions on $\tau$ and $h$.
\end{remark}

So in addition to the above conditions, if we can choose
$f_1, f_2,f_3, \Lambda$ and $\beta$ so that
$$
\frac{g^{\tiwedge}}{f_3(u,\Lambda)^{\tivee}} \leqs  \beta <   \frac{(f_1(u,\Lambda)^{\tivee})^{\frac14} (u^{\tivee})^3}{K_1^{\frac14}(u^{\tiwedge})^{\frac{17}{4}}},
$$
we will have a global super solution with freely specifiable $\tau$ and $h$ with positive scalar curvature.  Setting $\beta = \frac{g^{\tiwedge}}{f_3(u,\Lambda)^{\tivee}} $,
we find that we need to choose our functions and parameters to satisfy
\begin{align}\label{eq2:24sep13}
g^{\tiwedge} < \frac{f_3(u,\Lambda)^{\tivee}(f_1(u,\Lambda)^{\tivee})^{\frac14} (u^{\tivee})^3}{K_1^{\frac14}(u^{\tiwedge})^{\frac{17}{4}}}.
\end{align}

Implicitly $\alpha$ and $\gamma$ are functions of $\Lambda$. So the hope is that one can choose $f_1, f_2$ and $f_3$
and utilize the point-wise estimates $\alpha(\Lambda)$ and $\gamma(\Lambda)$ to determine 
if the above expression can be made sufficiently large by varying $\Lambda$ .  In particular, the uneven exponents on $u^{\tivee}$ and
$u^{\tiwedge}$ suggest that if one can choose $f_1, f_2, f_3$ so that $ \alpha(\Lambda) \to 0$, $\gamma(\Lambda) \to 0$ and $\alpha(\Lambda) \sim \gamma(\Lambda)$
as $\Lambda \to 0$, and
$$
\lim_{\Lambda \to 0} \frac{f_3(u,\Lambda)^{\tivee}(f_1(u,\Lambda))^{\tivee}}{\gamma(\Lambda)^{\frac54}} = \infty,
$$
then we can obtain our global super solution.
However, we observe that
$$
f_1(u,\Lambda)^{\tivee} \leqs  a_R^{\tivee}u^{\tivee}, \quad \text{and} \quad f_3(u,\Lambda)^{\tivee} \leqs  c^{\tivee}u^{\tivee},
$$
given that $g_i(u) \geqs 0$ and $h_i(u)\geqs 0$ for $\alpha \leqs  u \leqs  \gamma$.  Therefore,
\begin{align}\label{eq3:23sep13}
\frac{f_3(u,\Lambda)^{\tivee}(f_1(u,\Lambda)^{\tivee})^{\frac14} (u^{\tivee})^3}{K_1^{\frac14}(u^{\tiwedge})^{\frac{17}{4}}} \leqs  \frac{c^{\tivee}(a_R^{\tivee})^{\frac14} (u^{\tivee})^{\frac{17}{4}}}{K_1^{\frac14}(u^{\tiwedge})^{\frac{17}{4}}}.
\end{align}
Clearly $u^{\tivee}/u^{\tiwedge} \leqs  1$, and given that $c- g = \cO(R^{-3})$ it is highly likely that $c^{\tivee} \leqs  g^{\tiwedge}$. 
So without a largeness assumption on $R$ or a smallness assumption on $D\tau$, it will not always be the case that $g$ satisfies
$$
g^{\tiwedge} < \frac{c^{\tivee}(a_R^{\tivee})^{\frac14} (u^{\tivee})^{\frac{17}{4}}}{K_1^{\frac14}(u^{\tiwedge})^{\frac{17}{4}}},
$$
much less~\eqref{eq2:24sep13}.  

The attempted construction above shows that an auxiliary problem of the form~\eqref{eq1:24sep13}, with positive $f_1, f_2$ and $f_3$
satisfying {\bf 1} or {\bf 2}, will
not work in general if one hopes to obtain global super solutions with freely specifiable $\tau$ and $h$ with positive scalar
curvature.  Semilinear problems such a this are a natural place to start when attempting to construct barriers with minimal assumptions
on $\tau$ and $h$ given that {\em a priori} estimates and sub-and super-solutions are readily attained.  The above discussion suggests that one might require a
variational approach such as in Theorem~2.1 in~\cite{HoTs10a}.  However, the drawback of such an approach is that there are no standard techniques
for determining point-wise estimates of the solution $u$, which makes it difficult to verify the inequality \eqref{eq7:24sep13} without
additional assumptions on $\tau$ or $h$.

\section{Proof of the main results}
\label{sec:proof}

We now use the global barriers that we were able to construct above,
together with the results from Section~\ref{sec:momentum} 
and \ref{sec:hamiltonian},
to apply the coupled fixed point Theorem~\ref{T:FIXPT2} to prove 
Theorems~\ref{T:main2} and \ref{T:main1}.
We first prove Theorem~\ref{T:main2}.
The proof of Theorem~\ref{T:main1} involves only minor modifications of the
proof of Theorem~\ref{T:main2}.

\subsection{Proof of Theorem~\ref{T:main2}}
\label{sec:proof1}
Our strategy will be to prove the theorem first for the case $s\leqs2$, and then to bootstrap to include the higher regularity cases.

{\em Step 1: The choice of function spaces.}
We have the (reflexive) Banach spaces
$X=W^{s,p}$
and
$Y=\biW^{e,q}$,
where $p,q \in (3,\frac{\alpha+1}{3})$, $(\alpha > 8)$, $s=s(p) \in (1+\frac{3}{p},2]$,
and $e=e(p,s,q)\in(1,s]\cap(1+\frac3q,s-\frac3p+\frac3q]$.
We have the ordered Banach space
$Z=W^{\tilde{s},p}$
with the compact embedding $X=W^{s,p}\hookrightarrow W^{\tilde{s},p}=Z$,
for $\tilde{s} \in (1+\frac{3}{p}-\frac{4}{\alpha}, 1+\frac3p)$.
The interval $[\phi_{-},\phi_{+}]_{\tilde{s},p}$ is
nonempty (by compatibility of the barriers we will choose below),
and by Lemma~\ref{L:wsp-interval} on page \pageref{L:wsp-interval}
it is also convex with respect to the vector space structure of
$W^{\tilde{s},p}$ and closed with respect to the norm topology
of $W^{\tilde{s},p}$.
We then take $U=[\phi_-,\phi_+]_{\tilde{s},p} \cap \overline{B}_M$ 
for sufficiently
large $M$ (to be determined below), where
$\overline{B}_M$ is the closed ball in $Z=W^{\tilde{s},p}$
of radius $M$ about the origin, ensuring that $U$ is
non-empty, convex, closed, and bounded as a subset of $Z=W^{\tilde{s},p}$.

{\em Step 2: Construction of the mapping $S$.}
We have $\bib_j\in\biW^{e-2,q}$, and $\bib_\tau\in\biL^{z}$ with $z=\frac{3p}{3+(2-s)p}$.  The assumptions on $e$ imply that $\biL^{z}\hookrightarrow\biW^{e-2,q}$.
Similarly, $\gamma_{E}\tau \in W^{1-\frac1z,z}(\Sigma_I) \hookrightarrow W^{e-1-\frac1q,q}(\Sigma_I)$ and $\theta_- \in W^{s-1-\frac1p,p}(\Sigma_I) \hookrightarrow
W^{e-1-\frac1q,q}(\Sigma_I).$  Because $\gamma_I(\phi_+) \in W^{s-\frac1p,p}(\Sigma_I)$ and $${W^{s-\frac1p,p}(\Sigma_I) \otimes W^{e-1-\frac1q,q}(\Sigma_I) \to W^{e-1-\frac1q,q}(\Sigma_I)}$$ is point-wise bounded by Corollary~A.5(a) in~\cite{HoTs10a} or Corollary~3(a) in~\cite{HNT07b}, we have that $\bV^a\nu_a \in W^{e-1-\frac1q,q}(\Sigma_I)$.
Moreover, by Theorems~\ref{thm1:15july13} and \ref{T:MC-E-Reg2} the momentum constraint equation with boundary conditions~\eqref{eq4:8aug13} and \eqref{eq5:8aug13} is 
has a unique solution $\biw \in W^{e,q}$ 
for any ``source'' $\phi\in [\phi_-,\phi_+]_{\tilde{s},p}$.
The ranges for the exponents ensure that Lemma~\ref{T:MC-E-Lip1} holds, so that the momentum
constraint solution map 
$$S : [\phi_{-},\phi_{+}]_{\tilde{s},p} \to\biW^{e,q}=Y,$$
is continuous.

{\em Step 3: Construction of the mapping $T$.}
Define $r=\frac{3p}{3+(2-s)p}$, so that the
continuous embedding $L^{r}\hookrightarrow W^{s-2,p}$ holds.
Since the pointwise multiplication is bounded on $L^{2r}\otimes L^{2r}\to L^{r}$, and $\biw\in\biW^{e,q}\hookrightarrow\biW^{1,2r}$, we have $a_{\biw}\in W^{s-2,p}$ by $\sigma\in L^{2r}$.
The embeddings $W^{1,z}\hookrightarrow W^{e-1,q}\hookrightarrow L^{2r}$ also guarantee that $a_{\tau}=\frac1{12}\tau^2\in W^{s-2,p}$.
We have the scalar curvature $R\in W^{s-2,p}$, and these considerations show that the Hamiltonian constraint equation is well defined
with $[\phi_-,\phi_+]_{s,p}$ as the space of solutions. 
Similarly, $\gamma_I\tau \in W^{1-\frac1z,z}(\Sigma_I) \hookrightarrow W^{s-1-\frac1p,p}(\Sigma_I)$ and the fact that
$$W^{s-\frac1p,p}(\Sigma_I)\otimes W^{s-1-\frac1p,p}(\Sigma_I) \to W^{s-1-\frac1p,p}(\Sigma_I)$$ is a point-wise bounded map
imply that the Robin boundary conditions are well-defined provided $\phi \in [\phi_-,\phi_+]_{s,p}$.

Suppose for the moment that the scalar curvature $R$ of the background metric $h$ is continuous, 
and by using the map $T^s$ introduced in Lemma \ref{l:shift}, define the map $T$ by $T(\phi,\biw)=T^s(\phi,a_{\biw})$,
where $a_{\biw}$ is now considered as an expression depending on $\biw$.
Then Lemma~\ref{l:shift} implies that the map $T:[\phi_{-},\phi_{+}]_{\tilde{s},p}\times\biW^{e,q}\to W^{s,p}$ is continuous for any reasonable shift $a_s$,
which, by Lemma~\ref{l:shift1}, can be chosen so that $T$ is monotone in the first variable.
Combining the monotonicity with Lemma \ref{l:shiftsubsup}, we infer that the interval $[\phi_{-},\phi_{+}]_{\tilde{s},p}$
is invariant under $T(\cdot,a_{\biw})$ if $\biw\in S([\phi_{-},\phi_{+}]_{\tilde{s},p})$.
Since $\biL^z\hookrightarrow\biW^{e-2,q}$, from Theorem \ref{T:MC-E-Reg2} we have 
\begin{align*}
\|\cL\biw\|_{\infty} \leqs & C\left(\|\phi\|^6_{\infty}\|\bb_{\tau}\|_{z} +\|\bb_j\|_{e-2,q}\right.
+\|(2\tau+|\theta_-|/2)\|_{e-1-\frac{1}{q},q;\Sigma_I}\|\gamma_I(\phi_+)\|^{6}_{\infty}
\\
& \left. +\|\sigma(\nu,\nu)\|_{e-1-\frac{1}{q},p;\Sigma_I}+\|\bW\|_{e-1-\frac{1}{q},p;\Sigma_I}\right), \nonumber
\end{align*}
for any $\biw\in S([\phi_-,\phi_+]_{\tilde{s},p})$.
In view of Lemma~\ref{T:HC-ball-gen}, this shows that
there exists a closed ball $\overline{B}_M\subset W^{\tilde{s},p}$ such that
\begin{equation*}
\phi \in [\phi_-,\phi_+]_{\tilde{s},p}\cap \overline{B}_M,
\quad \biw \in S([\phi_-,\phi_+]_{\tilde{s},p}\cap \overline{B}_M)
\quad\Rightarrow\quad
T(\phi,\biw)\in\overline{B}_M.
\end{equation*}
Under the conditions in the above displayed formula, from the invariance of the interval $[\phi_{-},\phi_{+}]_{\tilde{s},p}$, we indeed have $T(\phi,\biw)\in U=[\phi_-,\phi_+]_{\tilde{s},p}\cap \overline{B}_M$.

However, the scalar curvature of $h$ may be not continuous, and in general it is not clear how to introduce a shift so that the resulting operator is monotone.
Nevertheless, we can conformally transform the metric into a metric with continuous, positive scalar curvature and positive boundary mean curvature by Theorem~2.2(c) in~\cite{HoTs10a}.
By using the conformal covariance of the 
Hamiltonian constraint (cf. Lemma~\ref{l:conf-inv}), we will be able to construct an appropriate mapping $T$.
Let $\tilde{h}=\psi^4h$ be a metric with continuous positive scalar curvature $\tilde{R}$ and boundary mean curvature $\tilde{H}$, 
where $\psi\in W^{s,p}$ is the (positive) conformal factor of the scaling satisfying $\Tr_{\tiE}\partial_{\nu}\psi = 0$.
Such a conformal factor exists by adapting the proof of Theorem~2.1 in~\cite{HoTs10a} to allow
for the specified boundary condition $\Tr_{\tiE}\partial_{\nu}\psi = 0$. 
Let $\tilde{T}^s$ be the mapping introduced in Lemma \ref{l:shift}, 
corresponding to the Hamiltonian constraint equation with the background metric $\tilde{h}$, 
coefficients $\tilde{a}_{\tau}=a_{\tau}$, $\tilde{a}_{\rho}=\psi^{-8}a_{\rho}$, and Robin boundary conditions
given by 
\begin{align}
\Tr_{\tiI}\partial_{\nu}\phi+\frac12\tilde{H}\Tr_{\tiI}\phi+\left(\frac12\tau - \frac14\theta_-  \right)\phi^3-\psi^{-6}S(\nu,\nu)(\Tr_{\tiI}\phi)^{-3} &= 0,\\
\Tr_{\tiE}\partial_{\nu}\phi + \psi^{-2}c\Tr_{\tiE}\phi - \psi^{-3}g &= 0.
\end{align}

With $\tilde{a}_{\biw}=\psi^{-12}a_{\biw}$, this {\em scaled} Hamiltonian constraint equation has sub- and super-solutions $\psi^{-1}\phi_{-}$ and $\psi^{-1}\phi_{+}$,
respectively, as long as $\phi_{-}$ and $\phi_{+}$ are sub- and super-solutions respectively of the original Hamiltonian constraint equation
(see~\cite{HoTs10a}).
We choose the shift in $\tilde{T}^s$ so that it is monotone in $[\psi^{-1}\phi_{-},\psi^{-1}\phi_{+}]_{\tilde{s},p}$. 
Then by the monotonicity and the above mentioned sub- and super-solution property under conformal scaling, for $\biw\in S([\phi_{-},\phi_{+}]_{\tilde{s},p})$, $\tilde{T}^s(\cdot,\psi^{-12}a_{\biw})$ is invariant on $[\psi^{-1}\phi_{-},\psi^{-1}\phi_{+}]_{\tilde{s},p}$.
Finally, we define
$$
T(\phi,\biw)=\psi\tilde{T}^s(\psi^{-1}\phi,\psi^{-12}a_{\biw}),
$$
where, as before, $a_{\biw}$ is considered as an expression depending on $\biw$.
From the pointwise multiplication properties of $\psi$ and $\psi^{-1}$,
the map $T:[\phi_{-},\phi_{+}]_{\tilde{s},p}\times\biW^{e,q}\to W^{s,p}$ is continuous,
and from the monotonicity and Lemma \ref{T:HC-ball-gen} , $T(\cdot,\biw)$ is invariant on $U=[\phi_-,\phi_+]_{\tilde{s},p}\cap \overline{B}_M$ for $\biw\in S(U)$,
where $M$ is taken to be sufficiently large.
Moreover, if the fixed point equation
$$
\phi=\psi\tilde{T}^s(\psi^{-1}\phi,\psi^{-12}a_{\biw}),
$$
is satisfied, then $\psi^{-1}\phi$ is a solution to the scaled Hamiltonian constraint equation with $\tilde{a}_{\biw}=\psi^{-12}a_{\biw}$,
and so by conformal covariance,
$\phi$ is a solution to the original Hamiltonian constraint equation 
(see~\cite{HoTs10a}).

{\em Step 4: Barrier choices and application of the fixed point theorem.}
At this point, Theorem \ref{T:FIXPT2} implies the Main Theorem \ref{T:main2},
provided that we have an admissible pair of barriers for the Hamiltonian constraint and
we can choose $B$ so that $B > \|\phi_+\|_{\infty}$ so that the
marginally trapped surface conditions \eqref{eq8:26jun13} are satisfied. 
The ranges for the exponents ensure through
Theorems~\ref{thm1:15july13} and \ref{T:MC-E-Reg2} that we can use the
estimate \eqref{CS-aLw-bound};
see the discussion following the estimate on page \pageref{CS-aLw-bound}.
In this case we use the global super-solution constructed in Lemma~\ref{L:HC-Sp}(a)
and the global sub-solution constructed in Lemma~\ref{L:HC-GSb}(a) or (b) or Lemma~\ref{lem2:17oct13}, depending
on whether $a_{\rho} \ne 0$ or $a^{\tivee}_{\sigma} > k(\phi_+)$.  Remark~\ref{rem1:8oct13}
implies that we can choose $B$ so that $B> \|\phi_+\|_{\infty}$.
This concludes the proof for the case $s\leqs2$.

{\em Step 5: Bootstrap.}
Now suppose that $s>2$.
First of all we need to show that the equations are well defined in the sense that the involved operators are bounded in appropriate spaces.
All other conditions being obviously satisfied, we will show that the Hamiltonian constraint is well-defined by showing
that $a_{\biw}\in W^{s-2,p}$ for any $\biw\in\biW^{e,q}$.
Since $\sigma$ and $\cL\biw$ belong to $W^{e-1,q}$, it suffices to show that the pointwise multiplication is bounded on $W^{e-1,q}\otimes W^{e-1,q}\to W^{s-2,p}$,
and by employing Corollary~A.5(b) in~\cite{HoTs10a}, we are done as long as $s-2\leqs e-1\geqs0$, $s-2-\frac3p<2(e-1-\frac3q)$, and $s-2-\frac3p\leqs e-1-\frac3q$.
After a rearrangement these conditions read as $e\geqs1$, $e\geqs s-1$, $e>\frac3q+\frac{d}2$, and $e\geqs\frac3q+d-1$, with the shorthand $d=s-\frac3p>1$, the latter inequality by the hypothesis of the theorem.
We have $d-1>\frac{d}2$ for $d>2$, and $1\geqs\frac{d}2$ for $d\leqs2$,
meaning that the condition $e>\frac3q+\frac{d}2$ is implied by the hypotheses $e\geqs\frac3q+d-1$ and $e>1+\frac3q$.
Similarly, given that $S(\nu,\nu) = (|\theta_-|/2)\Tr_{\tiI}(\phi_+)^6 \in W^{s-1-\frac1p,p}(\Sigma_I)$ and 
pointwise multiplication is bounded on $W^{s-1-\frac1p,p}(\Sigma_I)\otimes W^{s-1,p}(\Sigma_I) \to W^{s-1-\frac1p,p}(\Sigma_I)$,
the Robin boundary operators are well-defined.
So we conclude that the constraint equations with the specified Robin boundary conditions are well defined.

Next, we will treat the equations as equations defined with $s=e=2$ and with $p$ and $q$ appropriately chosen.
This is possible, since if the quadruple $(p,s,q,e)$ satisfies the hypotheses of the theorem,
then $(\tilde p,\tilde s=2,\tilde q,\tilde e=2)$ satisfies the hypotheses too, provided that 
$2-\frac3{\tilde p}\leqs s-\frac3p$, and 
$1<2-\frac3{\tilde q}\leqs e-\frac3q$.
Since the latter conditions reflect the Sobolev embeddings $W^{s,p}\hookrightarrow W^{2,\tilde p}$ and $W^{e,q}\hookrightarrow W^{2,\tilde q}\hookrightarrow W^{1,\infty}$,
the coefficients of the equations can also be shown to satisfy sufficient conditions for posing the problem for $(\tilde p,2,\tilde q,2)$.
Finally, we have $\tau\in W^{s-1,p}\hookrightarrow W^{1,\tilde{p}}=W^{1,z}$ since $z=\tilde{p}$ by $\tilde{s}=2$ for this new formulation.
Now, by the special case $s\leqs2$ of this theorem that is proven in the above steps, 
under the remaining hypotheses including the conditions on the metric and the near-CMC condition,
we have $\phi\in W^{2,\tilde{p}}$ with $\phi>0$ and $\biw\in\biW^{2,\tilde{q}}$ solution to the coupled system.

To complete the proof we only need to show that these solutions indeed satisfy $\phi\in W^{s,p}$ and $\biw\in\biW^{e,q}$.
Suppose that $\phi\in W^{s_1,p_1}$ and $\biw\in\biW^{e_1,q_1}$, with 
$1<s_1-\frac3{p_1}\leqs s-\frac3p$,
$1<e_1-\frac3{q_1}\leqs e-\frac3q$,
$\max\{2,s-2\}\leqs s_1\leqs s$, and
$\max\{2,e-2\}\leqs e_1\leqs\min\{e,s\}$.
Then we have $\bib_\tau\phi^6+\bib_j\in\biW^{e-2,q}$, 
and so Corollary~B.4 in~\cite{HoTs10a} implies that $\biw\in\biW^{e,q}$.
This implies that $a_{\biw}\in W^{s-2,p}$, and by employing Corollary~B.4 in~\cite{HoTs10a} once again, we get $\phi\in W^{s,p}$.
The proof is completed by induction.
\qed

\subsection{Proof of Theorem~\ref{T:main1}}

The proof is identical to the proof of Theorem~\ref{T:main2},
except for the particular barriers used.
In the proof of Theorem~\ref{T:main2}, the near-CMC condition
is used to construct global barriers satisfying
$$
0 < \phi_{-} \leqs \phi_{+} < \infty,
$$
for all three Yamabe classes, and then the supporting
results for the operators $S$ and $T$ established 
in~\S\ref{sec:momentum} and~\S\ref{sec:hamiltonian}
are used to reduce the proof to invoking the Coupled 
Fixed-Point Theorem~\ref{T:FIXPT2}.
The construction of $\phi_{+}$ is in fact the only place in the
proof of Theorem~\ref{T:main2} that requires the near-CMC condition.

{\em Cases (b) and (c).}
Here, the proof is identical to that of Theorem~\ref{T:main2},
except that the additional conditions 
made on the background metric $h_{ab}$ (that it be in $\cY^{+}(\cM)$),
and on the data 
(the smallness conditions on $|\theta_-|, D\tau, \sigma$, $\rho$, and $j$)
allow us to make use of the alternative construction 
of a global super-solution given 
in Lemma~\ref{L:HC-NCSp}, together with compatible global 
sub-solutions given in Lemma~\ref{L:HC-NCSub}(a) or Lemma~\ref{lem2:17oct13},
depending on whether $\rho \ne 0$.  Therefore we can apply Theorem~\ref{T:FIXPT2}
to solve the coupled conformal equations \eqref{eq4:11july13}-\eqref{eq2:8aug13}
with boundary conditions \eqref{eq5:11july13}-\eqref{eq1:12july13}, where
$S(\nu,\nu) = (2\tau+|\theta_-|)B^6$ and $B \in (W_+^{s,p}\backslash \{0\}) \cap L^{\infty}$ is
freely specified.
Furthermore, Remark~\ref{rem2:8oct13} implies that we may 
choose $B$ to be constant such that $B > \|\phi_+\|_{\infty}$ so that the marginally trapped surface 
conditions \eqref{eq8:26jun13} are satisfied.

{\em Case (a).}
Again, the proof is identical to that of Theorem~\ref{T:main2},
except that the additional conditions 
made on the background metric $h_{ab}$ (that it be in $\cY^{+}(\cM)$),
and on the data 
(the smallness conditions on $|\theta_-|, \delta, \sigma$, $\rho$, and $j$)
allow us to make use of the alternative construction 
of a global super-solution given 
in Lemma~\ref{lem:16oct13}, together with compatible global 
sub-solutions given in Lemma~\ref{lem1:17oct13} or Lemma~\ref{lem2:17oct13},
depending on whether $\rho \ne 0$.
Therefore we can apply Theorem~\ref{T:FIXPT2}
to solve the coupled conformal equations \eqref{eq4:11july13}-\eqref{eq2:8aug13}
with boundary conditions \eqref{eq5:11july13}-\eqref{eq1:12july13}, where
$S(\nu,\nu) = (2\tau+|\theta_-|)B^6$ and $B  = \beta u \in (W_+^{s,p}\backslash \{0\}) \cap L^{\infty}$ 
is obtained by solving \eqref{eq6:21aug13}.
Furthermore, Remark~\ref{rem1:16oct13} implies that this choice of $B$
ensures that the marginally trapped surface conditions
\eqref{eq8:26jun13} are satisfied.
Theorem \ref{T:main1} now follows.
\qed

\section*{Acknowledgments}
   \label{sec:ack}

The authors would like to thank D. Maxwell and J. Dilts for
helpful comments.
MH was supported in part by
NSF Awards~1065972, 1217175, and 1262982.
CM was supported by NSF Award~1065972.
GT was supported by an NSERC Canada Discovery Grant 
and by an FQRNT Quebec Nouveaux Chercheurs Grant.

\appendix
\section{Some key technical tools and some supporting results}
\label{sec:app}

The results in this article leverage and then build on the analysis 
framework and the supporting technical tools developed in our two
previous articles~\cite{HNT07b,HoTs10a}, including the material
contained in the appendices of both works.
We have made an effort to use completely consistent notation with 
these two prior works, and have also endeavored to avoid as much a 
possible any replication of the technical tools.
In particular, we have made use of a number of results 
from~\cite{HNT07b,HoTs10a} on: topological fixed-point theorems,
ordered Banach spaces, monotone increasing maps, Sobolev spaces on 
closed manifolds, elliptic operators and maximum principles,
Yamabe classification of non-smooth metrics,
and conformal covariance of the Hamiltonian constraint.
Although these technical tools represent the bulk of the
results we need in order to extablish the main results of
the paper, we will need the two additional sets of results below.

\bigskip
\noindent{\bf A priori estimates for the auxillary problem.} 
The first result we need are a priori $L^\infty$-estimates 
for solutions to a class of auxilliary problems.

\begin{lemma}\label{l:apriori}
Let the assumptions for Lemma~B.7 in~\cite{HoTs10a} hold, and
and let $f$ and $g$ playing the roles of $\alpha$ and $\beta$,
respectively in Lemma~B.7(a) in~\cite{HoTs10a}.
Then the solution $u$ to the
boundary value problem
\begin{align}\label{eq1:19aug13}
-\Delta u + fu &= \Lambda_1>0, \\
\Tr_{\tiN}\partial_{\nu}u+g\Tr_{\tiN}u&= \Lambda_2>0, \nonumber\\
\Tr_{\tiD}u &= \lambda >0, \nonumber
\end{align}
satisfies the following inequalities,
\begin{align}
&u^{\tiwedge} \le \beta < \infty \quad \text{if $~f^{\tivee} >0~$ and $~g^{\tivee} > 0~$}, \\
&u^{\tivee} \ge \alpha>0 \quad \text{if $~f^{\tiwedge} <\infty~$ and $~g^{\tiwedge} < \infty~$}, \nonumber
\end{align} 
where 
\begin{align}\label{eq2:19aug13}
\beta = \max\left\{ \frac{\Lambda^{\tiwedge}_1}{f^{\tivee}}, \frac{\Lambda^{\tiwedge}_2}{g^{\tivee}} , \lambda^{\tiwedge} \right\} \quad \text{and} \quad \alpha = \min\left\{   \frac{\Lambda_1^{\tivee}}{f^{\tiwedge}}, \frac{\Lambda_2^{\tivee}}{g^{\tiwedge}} , \lambda^{\tivee}  \right\}.
\end{align}
\end{lemma}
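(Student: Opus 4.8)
The plan is to prove the two-sided $L^\infty$ bound by a standard comparison (maximum principle) argument, using the constant functions $\beta$ and $\alpha$ from \eqref{eq2:19aug13} as super- and sub-solutions of the linear mixed boundary value problem \eqref{eq1:19aug13}. The key observation is that \eqref{eq1:19aug13} is a linear problem with a zeroth-order coefficient $f$ (respectively $g$ on the Neumann/Robin part) that is bounded below, so the maximum principle machinery recalled from Lemma~B.7 in~\cite{HoTs10a} applies directly; the only work is to verify that the proposed constants are ordered barriers.

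First I would check that $\beta$ is a super-solution. Plugging the constant $\beta$ into the three lines of \eqref{eq1:19aug13}: since $\beta$ is constant, $-\Delta\beta + f\beta = f\beta \geqs f^{\tivee}\beta \geqs \Lambda_1^{\tiwedge} \geqs \Lambda_1$ by the choice $\beta \geqs \Lambda_1^{\tiwedge}/f^{\tivee}$ and the assumption $f^{\tivee}>0$; similarly $\partial_\nu\beta + g\beta = g\beta \geqs g^{\tivee}\beta \geqs \Lambda_2^{\tiwedge}\geqs\Lambda_2$ on $\Sigma_N$ using $g^{\tivee}>0$; and $\Tr_{\tiD}\beta = \beta \geqs \lambda^{\tiwedge}\geqs\lambda$ on $\Sigma_D$. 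Hence $\beta$ is a super-solution in the sense required by the comparison principle. Symmetrically, $\alpha$ satisfies $f\alpha \leqs f^{\tiwedge}\alpha \leqs \Lambda_1^{\tivee}\leqs\Lambda_1$ (using $f^{\tiwedge}<\infty$), $g\alpha \leqs g^{\tiwedge}\alpha\leqs\Lambda_2^{\tivee}\leqs\Lambda_2$ (using $g^{\tiwedge}<\infty$), and $\Tr_{\tiD}\alpha = \alpha\leqs\lambda^{\tivee}\leqs\lambda$, so $\alpha$ is a sub-solution. Then I would invoke the linear maximum/comparison principle (as in Lemma~B.7 of~\cite{HoTs10a}, applied to the difference $u-\beta$ and to $\alpha-u$, each of which solves a homogeneous-type inequality with the right sign of boundary data) to conclude $\alpha \leqs u \leqs \beta$ pointwise, which gives $u^{\tivee}\geqs\alpha>0$ and $u^{\tiwedge}\leqs\beta<\infty$. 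Positivity of $\alpha$ follows from $\Lambda_i>0$, $\lambda>0$ and the finiteness of $f^{\tiwedge},g^{\tiwedge}$; finiteness of $\beta$ follows from $f^{\tivee}>0$, $g^{\tivee}>0$.

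The main (minor) obstacle is simply to make sure the hypotheses of Lemma~B.7 in~\cite{HoTs10a} are genuinely met with $f$ and $g$ in the roles of $\alpha$ and $\beta$ there — in particular that $f$ lies in the appropriate $W^{s-2,p}$ class so that pointwise multiplication by $u\in W^{s,p}$ is well-defined, and that the Robin coefficient $g$ on $\Sigma_N$ has the requisite sign/regularity for the mixed Dirichlet–Robin problem to be well-posed and to satisfy the maximum principle. Once that bookkeeping is in place, the comparison argument is routine: constant barriers against a linear operator with sign-definite zeroth-order term. No variational or fixed-point input is needed here, only the linear theory already cited.
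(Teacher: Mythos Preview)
Your proposal is correct and is essentially the same argument as the paper's: both identify the constants $\alpha$ and $\beta$ as sub- and super-solutions of the linear mixed problem and then invoke a comparison/maximum principle. The only difference is presentational: the paper writes out the Stampacchia truncation directly (testing the weak formulation against $(u-\beta)^+$ on the set $\{u\geqs\beta\}$ to force $\nabla(u-\beta)^+=0$, then using the Dirichlet data to rule out a nonzero constant), whereas you package this step by citing Lemma~B.7 of~\cite{HoTs10a} applied to $u-\beta$ and $\alpha-u$.
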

\begin{proof}
The fact that $u$ exists and is positive follows from 
Lemmas~B.8 and~B.7 in~\cite{HoTs10a}.
Define 
$$
H^1_{0,D} = \{ w \in W^{1,2}~:~ w = 0 ~~\text{on}~~\Sigma_D\}. 
$$
Then the functions $(u-\beta)^+$ and $(u-\alpha)^-$ are in $H^1_{0,D}$ given the definition of $\alpha$ and $\beta$.  
Define the sets $\cY^+ = \{x\in \overline{\cM}:u \ge \beta\}$, $\cY^- = \{x\in \overline{\cM}:u \le \alpha\}$. Let $dx$ be the
measure induced by the metric and $ds$ the corresponding boundary meaure, we have
\begin{align}\label{eq3:18aug13}
\|\nabla & (u-\beta)^+\|_{2} \\
&= \int_{\cM}\nabla (u-\beta)^+ \nabla (u-\beta)^+~dx \nonumber \\
&= \int_{\cM \cap \cY^+} \nabla u \nabla (u-\beta)^+~dx \nonumber \\
&= \int_{\cM \cap \cY^+}(\Lambda_1- f u  )(u-\beta)~dx + \int_{\Sigma_N\cap \cY^+} (\Lambda_2- g\gamma_N(u) ) \gamma_N(u-\beta)~ds \nonumber\\
&\leqs  \int_{\cM \cap \cY^+}(\Lambda^{\tiwedge}_1- f^{\tivee} u  )(u-\beta)~dx + \int_{\Sigma_N\cap \cY^+} (\Lambda^{\tiwedge}_2- g^{\tivee}\gamma_N(u) ) \gamma_N(u-\beta)~ds \nonumber \\
&\leqs 0, \nonumber
\end{align}
where the above quantity is non-positive by the definition of $\beta$.
Therefore we may conclude that $(u-\beta)^+$ is constant and that
either $u \le \beta~a.e$ or $u$ is a constant larger than $\beta$.  But this
is impossible given that $\gamma_D u = \lambda \le \beta$.  So $u \le \beta~a.e$.
We may use a similar argument involving $(u-\alpha)^-$ and the set $\cY^-$ to conclude that $u \ge \alpha~a.e.$
\end{proof}

\bigskip
\noindent{\bf Conformal invariance of the Hamiltonian constraint.}
The second result we need is a modification of Lemma~4.1 in~\cite{HoTs10a}, 
which concerns conformal invariance of the Hamiltonian constraint equation
on compact manifolds with certain types of boundary conditions.

Let $\cM$ be a smooth, compact, connected $n$-dimensional manifold with boundary $\Sigma = \Sigma_I \cup \Sigma_E$, $\Sigma_I\cap \Sigma_E = \varnothing$, equipped with a Riemannian metric $h_{ab}\in W^{s,p}_{\mathrm{loc}}$, where we assume throughout this section that $p\in(1,\infty)$, $s\in(\frac{n}p,\infty)\cap[1,\infty)$ and that $n\geqs3$.  Let $\Tr_{\tiI}$ and $\Tr_{\tiE}$ be the trace operators on
$\Sigma_I$ and $\Sigma_E$ respectively.
We consider the following model for the Hamiltonian constraint with Robin boundary conditions on $\Sigma_I$ an $\Sigma_E$:
\begin{equation*}\textstyle
F(\phi):=
\left(
\begin{array}{c}
-\Delta\phi+\frac{n-2}{4(n-1)}R\phi+a\phi^{t}\\
\Tr_{\tiI}\partial_{\nu}\phi+\frac{n-2}{2}H\Tr_{\tiI}\phi+b(\Tr_{\tiI}\phi)^e\\
\Tr_{\tiE}\partial_{\nu}\phi-c\Tr_{\tiE}\phi - f
\end{array}
\right)=0,
\end{equation*}
where $t,e\in\R$ are constants, $R\in W^{s-2,p}(\cM)$ and $H\in W^{s-1-\frac1p,p}(\Sigma_I)$ are respectively the scalar and mean curvatures of the metric $g$, and the other coefficients satisfy ${a\in W^{s-2,p}(\cM)}$, ${b\in W^{s-1-\frac1p,p}(\Sigma_I)}$, and ${c,f \in W^{s-\frac1p,p}(\Sigma_E)}$.
Setting $r=\frac4{n-2}$,
we will be interested in the transformation properties of $F$ under the conformal change $\tilde{h}_{ab}=\theta^rh_{ab}$ of the metric with the conformal factor $\theta\in W^{s,p}(\cM)$ satisfying $\theta>0$.
To this end, we consider
\begin{equation*}\textstyle
\tilde{F}(\psi):=
\left(
\begin{array}{c}
-\tilde\Delta\psi+\frac{n-2}{4(n-1)}\tilde{R}\psi+\tilde{a}\psi^{t}\\
\Tr_{\tiI}\partial_{\tilde\nu}\psi+\frac{n-2}{2}\tilde{H}\Tr_{\tiI}\psi+\tilde{b}(\Tr_{\tiI}\psi)^e\\
\Tr_{\tiE}\partial_{\nu}\phi-\tilde{c}\Tr_{\tiE}\phi - \tilde{f}
\end{array}
\right)=0,
\end{equation*}
where $\tilde{\Delta}$ is the Laplace-Beltrami operator associated to the metric $\tilde{g}$,
$\tilde{\nu}$ is the outer normal to $\Sigma$ with respect to $\tilde{h}$,
$\tilde{R}\in W^{s-2,p}(\cM)$ and $\tilde{H}\in W^{s-1-\frac1p,p}(\Sigma)$ are respectively the scalar and mean curvatures of $\tilde{h}$,
and $\tilde{a}\in W^{s-2,p}(\cM)$, $\tilde{b}\in W^{s-1-\frac1p,p}(\Sigma_I)$, and $\tilde{c},\tilde{f} \in W^{s-\frac1p,p}(\Sigma_E)$.
The following is a variation of Lemma~4.1 in~\cite{HoTs10a} which we need to incoporate the exterior boundary condition.

\begin{lemma}\label{l:conf-inv}
Let
$\tilde{a}=\theta^{t-r-1}a$, $\tilde{b}=\theta^{e-\frac{r}2-1}b$, and $\tilde{c}=\theta^{-\frac{r}{2}}c, \tilde{f} = \theta^{-\frac{r}{2}-1} f$.
Then if $\Tr_{\tiE}\partial_{\nu} \theta = 0$, we have
\begin{equation*}
\begin{split}
\tilde{F}(\psi)=0
\quad\Leftrightarrow\quad
F(\theta\psi)=0,\\
\tilde{F}(\psi)\geqs0
\quad\Leftrightarrow\quad
F(\theta\psi)\geqs0,\\
\tilde{F}(\psi)\leqs0
\quad\Leftrightarrow\quad
F(\theta\psi)\leqs0.
\end{split}
\end{equation*}
\end{lemma}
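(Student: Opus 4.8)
\textbf{Proof proposal for Lemma~\ref{l:conf-inv}.}

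The plan is to verify the equivalence componentwise, by recording how each building block transforms under the conformal change $\tilde h_{ab}=\theta^r h_{ab}$ with $r=\frac4{n-2}$, and then checking that the prescribed scalings $\tilde a=\theta^{t-r-1}a$, $\tilde b=\theta^{e-\frac r2-1}b$, $\tilde c=\theta^{-\frac r2}c$, $\tilde f=\theta^{-\frac r2-1}f$ make the composite operator $\tilde F(\psi)$ equal to a positive power of $\theta$ times $F(\theta\psi)$; since $\theta>0$, multiplication by such a factor preserves sign, which will give all three statements at once. The interior scalar and interior boundary rows are exactly what is treated in Lemma~4.1 of~\cite{HoTs10a}, so I would simply invoke that result for those two rows, noting that the dimension $n$ and the exponent conventions here match that lemma. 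Thus the only genuinely new work is the exterior boundary row $\Tr_{\tiE}\partial_\nu\phi-c\,\Tr_{\tiE}\phi-f$, which is why the hypothesis $\Tr_{\tiE}\partial_\nu\theta=0$ is imposed.

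For the exterior row, first I would recall the standard transformation of the unit normal: if $\nu$ is the outward unit normal for $h$ on $\Sigma_E$, then $\tilde\nu=\theta^{-r/2}\nu$ is the outward unit normal for $\tilde h=\theta^r h$ (the metric on $T\Sigma_E$ scales by $\theta^r$ and the normal direction is unchanged, so the normalization picks up $\theta^{-r/2}$). Hence $\partial_{\tilde\nu}=\theta^{-r/2}\partial_\nu$. Applying this to $\psi$ and using the Leibniz rule,
\begin{equation*}
\partial_{\tilde\nu}\psi
=\theta^{-r/2}\partial_\nu\psi
=\theta^{-r/2}\partial_\nu\!\left(\frac{\theta\psi}{\theta}\right)
=\theta^{-r/2}\left(\theta^{-1}\partial_\nu(\theta\psi)-\theta^{-2}\psi\,\partial_\nu\theta\right).
\end{equation*}
On $\Sigma_E$ the hypothesis $\Tr_{\tiE}\partial_\nu\theta=0$ kills the second term, so $\partial_{\tilde\nu}\psi=\theta^{-r/2-1}\partial_\nu(\theta\psi)$ on $\Sigma_E$. (Here is exactly where the assumption on $\theta$ is essential; without it the exterior row would not transform homogeneously, which is why Theorem~2.1 of~\cite{HoTs10a} is adapted in the proof of Theorem~\ref{T:main2} to produce a conformal factor with $\Tr_{\tiE}\partial_\nu\psi=0$.) Now I assemble the third component of $\tilde F(\psi)$:
\begin{equation*}
\Tr_{\tiE}\partial_{\tilde\nu}\psi-\tilde c\,\Tr_{\tiE}\psi-\tilde f
=\theta^{-r/2-1}\partial_\nu(\theta\psi)-\theta^{-r/2}c\cdot\theta^{-1}\Tr_{\tiE}(\theta\psi)-\theta^{-r/2-1}f
=\theta^{-r/2-1}\bigl(\Tr_{\tiE}\partial_\nu(\theta\psi)-c\,\Tr_{\tiE}(\theta\psi)-f\bigr),
\end{equation*}
using $\Tr_{\tiE}\psi=\theta^{-1}\Tr_{\tiE}(\theta\psi)$. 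Thus the exterior row of $\tilde F(\psi)$ equals $\theta^{-r/2-1}$ times the exterior row of $F(\theta\psi)$; since $\theta^{-r/2-1}>0$ on $\Sigma_E$, this row vanishes, is $\geqs0$, or is $\leqs0$ exactly when the corresponding row of $F(\theta\psi)$ does.

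Combining this with the interior rows handled by Lemma~4.1 of~\cite{HoTs10a} — for which the respective positive conformal factors are powers of $\theta$ on $\cM$ and on $\Sigma_I$ — we conclude that $\tilde F(\psi)$ is, componentwise, a strictly positive function times $F(\theta\psi)$, which yields the three claimed equivalences. I expect no serious obstacle here; the only point requiring care is the correct bookkeeping of the normal-derivative transformation and confirming that the boundary condition $\Tr_{\tiE}\partial_\nu\theta=0$ is precisely what is needed to eliminate the inhomogeneous term, together with checking that the scaling exponents stated in the lemma are consistent with the (already known) interior ones from~\cite{HoTs10a}. \qed
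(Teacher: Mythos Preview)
Your argument is correct and follows essentially the same route as the paper: both verify componentwise that each row of $\tilde F(\psi)$ is a strictly positive power of $\theta$ times the corresponding row of $F(\theta\psi)$, using the standard conformal transformation formulas for $\tilde\Delta$, $\tilde R$, $\tilde H$, and $\partial_{\tilde\nu}$; the paper rederives the interior and $\Sigma_I$ relations in place (rather than citing~\cite{HoTs10a}) and is terser about the exterior row, whereas you spell out the $\Sigma_E$ computation in full. One harmless slip: in your Leibniz step the second term should read $-\theta^{-1}\psi\,\partial_\nu\theta$ (equivalently $-\theta^{-2}(\theta\psi)\,\partial_\nu\theta$), not $-\theta^{-2}\psi\,\partial_\nu\theta$; since that term vanishes on $\Sigma_E$ by hypothesis, the conclusion is unaffected.
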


\begin{proof}
One can derive the following relations
\begin{equation*}
\begin{split}
\tilde{R}&\textstyle=\theta^{-r}R-\frac{4(n-1)}{n-2}\theta^{-r-1}\Delta\theta,\\
\tilde{\Delta}\psi&=\theta^{-r}\Delta\psi+2\theta^{-r-1}\nabla^a\theta\nabla_a\psi.\\
\end{split}
\end{equation*}
Combining these relations with
\begin{equation*}
\Delta(\theta\psi)=\theta\Delta\psi+\psi\Delta\theta+2\nabla^a\theta\nabla_a\psi,
\end{equation*}
we obtain
\begin{equation*}\textstyle
-\tilde{\Delta}\psi+\frac{n-2}{4(n-1)}\tilde{R}\psi
=\theta^{-r-1}
\left(-\Delta(\theta\psi)+\frac{n-2}{4(n-1)}R\theta\psi\right).
\end{equation*}
On the other hand, we have
\begin{equation*}
\begin{split}
\tilde{H}&\textstyle=\theta^{-\frac{r}2}H+\frac{2}{n-2}\theta^{-\frac{r}2-1}\partial_{\nu}\theta,\\
\partial_{\tilde\nu}\psi&=\theta^{-\frac{r}2}\partial_{\nu}\psi,
\end{split}
\end{equation*}
where traces are understood in the necessary places.
The above imply that
\begin{equation*}\textstyle
\partial_{\tilde\nu}\psi+\frac{n-2}{2}\tilde{H}\psi
=
\theta^{-\frac{r}2-1}
\left(\partial_{\nu}(\theta\psi)+\frac{n-2}{2}{H}\theta\psi\right),
\end{equation*}
and the proof follows.
\end{proof}

\bibliographystyle{abbrv}
\bibliography{Caleb,Caleb2,Caleb3,Caleb4,Caleb5,mjh,ref-gn,ref-gn2,books,papers}


\end{document}